\documentclass[12pt]{article}
\usepackage{amsthm,amsmath,amsfonts,amssymb,enumitem,aliascnt,MnSymbol}
\usepackage{comment}
\usepackage{graphicx}
\RequirePackage[dvipsnames]{xcolor}
\RequirePackage{natbib}
\RequirePackage[colorlinks,citecolor=blue,urlcolor=blue]{hyperref}
\usepackage{url} 


\usepackage{xr}
\makeatletter

\newcommand*{\addFileDependency}[1]{
\typeout{(#1)}
%
%
\@addtofilelist{#1}
%
\IfFileExists{#1}{}{\typeout{No file #1.}}
}\makeatother

\newcommand*{\myexternaldocument}[1]{%
\externaldocument{#1}%
\addFileDependency{#1.tex}%
\addFileDependency{#1.aux}%
}

\myexternaldocument{supplement_JASA_Oct2_2025}

\newcommand{\blind}{1}

\addtolength{\oddsidemargin}{-.5in}%
\addtolength{\evensidemargin}{-1in}%
\addtolength{\textwidth}{1in}%
\addtolength{\textheight}{1.7in}%
\addtolength{\topmargin}{-1in}%

\allowdisplaybreaks

\theoremstyle{plain}

\newtheorem{theorem}{Theorem}

\newaliascnt{corollary}{theorem}
\newtheorem{corollary}[corollary]{Corollary}
\aliascntresetthe{corollary}

\newaliascnt{lemma}{theorem}
\newtheorem{lemma}[lemma]{Lemma}
\aliascntresetthe{lemma}

\newaliascnt{proposition}{theorem}

\aliascntresetthe{proposition}

\theoremstyle{definition}
\newaliascnt{definition}{theorem}
\newtheorem{definition}[definition]{Definition}
\aliascntresetthe{definition}

\theoremstyle{remark}
\newaliascnt{remark}{theorem}
\newtheorem{remark}[remark]{Remark}
\aliascntresetthe{remark}

\theoremstyle{plain}
\newtheorem{stheorem}{Theorem}

\newaliascnt{slemma}{stheorem}
\newtheorem{slemma}[slemma]{Lemma}
\aliascntresetthe{slemma}

\newaliascnt{scorollary}{stheorem}

\aliascntresetthe{scorollary}

\renewcommand{\P}{\text{P}}
\newcommand{\E}{\text{E}}
\newcommand{\Cov}{\text{Cov}}
\newcommand{\Var}{\text{Var}}
\newcommand{\bSigma}{\boldsymbol{\Sigma}}
\newcommand{\bmu}{\boldsymbol{\mu}}
\newcommand{\bdelta}{\boldsymbol{\delta}}
\newcommand{\bGamma}{\boldsymbol{\Gamma}}
\newcommand{\tr}{\mathrm{tr}}

\newcommand{\bS}{\mathbf{S}}
\newcommand{\bT}{\mathbf{T}}
\newcommand{\bU}{\mathbf{U}}

\newcommand{\bW}{\mathbf{W}}
\newcommand{\bX}{\mathbf{X}}
\newcommand{\bY}{\mathbf{Y}}
\newcommand{\bZ}{\mathbf{Z}}
\newcommand{\be}{\mathbf{e}}
\newcommand{\bbf}{\mathbf{f}} 
\newcommand{\bg}{\mathbf{g}}
\newcommand{\bh}{\mathbf{h}}

\newcommand{\bt}{\mathbf{t}}
\newcommand{\bu}{\mathbf{u}}
\newcommand{\bv}{\mathbf{v}}
\newcommand{\bw}{\mathbf{w}}
\newcommand{\bx}{\mathbf{x}}
\newcommand{\by}{\mathbf{y}}
\newcommand{\bz}{\mathbf{z}}
\newcommand{\bA}{\mathbf{A}}
\newcommand{\bB}{\mathbf{B}}

\newcommand{\ba}{\mathbf{a}}
\newcommand{\bb}{\mathbf{b}}
\newcommand{\bc}{\mathbf{c}}

\usepackage{xcolor}

\newcounter{supplement}  
\makeatletter
\@addtoreset{equation}{supplement}
\@addtoreset{figure}{supplement}
\@addtoreset{table}{supplement}
\@addtoreset{section}{supplement}
\makeatother

\begin{document}

\def\spacingset#1{\renewcommand{\baselinestretch}%
{#1}\small\normalsize} \spacingset{1}


\if1\blind
{
  \title{\bf Uniform-over-dimension location tests for multivariate and high-dimensional data}
  \author{
    Ritabrata Karmakar \\
    Indian Statistical Institute\\
    and \\
    {Joydeep Chowdhury}
    \\
    King Abdullah University of Science and Technology\\
    and \\
    Subhajit Dutta\\
    Indian Statistical Institute and Indian Institute of Technology\\
    and \\
    Marc G. Genton\\
    King Abdullah University of Science and Technology}
  \maketitle
} \fi

\if 0 \blind
{
  \bigskip
  \bigskip
  \bigskip
    \title{\bf Uniform-over-dimension location tests for multivariate and high-dimensional data}
  \author{}
  \maketitle
  \medskip
} \fi

\bigskip
\begin{abstract}
Asymptotic methods for hypothesis testing in high-dimensional data usually require the dimension of the observations to increase to infinity, often with an additional relationship between the dimension (say, $p$) and the sample size (say, $n$). On the other hand, multivariate asymptotic testing methods are valid for fixed dimension only and their implementations typically require the sample size to be large compared to the dimension to yield desirable results. In practical scenarios, it is usually not possible to determine whether the dimension of the data conform to the conditions required for the validity of the high-dimensional asymptotic methods for hypothesis testing, or whether the sample size is large enough compared to the dimension of the data. 
In this work, we first describe the notion of uniform-over-$p$ convergences and subsequently, develop a uniform-over-dimension central limit theorem.
An asymptotic test for the two-sample equality of locations is developed, which now holds uniformly over the dimension of the observations.
Using simulated and real data, it is demonstrated that the proposed test exhibits better performance compared to several popular tests in the literature for high-dimensional data as well as the usual scaled two-sample tests for multivariate data, including the Hotelling's $T^2$ test for multivariate Gaussian data. 
\end{abstract}

\noindent%
{\it Keywords:} convergence in distribution;
high-dimensional data;
multivariate data;
spatial signs;
two-sample test;
uniform central limit~theorem.
\vfill

\newpage
\spacingset{1.9} 
\section{Introduction}
\label{introduction}

In diverse studies and scientific experiments, observations generated are high-dimensional in nature, i.e., the dimension of the observation vectors are higher than the number of observations (see, e.g., \cite{carvalho2008high,wang2008approaches,lange2014next,buhlmann2014high}).
There is an extensive literature on high-dimensional one-sample and two-sample tests of means, or more generally, of the locations/centers of the underlying distributions (see \cite{goeman2006testing,chen2010two,srivastava2013two,biswas2014nonparametric,cai2014two,wang2015high,javanmard2014confidence,chakraborty2017tests} and the recent review paper by \cite{MR4353841}).
In many asymptotic results involving high-dimensional data, authors make the assumption that the dimension $p \to \infty$ along with sample size $n \to \infty$ (see \cite{bai1996effect,chen2010two,zhang2020simple}). Otherwise, there are often conditions on the rate of growth of $p$, e.g., $p = o( n^\alpha )$ for some $\alpha>0$ (see \cite{hu2020pairwise}). However, these conditions on the dimension are not easy to verify, and it is often unclear whether they are satisfied in a particular situation or not. In a practical situation of hypothesis testing, an experimenter has only one sample of observations, which fixes the value of $n$ and~$p$. Based on just this one pair of $n$ and $p$, it is not possible to verify whether $p$ satisfies the particular rate with respect to $n$ required for the validity of those aforementioned results.

Multivariate asymptotic results are well-established in the literature for an extensive collection of statistics. These results are based on the assumption of a fixed value of $p$ and $n$ tends to infinity. However, in practice, except when $p$ is very small compared to $n$, it is sometimes observed that existing asymptotic results provide unsatisfactory approximations to the actual distributions of the test statistics for moderate values of $p$, which are still far lower compared to $n$.

In this work, we propose an asymptotic method of hypothesis testing, which is applicable uniformly over the dimension $p$.	
This eliminates one's concern over the validity of the asymptotic conditions on the dimension, which are generally found in the existing results.
Thus, this testing method can be applied to data with arbitrary dimensions, including usual multivariate as well as high-dimensional data.

In location tests, the statistic is typically univariate (regardless of the data dimension $p$). This motivates us to consider $d$-dimensional random vectors $\bX_{n,p} = f_p(\bZ_{1,p},\ldots,\bZ_{n,p})$, where $f_p$ maps the $p$-dimensional sample $\bZ_{1,p},\ldots,\bZ_{n,p}$ into $\mathbb{R}^d$ with $d$ a fixed positive integer. 
Here, $d = 1$ for the test statistic. 
However, we allow general values of $d$ and develop the notion of uniform-over-$p$ convergence of $\bX_{n,p}$ to some collection $\bX_p$, thereby, extending the classical weak convergence results such as the L\'{e}vy's continuity theorem and the Lindeberg-Feller central limit theorem (see, e.g., \cite{van1996weak,billingsley2013convergence,bogachev2018weak}) 
to this~setting.

The utility of these theoretical results is demonstrated on a two-sample test for equality of locations. Our test statistic is constructed using a kernel, without normalization by the sample covariance matrix (which is quite common for high-dimensional tests). Intuitively, the proposed test statistic stabilizes across different dimensions (once divided by a suitable scaling factor) and leads us to uniform-over-dimension convergence. In fact, this scaling factor does not need to be estimated in our implementations. When the dimension is treated to be fixed, the large sample distribution of this test statistic is a weighted sum of centered $\chi^2_1$ random variables. This further motivates the use of a parametric bootstrap procedure to estimate the cut-off of the unscaled test statistic from this large sample distribution, with natural data driven estimators of the weights. Under a set of regularity conditions, we show that the resulting testing procedure remains valid and consistent, uniformly across all dimensions as the sample size tends to infinity. 

A testing method was proposed in \cite{kim2024dimension}, where the authors used sample splitting and self-normalization to construct a modified version of a one-sample degenerate U-statistic. Its limiting null distribution is univariate Gaussian, regardless of the data dimension. Their dimension-agnostic convergence result is in fact a special case of our uniform-over-dimension convergence. In our framework, the limiting distribution may depend on the data dimension, making it more general. While their technique was used for one-sample mean and covariance testing problems, our theory is applied to construct a kernel-based two-sample location test that avoids data splitting.
Related work in this area includes the work on universal asymptotics for high-dimensional sign tests by \cite{paindaveine2016high}. Under a mild continuity assumption, tests based on center-outward ranks (see, e.g., \cite{MR4255122}) also have the dimension-agnostic property.



A natural question arises about the performance of the proposed testing method for multivariate data, without normalization by the sample covariance.
In our numerical work, we demonstrate that the proposed test outperforms the other tests both for low as well as high-dimensional data.
Our comparative study further highlights how the proposed test serves as a unified framework for two-sample location testing. It maintains appropriate empirical size and exhibits high power irrespective of whether the underlying data are Gaussian or heavy-tailed, or even high-dimensional. 
To summarize, our proposed test emerges as a \textit{dimension-agnostic}, \textit{distributionally robust} and \textit{computationally fast} alternative to existing approaches for testing equality of locations.


The rest of this paper is organized as follows.
In Section \ref{sec:theory}, definitions and theorems related to uniform-over-dimension convergence in distribution for functions of random vectors are presented. In Section \ref{sec:twosample}, the proposed theory is employed to derive the asymptotic null distribution of a suggested test of equality of locations (which is valid uniformly over the dimension of the observations) and a parametric bootstrap implementation is described. Further, the asymptotic consistency of the test is also established (uniformly over the dimension of the observations). Using simulated and real data, it is demonstrated in Section~\ref{sec:data_analysis} that the proposed test equipped with its asymptotic null distribution is uniformly valid over the data dimension and outperforms other tests available for both usual multivariate as well as high-dimensional data. In Section \ref{sec:conclusion}, further work on the proposed theory and potential applications are discussed. Proofs of the mathematical results are divided between an Appendix and a Supplementary.

\section{Uniform-over-dimension theoretical results} \label{sec:theory}

This section introduces uniform-over-dimension convergence in distribution and in~probability, together with associated results like uniform-over-$p$ L\'{e}vy's continuity theorem and central limit theorem (CLT). All proofs of this section are deferred to the~Supplementary.

Fix the indices $n, p \in \mathbb{N}$. Let $\{ \bX_{n, p} \}$ be $d$-dimensional random vectors with probability measures $\{\mu_{n, p}\}$ and associated distribution functions $\{F_{n, p}\}$ on $(\mathbb{R}^d, \mathcal{R}^d)$, where $\mathcal{R}^d$ is the Borel sigma field on $\mathbb{R}^d$. Also, let $\bX_p$ be a $d$-dimensional random vector with probability measure $\mu_p$ and associated distribution function $F_p$ on $(\mathbb{R}^d, \mathcal{R}^d)$ for $p\in\mathbb{N}$.

\begin{definition}\label{definition1}
We say that 
$\bX_{n, p} \Longrightarrow \bX_p$, 
$\mu_{n, p} \Longrightarrow \mu_p$ or $F_{n, p} \Longrightarrow F_p$ uniformly-over-$p$ if for every bounded continuous function $f : \mathbb{R}^d \to \mathbb{R}$, we have
\begin{align*}
\lim\limits_{n \to \infty} \sup_{ p\in \mathbb{N}} \left| \int f \mathrm{d} \mu_{n, p} - \int f \mathrm{d} \mu_p \right| = 0 .
\end{align*}
\end{definition}


\begin{definition}\label{definition2}
We say that $\bX_{n, p} \stackrel{\P}{\longrightarrow} \bX_{p}$ uniformly-over-$p$ if for every $\epsilon > 0$, we have
\begin{align*}
\lim\limits_{n \to \infty} \sup_{p \in \mathbb{N}} \P\left[ \left\| \bX_{n, p} - \bX_{p} \right\| > \epsilon \right] = 0 .
\end{align*}
\end{definition}
Using these notions, we now develop several results that parallel classical theorems for weak convergence of probability measures (see, e.g., \cite{billingsley2013convergence}). The uniform-over-$p$ version of the classical continuous mapping theorem (see \autoref{mappingthm} in the Supplementary) follows immediately from this definition. To establish other results, we impose two assumptions involving tightness and equi-continuity (see, e.g., \cite{billingsley2013convergence}). 

We first recall the definition of equi-continuity. The collection of distribution functions (dfs) $\{F_p\}_{p\in \mathbb{N}}$ is said to be equi-continuous at $\bx \in \mathbb{R}^d$ if for any $\epsilon > 0$ there exists $\delta_\bx > 0$ such that whenever $\|\by - \bx\| < \delta_\bx$, we have $\sup_p|F_p(\by) - F_p(\bx)| < \epsilon$. Now, we state the assumptions.
\begin{enumerate}[label = (A\arabic*), ref = (A\arabic*)]
\item The collection $\{F_p\}_{p \in \mathbb{N}}$ of probability distributions is tight. \label{assumption1}
\item The set of all equi-continuity points of the collection $\{F_p\}_{p \in \mathbb{N}}$ denoted by
$$C = \{\bx\in \mathbb{R}^d : \bx \,\, \text{is a point of equi-continuity for}\,\, \{F_p\}_{p \in \mathbb{N}}\}$$
is co-countable, i.e., is the complement of a countable set. 
\label{assumption2}
\end{enumerate}
Assumption \ref{assumption1} is standard in the weak convergence literature, while assumption \ref{assumption2} is a mild regularity condition. Under these assumptions, we extend both the Portmanteau theorem (see \autoref{portmanteau}) and Slutsky’s theorem (see \autoref{lemmaSlutsky} and \autoref{Slutsky}) to the uniform-over-$p$ setting. Complete statements and proofs are provided in the Supplementary.

We now state the uniform-over-$p$ L\'{e}vy's continuity theorem, which plays a central role in establishing the asymptotic results of the next section.

\begin{theorem}[Uniform-over-$p$ L\'{e}vy's continuity theorem]\label{levy}
Let $\{\mu_{n, p}\}$ and $\{ \mu_p \}$ be probability measures on $\mathbb{R}^d$ with characteristic functions $\{ \varphi_{n, p} \}$ and $\{ \varphi_p \}$ for $n\in\mathbb{N}$ and $p\in\mathbb{N}$. Under assumptions \ref{assumption1} and \ref{assumption2}, $ \mu_{n, p} \Longrightarrow \mu_p $ uniformly-over-$p$ if and only if
\begin{align*}
\lim\limits_{n \to \infty} \sup_p | \varphi_{n, p}( \bt ) - \varphi_p( \bt ) | = 0 \text{ for every } \bt \in \mathbb{R}^d.
\end{align*}
\end{theorem}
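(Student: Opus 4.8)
The ``only if'' implication is immediate and needs neither assumption: fix $\bt\in\mathbb{R}^d$; the maps $\bx\mapsto\cos(\bt^\top\bx)$ and $\bx\mapsto\sin(\bt^\top\bx)$ are bounded and continuous, so applying Definition~\ref{definition1} to each and adding the two bounds gives $\sup_p|\varphi_{n,p}(\bt)-\varphi_p(\bt)|\to 0$.

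For the ``if'' implication I would proceed in two stages. \emph{Stage 1: equitightness of the array.} I would first show that $\{\mu_{n,p}:n\ge N,\ p\in\mathbb{N}\}$ is tight for some $N$. The tool is the classical estimate bounding $\mu(\{\bx:|x_j|>2/u\})$ by a constant multiple of $u^{-1}\int_{-u}^{u}\bigl(1-\operatorname{Re}\varphi(u'\be_j)\bigr)\,\mathrm{d}u'$, applied coordinatewise to each $\mu_{n,p}$. Assumption \ref{assumption1} makes $\{\mu_p\}$ tight, which forces a uniform-in-$p$ bound $1-\operatorname{Re}\varphi_p(\bt)\le \tfrac12|\bt|^2K_\epsilon^2+2\epsilon$ and hence $\sup_p u^{-1}\int_{-u}^u(1-\operatorname{Re}\varphi_p(u'\be_j))\,\mathrm{d}u'\to0$ as $u\downarrow0$; combining this with the hypothesis — after integrating in $u'$ and using bounded convergence together with the crude bound $|\varphi_{n,p}-\varphi_p|\le2$ — controls $\sup_p u^{-1}\int_{-u}^u(1-\operatorname{Re}\varphi_{n,p}(u'\be_j))\,\mathrm{d}u'$ for all large $n$, which is exactly equitightness.

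\emph{Stage 2: contradiction via subsequences.} Suppose uniform-over-$p$ convergence fails. Then there are a bounded continuous $f$, an $\epsilon>0$, and indices $n_k\uparrow\infty$, $p_k\in\mathbb{N}$ with $\bigl|\int f\,\mathrm{d}\mu_{n_k,p_k}-\int f\,\mathrm{d}\mu_{p_k}\bigr|>2\epsilon$ for all $k$. By Stage 1 and Prokhorov's theorem, along a subsequence $\mu_{n_{k_j},p_{k_j}}\Longrightarrow\nu$; by \ref{assumption1} and Prokhorov, along a further subsequence $\mu_{p_{k_j}}\Longrightarrow\rho$. Passing to the limit in the displayed inequality gives $\bigl|\int f\,\mathrm{d}\nu-\int f\,\mathrm{d}\rho\bigr|\ge2\epsilon$, so $\nu\ne\rho$. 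On the other hand, the classical L\'evy theorem gives pointwise convergence of the characteristic functions of $\mu_{n_{k_j},p_{k_j}}$ and $\mu_{p_{k_j}}$ to those of $\nu$ and $\rho$, while $|\varphi_{n_{k_j},p_{k_j}}(\bt)-\varphi_{p_{k_j}}(\bt)|\le\sup_p|\varphi_{n_{k_j},p}(\bt)-\varphi_p(\bt)|\to0$; hence the characteristic functions of $\nu$ and $\rho$ coincide, and uniqueness forces $\nu=\rho$ — a contradiction. (Alternatively, once Stage 1 is in hand, one may instead finish through the uniform-over-$p$ Portmanteau theorem, which is where assumption \ref{assumption2} would enter.)

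The main obstacle is Stage 1: the hypothesis supplies only pointwise-in-$\bt$ (but uniform-in-$p$) convergence of characteristic functions, whereas the tail estimate requires control on a whole neighbourhood of the origin. The resolution is to integrate in $\bt$ and lean on \ref{assumption1} to make the ``limiting'' integral $u^{-1}\int_{-u}^u(1-\operatorname{Re}\varphi_p)$ uniformly small, so that the pointwise convergence can be promoted through bounded convergence. Everything downstream — Prokhorov, the classical L\'evy theorem, and uniqueness of characteristic functions — is standard.
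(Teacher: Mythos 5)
Your ``only if'' direction matches the paper. For the ``if'' direction you take a genuinely different and, as far as I can see, correct route. The paper proceeds by Gaussian smoothing: it introduces an independent $N_d(\mathbf{0},\sigma^2\mathbf{I})$ perturbation, uses a Fourier-inversion identity to express $\E[g(\bX_{n,p}+\bZ)]$ as an integral against $\varphi_{n,p}$, applies bounded convergence to obtain $\sup_p|\E[g(\bX_{n,p})]-\E[g(\bX_p)]|\to 0$ for all compactly supported continuous $g$, and then invokes the uniform-over-$p$ Portmanteau theorem (its implication from condition (iii) to condition (i)) — a step that uses both \ref{assumption1} and \ref{assumption2}. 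You instead establish asymptotic equitightness of the array $\{\mu_{n,p}\}$ directly from the characteristic-function hypothesis (via the standard tail estimate $\mu(\{|x_j|>2/u\})\lesssim u^{-1}\int_{-u}^{u}(1-\operatorname{Re}\varphi)$, using \ref{assumption1} to make the limiting integral uniformly small and bounded convergence to transfer this to $\varphi_{n,p}$), and then finish by a Prokhorov subsequence/contradiction argument together with the classical L\'evy theorem and uniqueness of characteristic functions. Notably, your primary route never invokes \ref{assumption2}, so in effect you prove the sufficiency direction under \ref{assumption1} alone; this is a mild strengthening of what the paper establishes. The trade-off is that the paper's route is quantitative and reuses Portmanteau machinery it needs anyway, whereas yours is a non-constructive compactness argument. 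One small imprecision: in Stage~1 you say ``$\{\mu_{n,p}:n\ge N,\,p\in\mathbb{N}\}$ is tight for some $N$'', but the tail bound you derive gives, for each $\epsilon'>0$, an $N(\epsilon')$ and a compact $K(\epsilon')$ with $\sup_{n\ge N(\epsilon')}\sup_p\mu_{n,p}(K(\epsilon')^c)<\epsilon'$; this is asymptotic tightness rather than tightness of a fixed tail family. That is exactly what the Prokhorov step in Stage~2 needs (any sequence $\mu_{n_k,p_k}$ with $n_k\to\infty$ is tight after adjusting for finitely many indices), so the argument is unaffected, but the phrasing should be corrected.
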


In our theoretical results of Section \ref{sec:twosample}, we will only consider a univariate test statistic (i.e., $d = 1$). 
Recall that the quantile function $Q: (0,1) \to \mathbb{R}$ for a df $F$ on $\mathbb{R}$ is defined~as $Q(\alpha) = \inf \{x \in \mathbb{R}: F(x) \geq \alpha\} \text{ for } \alpha \in (0,1)$. 
The following theorem establishes the uniform-over-$p$ convergence of the quantiles of the test statistic. Further, it also provides a sufficient condition under which assumptions \ref{assumption1} and \ref{assumption2} hold in this setting. 

\begin{theorem}\label{lemma_quantile}
Let $\{F_{n,p}\}_{n, p\in\mathbb{N}}$ be a family of dfs and $\{F_p\}_{p \in \mathbb{N}}$ and $F$ be continuous dfs on $\mathbb{R}$. Suppose that $F_p \stackrel{D}{\longrightarrow} F$ as $p \to \infty$. Then, $\{F_p\}_{p\in\mathbb{N}}$ satisfies assumptions \ref{assumption1} and \ref{assumption2}. Moreover, if $F_{n,p} \Longrightarrow F_p$ uniformly-over-$p$, then
$$\lim_{n \to \infty} \sup_{p \in \mathbb{N}} \sup_{x \in \mathbb{R}}|F_{n,p}(x) - F_p(x)| = 0.$$
Furthermore, let $Q_{n,p}, Q_p$ and $Q$ denote the quantile functions of $F_{n,p}, F_p$ and $F$, respectively. For a given $\alpha \in (0, 1)$, assume that  $F_p(Q_p(\alpha)) < F_p(Q_p(\alpha) + \epsilon)$ and $F(Q(\alpha)) < F(Q(\alpha) + \epsilon)$ for any $\epsilon > 0$ and $p \in \mathbb{N}$. Then, we have
$$\lim_{n \to \infty} \sup_{p \in \mathbb{N}} |Q_{n,p}(\alpha) - Q_p(\alpha)| = 0.$$
\end{theorem}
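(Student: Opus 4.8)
The plan is to prove the three assertions in sequence, extracting from the single hypothesis $F_p \stackrel{D}{\longrightarrow} F$ (with $F$ continuous) progressively stronger forms of uniformity in $p$. For the first assertion: tightness of $\{F_p\}_{p\in\mathbb N}$ follows from weak convergence to a proper continuous limit — given $\epsilon>0$, pick $M$ with $F(-M)<\epsilon/3$ and $1-F(M)<\epsilon/3$; since $F_p(\pm M)\to F(\pm M)$, for all large $p$ the $\mu_p$-measure of $[-M,M]$ exceeds $1-\epsilon$, and the finitely many remaining $F_p$ are individually tight, so enlarging $M$ absorbs them, giving \ref{assumption1}. For the equi-continuity points I would first invoke P\'olya's theorem: $F_p\to F$ pointwise with $F$ continuous implies $\sup_x|F_p(x)-F(x)|\to 0$ as $p\to\infty$. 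Since $F$ is uniformly continuous on $\mathbb R$ (continuous with limits $0$ and $1$ at $\mp\infty$) and each $F_p$ in the finite exceptional range is likewise uniformly continuous, a three-$\epsilon$ argument shows $\{F_p\}_{p\in\mathbb N}$ is in fact equi-uniformly-continuous; hence $C=\mathbb R$, which is trivially co-countable, so \ref{assumption2} holds.

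For the second assertion (uniformity in both $p$ and $x$ of $F_{n,p}\to F_p$), the idea is a P\'olya-type strengthening of the uniform-over-$p$ Portmanteau theorem (\autoref{portmanteau}). Fix $\epsilon>0$. Using tightness, choose $M$ with $\sup_p F_p(-M)<\epsilon$ and $\sup_p(1-F_p(M))<\epsilon$; using equi-uniform-continuity from the previous step, choose a finite grid $-M=x_0<\dots<x_K=M$ with $\sup_p|F_p(x_{j+1})-F_p(x_j)|<\epsilon$. Because $C=\mathbb R$, \autoref{portmanteau} applies and gives $\lim_n\sup_p|F_{n,p}(x_j)-F_p(x_j)|=0$ for each of the finitely many $j$. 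A monotonicity sandwich between consecutive grid points, together with the tail estimates for $x\notin[-M,M]$, then yields $\limsup_n\sup_p\sup_x|F_{n,p}(x)-F_p(x)|\le c\,\epsilon$ for an absolute constant $c$; letting $\epsilon\downarrow 0$ gives the claim.

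For the third assertion, write $q_p=Q_p(\alpha)$ and $q=Q(\alpha)$. Continuity of $F_p$ gives $F_p(q_p)=\alpha$, so the right-side hypothesis reads $\alpha<F_p(q_p+\epsilon)$ for all $\epsilon>0$, while $F_p(q_p-\epsilon)<\alpha$ holds automatically from the definition of $q_p$; the analogous statements hold for $F$, and continuity of $F$ together with $F(q)<F(q+\epsilon)$ for all $\epsilon>0$ forces $Q=F^{-1}$ to be continuous at $\alpha$. Hence, by the standard quantile-convergence lemma (weak convergence of $F_p$ to $F$ implies convergence of the associated quantile functions at every continuity point of $Q$), $q_p\to q$. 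The key step is the uniform-in-$p$ separation: for each $\eta>0$,
\[
c_\eta:=\inf_{p}\big(F_p(q_p+\eta)-\alpha\big)>0 \quad\text{and}\quad c_\eta':=\inf_p\big(\alpha-F_p(q_p-\eta)\big)>0.
\]
Each quantity is strictly positive for every fixed $p$; if an infimum were $0$, a subsequence $p_k$ would give, say, $F_{p_k}(q_{p_k}+\eta)\to\alpha$, but $q_{p_k}+\eta\to q+\eta$ together with $\sup_x|F_{p_k}-F|\to 0$ and continuity of $F$ forces $F_{p_k}(q_{p_k}+\eta)\to F(q+\eta)>\alpha$, a contradiction (and symmetrically for $c_\eta'$). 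Finally, for $n$ large enough that $\delta_n:=\sup_{p,x}|F_{n,p}(x)-F_p(x)|<\min(c_\eta,c_\eta')$ (possible by the second assertion), one has, for every $p$, $F_{n,p}(q_p+\eta)\ge\alpha$ and $F_{n,p}(q_p-\eta)<\alpha$, hence $q_p-\eta\le Q_{n,p}(\alpha)\le q_p+\eta$; thus $\sup_p|Q_{n,p}(\alpha)-Q_p(\alpha)|\le\eta$ eventually, and $\eta$ was arbitrary.

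I expect the main obstacle to be the uniform separation bound $c_\eta>0$ in the third step: pointwise in $p$ it is immediate, but upgrading it to a gap that does not degenerate as $p\to\infty$ is precisely where the convergence $F_p\to F$ and the ``no flat part'' condition on the limit $F$ are genuinely needed — indeed this is the only place the hypothesis on $F$ (as opposed to the $F_p$) enters. By comparison, the passage in the second step from convergence of $F_{n,p}$ at finitely many points to uniformity over all $x$ is routine once the grid is fixed via the first step, and the first step itself is a standard consequence of P\'olya's theorem.
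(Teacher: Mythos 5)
Your proof is correct, and all three assertions are established soundly; but in the second and third parts you take routes that differ from the paper's in interesting ways. The first part coincides: both you and the paper derive tightness from weak convergence, apply P\'olya's theorem, and run a three-$\epsilon$ argument through the uniform continuity of $F$ and the finitely many ``small-$p$'' dfs to conclude equi-uniform-continuity, so $C=\mathbb{R}$.

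For the second assertion, you fix a tail threshold $M$ and a finite grid on $[-M,M]$ with small equi-oscillation, apply the uniform-over-$p$ Portmanteau theorem only at the finitely many grid points, and finish with a monotonicity sandwich. The paper instead combines $\sup_p|F_{n,p}(x)-F_p(x)|\to 0$ (pointwise in $x$) with $F_p(x)\to F(x)$ to conclude $F_{n,p}\to F$ as $n,p\to\infty$ jointly, applies P\'olya's theorem to that double-indexed convergence to get $\sup_x|F_{n,p}(x)-F(x)|\to 0$, and then splits $p\ge P_0$ (handled via a triangle inequality through $F$) from $p<P_0$ (classical P\'olya for each of the finitely many $F_p$). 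Your grid construction is the standard single-sequence P\'olya argument upgraded directly to the uniform-in-$p$ setting; the paper's route passes through the limit $F$ instead. Both are correct and of comparable length; yours avoids the slightly delicate double-limit reasoning, while the paper's reuses the joint limit $F$ which it already needs elsewhere.

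For the third assertion, your argument is essentially dual to the paper's. You perturb the location: fix $\eta>0$, show by a compactness/contradiction argument (using $q_p\to q$, P\'olya, and the no-flat-part condition at $q$) that $c_\eta=\inf_p\bigl(F_p(q_p+\eta)-\alpha\bigr)>0$ and $c_\eta'=\inf_p\bigl(\alpha-F_p(q_p-\eta)\bigr)>0$, and then for $n$ large enough that $\sup_{p,x}|F_{n,p}(x)-F_p(x)|<\min(c_\eta,c_\eta')$ trap $Q_{n,p}(\alpha)$ in $[q_p-\eta,\,q_p+\eta]$. The paper instead perturbs the level: it picks $\delta_0>0$ small enough that $\sup_p\bigl(Q_p(\alpha+\delta_0)-Q_p(\alpha-\delta_0)\bigr)<3\epsilon$ --- combining van der Vaart's quantile-convergence lemma for large $p$ with continuity of $Q_p$ at $\alpha$ for the finitely many small $p$ --- and then chooses $n$ large enough that $\sup_{p,x}|F_{n,p}(x)-F_p(x)|<\delta_0$, which yields $Q_p(\alpha-\delta_0)\le Q_{n,p}(\alpha)\le Q_p(\alpha+\delta_0)$. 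Your uniform separation constant $c_\eta$ plays the same structural role as the paper's $\delta_0$; you also correctly identify this as the step where the hypothesis on $F$ (rather than just on the individual $F_p$) is essential. The one piece of your write-up that should be made explicit is the observation that a minimizing sequence $p_k$ must satisfy $p_k\to\infty$ (since each individual term is strictly positive, so no bounded subsequence can realize the infimum $0$); the rest is clean.
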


\begin{remark}
The conditions ensure that $Q_p$ for $p \in \mathbb{N}$ and $Q$ are continuous at the point~$\alpha$.
\end{remark}

\begin{remark}
This theorem provides a set of sufficient conditions for assumptions \ref{assumption1} and~\ref{assumption2}.
\end{remark}

Finally, we state a uniform-over-$p$ version of the classical Lindeberg-Feller central limit theorem (CLT).
To establish this result, the uniform-over-$p$ L\'{e}vy's continuity theorem is a useful tool to establish the uniform-over-$p$ convergence in distribution for a sequence of random variables which are functions of some $p$-variate random vectors.

\begin{theorem}[Uniform-over-$p$ CLT]\label{lindeberg}
Let $\{k_n\}_{n \in \mathbb{N}}$ be a sequence of natural numbers such that $k_n \to \infty$ as $n\to \infty$. Let $\bX_{n,p,1},\ldots,\bX_{n,p, k_n}$ be independent $d$-dimensional random vectors with mean zero and $\E[\|\bX_{n,p,r}\|^2] < \infty$ for $1 \leq r \leq k_n$ and $n, p\in \mathbb{N}$. Define $\bS_{n,p} = \bX_{n,p,1}+\cdots+\bX_{n,p,k_n}$. Let $\{\be_{i}: i = 1,\ldots, d\}$ denote the canonical basis of $\mathbb{R}^d$. Assume that there exists a matrix $\bSigma_p = (\sigma_{p,ij})_{d \times d}$ for each $p \in \mathbb{N}$ such that the following~holds:
\begin{enumerate}[label = (\alph*), ref = (\alph*)]
    \item $\lim_{n\to\infty}\sup_{p \in \mathbb{N}}\left|\be_i^\top\Cov(\bS_{n,p})\be_j - \sigma_{p,ij}\right| = 0$ for all $1\leq i,j\leq d$. 
    \item $\sup_{p \in \mathbb{N}} \sigma_{p,ii} < \infty$ for $1\leq i\leq d$.
    \item For any $\epsilon > 0$ and $1 \leq i \leq d$,
    $$\lim_{n\to\infty} \sup_{p \in \mathbb{N}} \sum_{r=1}^{k_n}\E\left[(\be_i^\top\bX_{n,p,r})^2 \mathbb{I}{\{|\be_i^\top\bX_{n,p,r}| > \epsilon\}}\right] = 0.$$
\end{enumerate}
Then, $\bS_{n,p} \Longrightarrow N_d(\mathbf{0}_d, \bSigma_p)$ uniformly-over-$p$ as $n \to \infty$.
\end{theorem}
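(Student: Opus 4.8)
The plan is to reduce the statement to a convergence of characteristic functions and then feed it into the uniform-over-$p$ L\'evy continuity theorem. Write $\bZ_p\sim N_d(\mathbf{0}_d,\bSigma_p)$. Two preliminary observations: $\bSigma_p$ is positive semidefinite, being an entrywise limit (condition (a)) of the covariance matrices $\Cov(\bS_{n,p})$; and conditions (a)--(b) force $\sup_p\tr\bSigma_p<\infty$ and $\sup_p\|\bSigma_p\|<\infty$ (via $|\sigma_{p,ij}|\le(\sigma_{p,ii}\sigma_{p,jj})^{1/2}$). Thus $\varphi_p(\bt)=\exp(-\tfrac12\bt^\top\bSigma_p\bt)$, and what I must produce is $\sup_p|\varphi_{n,p}(\bt)-\varphi_p(\bt)|\to0$ for each fixed $\bt$, together with assumptions \ref{assumption1}--\ref{assumption2} for the limit family $\{N_d(\mathbf{0}_d,\bSigma_p)\}_p$.

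\emph{Step 1 (characteristic functions).} Fix $\bt\in\mathbb{R}^d$ and use the Cram\'er--Wold device: $\bt^\top\bS_{n,p}=\sum_{r=1}^{k_n}\bt^\top\bX_{n,p,r}$ is a row-independent array of mean-zero scalars whose variance $\bt^\top\Cov(\bS_{n,p})\bt$ converges to $\bt^\top\bSigma_p\bt$ uniformly in $p$ (a finite combination of the uniformly-convergent entries in (a)) and is uniformly bounded (by (b)). The projected Lindeberg sum $\sum_r\E[(\bt^\top\bX_{n,p,r})^2\mathbb{I}\{|\bt^\top\bX_{n,p,r}|>\epsilon\}]$ is dominated, uniformly in $p$, by a constant depending only on $d,\bt,\epsilon$ times $\sum_{i=1}^d\sum_r\E[(\be_i^\top\bX_{n,p,r})^2\mathbb{I}\{|\be_i^\top\bX_{n,p,r}|>c\}]$, using Cauchy--Schwarz $(\bt^\top x)^2\le\|\bt\|^2\sum_i(\be_i^\top x)^2$, the union bound $\mathbb{I}\{|\bt^\top x|>\epsilon\}\le\sum_j\mathbb{I}\{|\be_j^\top x|>\epsilon/(\sqrt d\,\|\bt\|)\}$, and the pointwise inequality $(\be_i^\top x)^2\mathbb{I}\{|\be_j^\top x|>c\}\le(\be_i^\top x)^2\mathbb{I}\{|\be_i^\top x|>c\}+(\be_j^\top x)^2\mathbb{I}\{|\be_j^\top x|>c\}$; its $\sup_p$ then tends to $0$ by (c). Feeding these uniform inputs into the classical Lindeberg characteristic-function estimate (second-order Taylor expansion of each factor $\E[e^{i\bt^\top\bX_{n,p,r}}]$, then comparison of $\prod_r(1-\tfrac12\E[(\bt^\top\bX_{n,p,r})^2])$ with its exponential), whose every remainder term is a function of only the truncation level $\epsilon$, the variance of $\bt^\top\bS_{n,p}$, and the Lindeberg sum above, gives $\sup_p|\varphi_{n,p}(\bt)-\exp(-\tfrac12\bt^\top\Cov(\bS_{n,p})\bt)|\to0$; combined with (a) and the uniform continuity of $x\mapsto e^{-x/2}$ on $[0,\infty)$, this yields $\sup_p|\varphi_{n,p}(\bt)-\varphi_p(\bt)|\to0$. (Equivalently, this says $\bt^\top\bS_{n,p}\Longrightarrow N(0,\bt^\top\bSigma_p\bt)$ uniformly-over-$p$, which one could alternatively obtain from the one-dimensional case of \autoref{levy}, since a univariate Gaussian family with uniformly bounded variances automatically satisfies \ref{assumption1} and \ref{assumption2}.)

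\emph{Step 2 (passing to the conclusion).} Apply \autoref{levy} with $\mu_p=N_d(\mathbf{0}_d,\bSigma_p)$. Assumption \ref{assumption1} is immediate from $\sup_p\tr\bSigma_p<\infty$ (uniform Chebyshev bound). For \ref{assumption2}: when $\inf_p\lambda_{\min}(\bSigma_p)>0$ the densities are uniformly bounded, so $\{F_p\}$ is equi-Lipschitz and the equi-continuity set is all of $\mathbb{R}^d$; the general case is the point that needs care. Granting \ref{assumption1}--\ref{assumption2}, Step 1 and the ``if'' direction of \autoref{levy} give $\bS_{n,p}\Longrightarrow N_d(\mathbf{0}_d,\bSigma_p)$ uniformly-over-$p$.

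\emph{Main obstacle.} The real difficulty is verifying \ref{assumption2} for $\{N_d(\mathbf{0}_d,\bSigma_p)\}_p$ when $\lambda_{\min}(\bSigma_p)\to0$ along a subsequence: the limiting distribution functions can then jump along a hyperplane, the equi-continuity set need not be co-countable, and \autoref{levy} is not literally applicable. I would avoid strengthening the hypotheses and argue directly instead: conditions (a)--(b) already give $\sup_{n\ge n_0,\,p}\E\|\bS_{n,p}\|^2=\sup\sum_i\Var(\be_i^\top\bS_{n,p})<\infty$, so $\{\bS_{n,p}\}$ is uniformly tight (and $\{\bZ_p\}$ is tight as well). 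If the claimed convergence failed, one would obtain a bounded continuous $f$, an $\varepsilon>0$ and indices $n_k\to\infty$, $p_k$ with $|\E f(\bS_{n_k,p_k})-\E f(\bZ_{p_k})|\ge\varepsilon$; extracting (via Prokhorov) a common subsequence along which $\bS_{n_k,p_k}$ and $\bZ_{p_k}$ converge weakly to $\nu$ and $\nu'$, the classical L\'evy theorem together with Step 1 (which forces $|\varphi_{n_k,p_k}(\bt)-\varphi_{p_k}(\bt)|\to0$) identifies $\widehat\nu=\widehat{\nu'}$, hence $\nu=\nu'$, contradicting the gap. Alternatively, one may convolve both $\bS_{n,p}$ and $\bZ_p$ with $N_d(\mathbf{0}_d,\delta^2 I_d)$ (making the limit family uniformly nondegenerate so that \autoref{levy} applies) and then let $\delta\downarrow0$ using a Lipschitz-test-function estimate. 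The residual work is the bookkeeping in Step 1, checking that no constant in the Lindeberg estimate depends on $p$; this is routine but must be carried out with some care.
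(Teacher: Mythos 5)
Your Step 1 (tracking all constants in the classical Lindeberg characteristic-function estimate so that they are free of $p$, then uniform continuity of $x\mapsto e^{-x/2}$) is essentially the same as the paper's, and the reduction of the projected Lindeberg condition for $\bt^\top\bX_{n,p,r}$ to the coordinatewise condition (c) via the three elementary inequalities is carried out correctly.

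Step 2 is where you and the paper part ways, and your instinct to flag \ref{assumption2} as the crux is well founded. The paper argues that every $\bx\ne\mathbf{0}_d$ is an equi-continuity point of $\{N_d(\mathbf{0}_d,\bSigma_p)\}_p$, resting on the inequality $|F_{\bZ_p}(\bx)-F_{\bZ_p}(\by)|\le\P(\min\{x_1,y_1\}\le Z_{p,1}\le\max\{x_1,y_1\})$. That inequality is false: a multivariate df increment is not controlled by a single coordinate's marginal. With $\bSigma_p=\mathrm{diag}(1,1/p)$, $\bx=(1,0)^\top$, $\by=(1,-\epsilon)^\top$ the left-hand side equals $\Phi(1)\bigl(\Phi(0)-\Phi(-\epsilon\sqrt{p})\bigr)\to\Phi(1)/2$ while the right-hand side is $0$, so the whole line $\{x_2=0\}$ consists of non-equi-continuity points and \ref{assumption2} can genuinely fail when $\lambda_{\min}(\bSigma_p)\to 0$. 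Thus \autoref{levy} cannot be invoked for $\{\bZ_p\}$ directly in the degenerate case, and your subsequence argument is what actually completes the proof: tightness of $\{\bS_{n,p}\}_{n\ge n_0,p}$ and $\{\bZ_p\}$ follows from (a)--(b), and if uniform-over-$p$ convergence failed, a Prokhorov subsequence together with the pointwise CF convergence of Step 1 and the classical L\'evy theorem would identify the two weak limits and yield a contradiction. The Gaussian-mollification alternative is equally sound. Either route sidesteps \ref{assumption2} entirely and is strictly more robust than the paper's own verification; the only remaining work is the routine bookkeeping in Step 1, which you correctly identify.
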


\begin{corollary}[Uniform-over-$p$ Lyapunov's condition]\label{lyapunov}
The assertion in \autoref{lindeberg} continues to hold if condition (c) is modified as follows. There exists $\delta > 0$ such that 
$$\lim_{n\to\infty} \sup_{p \in \mathbb{N}} \sum_{r=1}^{k_n} \E\left[(\be_i^\top\bX_{n,p,r})^{2+\delta}\right] = 0.$$
\end{corollary}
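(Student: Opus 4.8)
The plan is to deduce this from \autoref{lindeberg} by showing that the stated Lyapunov-type condition implies condition (c) of that theorem, with the implication being uniform over $p$. All remaining hypotheses of \autoref{lindeberg} (mean zero, finite second moments, and conditions (a) and (b)) are retained unchanged, so nothing else requires verification; the finite second moment requirement is in any case a consequence of the $(2+\delta)$-moment hypothesis via Lyapunov's inequality, so it imposes nothing extra.

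Fix a coordinate index $i \in \{1,\ldots,d\}$ and $\epsilon > 0$. First I would record the elementary pointwise inequality: on the event $\{|\be_i^\top \bX_{n,p,r}| > \epsilon\}$ one has $|\be_i^\top \bX_{n,p,r}|^{\delta} > \epsilon^{\delta}$, and hence
\begin{align*}
(\be_i^\top \bX_{n,p,r})^2 \, \mathbb{I}\bigl\{|\be_i^\top \bX_{n,p,r}| > \epsilon\bigr\} \;\le\; \frac{|\be_i^\top \bX_{n,p,r}|^{2+\delta}}{\epsilon^{\delta}}
\end{align*}
holds surely. Taking expectations, summing over $1 \le r \le k_n$, and then taking $\sup_{p \in \mathbb{N}}$, the factor $\epsilon^{-\delta}$ --- being free of $n$, $p$ and $r$ --- pulls out of the supremum, so that
\begin{align*}
&\sup_{p \in \mathbb{N}} \sum_{r=1}^{k_n} \E\Bigl[(\be_i^\top \bX_{n,p,r})^2 \, \mathbb{I}\bigl\{|\be_i^\top \bX_{n,p,r}| > \epsilon\bigr\}\Bigr] \\
&\qquad\qquad \le\; \frac{1}{\epsilon^{\delta}} \, \sup_{p \in \mathbb{N}} \sum_{r=1}^{k_n} \E\bigl[|\be_i^\top \bX_{n,p,r}|^{2+\delta}\bigr].
\end{align*}
Letting $n \to \infty$, the right-hand side vanishes by the assumed Lyapunov condition, hence so does the left-hand side. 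Since $i$ and $\epsilon > 0$ were arbitrary, this is precisely condition (c) of \autoref{lindeberg}.

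Having verified condition (c), I would simply invoke \autoref{lindeberg} to conclude that $\bS_{n,p} \Longrightarrow N_d(\mathbf{0}_d, \bSigma_p)$ uniformly-over-$p$ as $n \to \infty$, which is the assertion of the corollary. There is no genuine obstacle here: this is the familiar ``Lyapunov implies Lindeberg'' reduction, and the only point worth stressing is that the bounding constant $\epsilon^{-\delta}$ does not depend on $n$, $p$ or $r$, so the classical implication is automatically uniform over $p$ and the argument carries over verbatim to the uniform-over-$p$ setting. (Implicitly the Lyapunov condition forces $\E[|\be_i^\top \bX_{n,p,r}|^{2+\delta}] < \infty$ for every $r \le k_n$ once $n$ is large enough, so every quantity above is well defined.)
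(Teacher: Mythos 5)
Your proof is correct. You establish condition~(c) of \autoref{lindeberg} by noting the pointwise inequality $Z^2 \,\mathbb{I}\{|Z| > \epsilon\} \le |Z|^{2+\delta}/\epsilon^{\delta}$, which is the most direct route. The paper instead applies H\"older's inequality followed by Markov's inequality to $\E[Z^2\,\mathbb{I}\{|Z|>\epsilon\}]$ and arrives at a bound of the same form (a constant depending only on $\epsilon$ and $\delta$, times $\E[|Z|^{2+\delta}]$). Both arguments hinge on exactly the same point you emphasize: the bounding constant is free of $n$, $p$ and $r$, so the classical ``Lyapunov implies Lindeberg'' reduction carries over verbatim to the uniform-over-$p$ setting. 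Your pointwise version is slightly more elementary (avoiding H\"older altogether), and in fact gives the sharper constant $\epsilon^{-\delta}$ as opposed to the paper's stated $\epsilon^{-(2+\delta)}$, but for the purpose of the corollary they are interchangeable. The aside about interpreting $(\be_i^\top\bX_{n,p,r})^{2+\delta}$ as $|\be_i^\top\bX_{n,p,r}|^{2+\delta}$ and about the $(2+\delta)$-moment hypothesis subsuming finiteness of second moments is appropriate and handled correctly.
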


\noindent
For example, consider a sequence of independent and identically distributed (i.i.d.) mean zero random vectors $\{\bX_{p, r}\}_{r \in \mathbb{N}}$ with $\Cov(\bX_{p, r}) = \bSigma_p$, where $\bSigma_p$ satisfies conditions (a) and (b) of \autoref{lindeberg}. Condition (c) is then automatically satisfied and $\sqrt{n}\bar{\bX}_p = \frac{1}{\sqrt{n}} \sum_{r=1}^n \bX_{p, r} \Longrightarrow N_d(\mathbf{0}_d, \bSigma_p)$ uniformly-over-$p$ as $n \to \infty$.

The results here are parallel to those in the unpublished short note of \cite{bengs2019uniform} and \cite{kasy2019uniformity}. The former requires uniform absolute continuity, whereas we impose weaker as well as more widely applicable conditions. On the other hand, \cite{kasy2019uniformity} focuses on the uniform validity of the delta method (which differs in scope from our work). Although they prove a uniform CLT, our result in \autoref{lindeberg} is more~general.


\section{Two-sample location tests}\label{sec:twosample}


Let $\bX_{p,1}, \ldots, \bX_{p,n_1}$ and $\bY_{p,1}, \ldots, \bY_{p,n_2}$ be two i.i.d. samples of sizes $n_1$ and $n_2$ drawn from two probability distributions on $\mathbb{R}^p$ with location parameters ${\bmu}_{p,1}$ and $\bmu_{p,2}$ for $p \in \mathbb{N}$. 
We are interested in testing the following hypothesis
\begin{align} \label{h_0}
\text{H}_0 : \; \bmu_{p, 1} = \bmu_{p, 2} \quad\text{versus} \quad
\text{H}_1 : \; \bmu_{p, 1} \neq \bmu_{p, 2}.
\end{align}
Let $\bh_p( \cdot, \cdot ) : \mathbb{R}^p \times \mathbb{R}^p \to \mathbb{R}^p$ be a kernel function and define $\bdelta_p = \E[\bh_p( \bX_{p,1}, \bY_{p,1} )]$, which satisfies $\bdelta_p = \mathbf{0}_p$ under $\text{H}_0$. When the location parameters are the population means, a natural choice is $\bh_p( \bx, \by) = \bx - \by$. 
If $\bX_{p,1} \stackrel{D}{=} \bmu_{p,1} + \bW_p$ and $\bY_{p,1} \stackrel{D}{=} \bmu_{p,2} + \bW_p$ for some random vector $\bW_p$ (with possibly undefined moments), then $\bh_p( \bx, \by ) = ( \bx - \by ) / \| \bx - \by \|$ satisfies $\bdelta_p = \mathbf{0}_p$ under $\text{H}_0$. This is so because both $\bX_{p,1}$ and $\bY_{p,1}$ are identically distributed under $\text{H}_0$ and hence, we have $\E\left[\bh(\bX_{p,1}, \bY_{p,1})\right] = \E\left[\bh(\bY_{p,1}, \bX_{p,1})\right] = - \E\left[\bh(\bX_{p,1}, \bY_{p,1})\right]$. Clearly, this now implies that $\bdelta_p = \mathbf{0}_p$. 
Moreover, if $p \geq 2$ and the support of $\bW_p$ is not contained in a straight line, it follows from the proof of Theorem 2.6 in \cite{chowdhury2022multi} that $\|\bdelta_p\| > 0$ under $\text{H}_1$. Consider the following test statistic (with $n=n_1+n_2$):
\begin{multline*}
T_{n,p} = \frac{1}{nn_1n_2} \left[ \left\|\sum_{i=1}^{n_1} \sum_{j=1}^{n_2}\bh_p(\bX_{p,i}, \bY_{p,j})\right\|^2  - \sum_{i=1}^{n_1}\left\|\sum_{j=1}^{n_2}\bh_p(\bX_{p,i}, \bY_{p,j})\right\|^2\right.\\
\left.- \sum_{j=1}^{n_2}\left\|\sum_{i=1}^{n_1}\bh_p(\bX_{p,i}, \bY_{p,j})\right\|^2 + \sum_{i=1}^{n_1}\sum_{j=1}^{n_2}\left\|\bh_p(\bX_{p,i}, \bY_{p,j})\right\|^2  \right].  
\end{multline*}
Now, the null is to be rejected if the value of $T_{n,p}$ is large. An equivalent representation is $T_{n,p}= \frac{1}{n n_1 n_2} \sum_{i_1 \neq i_2} \sum_{j_1 \neq j_2} \bh_p(\bX_{p,i_1}, \bY_{p,j_1})^\top \bh_p(\bX_{p,i_2}, \bY_{p,j_2})$. When $\bh_p( \bx, \by ) = \bx - \by$, the resulting statistic is similar to that of \cite{chen2010two}. Related variants appear in \cite{zhang2020simple, zhang2021hetero}, where a Welch–Satterthwaite approximation was used for the implementation of the test.
When $\bh_p( \bx, \by ) = ( \bx - \by ) / \| \bx - \by \|$, a closely related test was proposed by \cite{chowdhury2022multi}, but their observations are from a separable Hilbert space. Another spatial sign-based test was proposed by \cite{feng2016multivariate}, where normalization was done via a diagonal matrix and the asymptotic theory requires conditions on the growth of $p$ relative to $n$. For the derivation of the asymptotic distribution of $T_{n,p}$, we first introduce its centered version as follows:
$$T_{n,p, 0} = \frac{1}{n n_1 n_2} \sum_{i_1 \neq i_2} \sum_{j_1 \neq j_2} [\bh_p(\bX_{p,i_1}, \bY_{p,j_1}) -\bdelta_p]^\top [\bh_p(\bX_{p,i_2}, \bY_{p,j_2}) - \bdelta_p].$$
Assume the following standard condition.
\begin{enumerate}[label = (C\arabic*), ref = (C\arabic*)]
\vspace*{-0.125in}
\item As $n \to \infty$, we have $n_1/n \to \tau \in (0,1)$. \label{cond1}
\vspace*{-0.125in}
\end{enumerate}
Let $\bSigma_{p,1} = \Var(\E[\bh_p(\bX_{p,1}, \bY_{p,1})\vert \bX_{p,1}])$ and $\bSigma_{p,2} = \Var(\E[\bh_p(\bX_{p,1}, \bY_{p,1})\vert \bY_{p,1}])$, and define $\bSigma_p = (1-\tau) \bSigma_{p,1} + \tau \bSigma_{p,2}$ for $p \in \mathbb{N}$. 
For any $1 \leq i \leq n_1$ and $1 \leq j \leq n_2$, define
\begin{align}
& \bar{\bh}_p( \bX_{p,i}, \bY_{p,j})
= \bh_p( \bX_{p,i}, \bY_{p,j}) 
- \E[\bh_p(\bX_{p,i},\bY_{p,j})\vert \bX_{p,i}]
- \E[\bh_p(\bX_{p,i}, \bY_{p,j}) \vert \bY_{p,j}] + \bdelta_p. \label{hbar}
\end{align}
We have $\E[\bar{\bh}_p( \bX_{p,i}, \bY_{p,j})] = \E[\bar{\bh}_p( \bX_{p,i}, \bY_{p,j}) \vert \bX_{p,i}] = \E[\bar{\bh}_p( \bX_{p,i}, \bY_{p,j}) \vert \bY_{p,j}] = \mathbf{0}_p$. Now, consider the following conditions. 
\begin{enumerate}[label = (C\arabic*), ref = (C\arabic*)]
\setcounter{enumi}{1}
\item $\displaystyle \sup_p\frac{\E[\|\bar{\bh}_p(\bX_{p,1}, \bY_{p,1})\|^4]}{\tr(\bSigma_p^2)}< \infty$ and  $\displaystyle \sup_p\frac{\E[\|\bar{\bh}_p(\bX_{p,i},\bY_{p,j})\|^4] [\tr(\bSigma_p)]^2}{[\tr(\bSigma_p^2)]^2} < \infty$. \label{cond2}
\item We have
$$\E[\bh_p(\bX_{p,i}, \bY_{p,j})\vert \bX_{p,i}] = \bdelta_p + \bGamma_{p,1} \mathbf{z}_{p,i1} \text{ for } i = 1, \ldots,n_1$$ and $$\E[\bh_p(\bX_{p,i}, \bY_{p,j})\vert \bY_{p,j}] = \bdelta_p + \bGamma_{p,2} \mathbf{z}_{p,j2} \text{ for } j = 1, \ldots,n_2,$$ 
where each $\bGamma_{p,j}$ is a $p \times m$ matrix such that $\bGamma_{p,j} \bGamma_{p,j}^\top = \bSigma_{p,j}$ and $\mathbf{z}_{p,ij}$ are i.i.d. $m$-dimensional vectors with $\E(\mathbf{z}_{p,ij}) = \mathbf{0}_m$ and $\Var(\mathbf{z}_{p,ij}) = \mathbf{I}_m$ (the $m \times m$ identity matrix), for $i = 1,\ldots, n_j$ and $j = 1,2$. Further, $\E(z_{p,ijk}^4) = 3 + \Delta$ with $z_{p,ijk}$ being the $k$-th component of $\mathbf{z}_{p,ij}$ and $\Delta$ is a fixed (independent of $p$) constant. Also, assume that
$$\E(z_{p,ijl_1}^{\alpha_1} \cdots z_{p,ijl_q}^{\alpha_q}) = \E(z_{p,ijl_1}^{\alpha_1}) \cdots \E(z_{p,ijl_q}^{\alpha_q})$$ 
for any positive integer $q$ such that $\sum_{\ell=1}^q \alpha_\ell \leq 4$ and $l_1 \ne \cdots \ne l_q$. \label{cond3}

\item Assume that $\mathrm{tr}(\bSigma_{p,i} \bSigma_{p,j} \bSigma_{p,l} \bSigma_{p,h}) = o\left\{ \mathrm{tr}^2\left[ (\bSigma_{p,1} + \bSigma_{p,2})^2 \right] \right\}$ as $p \to \infty$ for all $i,j,l,h = 1$ or $2$. Further, in \ref{cond3}, $\E(z_{p,ijl_1}^{\alpha_1} \cdots z_{p,ijl_q}^{\alpha_q}) = \E(z_{p,ijl_1}^{\alpha_1}) \cdots \E(z_{p,ijl_q}^{\alpha_q})$ holds for any positive integer $q$ such that $\sum_{\ell=1}^q \alpha_\ell \leq 8$ and $l_1 \ne \cdots \ne l_q$. \label{cond4}

\item 
Let $\lambda_{p,r}$ denote the eigenvalues of $\bSigma_p$ in decreasing order for $r = 1,\ldots,p$. Define
$\rho_{p,r} =  \lambda_{p,r} / \sqrt{\lambda_{p,1}^2 +\cdots+ \lambda_{p,p}^2} $
for all $r$. Assume that $\rho_{p,r} \to \rho_r$ as $p \to \infty$. Further, assume $\sup_{p > q} \sum_{r=q+1}^p \rho_{p,r}^2 \to 0$ and $\sum_{r=q+1}^\infty \rho_{r}^2 \to 0$ as $q \to \infty$ and $\rho_1 > 0$. \label{cond5}
\end{enumerate}

Assumption \ref{cond2} ensures that the error terms arising from the Hoeffding-type decomposition of the test statistic are negligible for large $n$ uniformly-over-$p$. When $\bh_p(\bx, \by) = (\bx - \by) / \|\bx - \by\|$, this condition holds provided $\inf_p\tr(\bSigma_p) > 0$. Suppose $\bh_p(\bx, \by) = \bx - \by$. Then, \ref{cond2} is trivially true and $\bSigma_p = (1-\tau) \Var(\bX_{p,1}) + \tau \Var(\bY_{p,1})$. In this setting, conditions \ref{cond3} and \ref{cond4} were considered by \cite{chen2010two} to specify a factor model for high-dimensional data. Condition \ref{cond5} generalizes a similar assumption considered by \cite{zhang2020simple, zhang2021hetero} and now includes the equi-correlation matrix as a special case.

We now describe the asymptotic distribution of the test statistic $T_{n,p}$. For each $p$, define $\zeta_p = (\sum_{r=1}^p \rho_{p,r}(W_r - 1)) / \sqrt{2}$, where $\{ W_r \}_{r \in \mathbb{N}}$ are i.i.d. $\chi_1^2$ random variates.

\begin{theorem}\label{thm:1}
Suppose that \ref{cond1}, \ref{cond2} and \ref{cond3} hold, and that either \ref{cond4}, or \ref{cond5} is satisfied. Then, the following holds:
\begin{enumerate}[label = (\alph*), ref = (C\alph*)]
\item $\zeta_p \stackrel{D}{\longrightarrow} \zeta$ as $p \to \infty$, where $\zeta \sim N(0,1)$ under \ref{cond4}, and $\zeta$ has the same distribution as $(\sum_{r=1}^\infty \rho_{r}(W_r - 1)) / \sqrt{2}$ under \ref{cond5}.
\item  As $n , p \to \infty$, we have
$$T_{n,p,0} / \sqrt{2\,\mathrm{tr}(\bSigma_{p}^2)} \stackrel{D}{\longrightarrow} \zeta.$$
\item  As $n \to \infty$
$$T_{n,p,0} / \sqrt{2\,\mathrm{tr}(\bSigma_{p}^2)} \Longrightarrow \zeta_p
\quad \text{uniformly-over-$p$}.$$
\end{enumerate}
\end{theorem}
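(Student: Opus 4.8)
The plan is to establish (a) first, then the uniform‑over‑$p$ statement (c), and finally deduce (b) from (a) and (c). Throughout, write $\bg_{p,i}=\E[\bh_p(\bX_{p,i},\bY_{p,j})\mid\bX_{p,i}]-\bdelta_p$ and $\mathbf{k}_{p,j}=\E[\bh_p(\bX_{p,i},\bY_{p,j})\mid\bY_{p,j}]-\bdelta_p$, so that by \eqref{hbar} one has $\bh_p(\bX_{p,i},\bY_{p,j})-\bdelta_p=\bg_{p,i}+\mathbf{k}_{p,j}+\bar\bh_p(\bX_{p,i},\bY_{p,j})$ with $\Cov(\bg_{p,1})=\bSigma_{p,1}$ and $\Cov(\mathbf{k}_{p,1})=\bSigma_{p,2}$. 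The elementary but crucial observation, used repeatedly, is that $\bSigma_p\succeq\min(\tau,1-\tau)\,\bSigma_{p,j}$, so by Weyl's inequality $\lambda_r(\bSigma_{p,j})\le\{\min(\tau,1-\tau)\}^{-1}\lambda_{p,r}$ for all $r$, whence $\tr(\bSigma_{p,j}^2)$ and $\tr(\bSigma_{p,i}\bSigma_{p,j})$ are at most $\{\min(\tau,1-\tau)\}^{-2}\tr(\bSigma_p^2)$; this converts the moment estimates that arise into bounds in terms of $\tr(\bSigma_p^2)$. For (a), $\zeta_p=2^{-1/2}\sum_{r=1}^p\rho_{p,r}(W_r-1)$ is a triangular array of independent centered summands with total variance $\sum_r\rho_{p,r}^2=1$. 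Under \ref{cond4}, expanding $\tr(\bSigma_p^4)$ as a $\tau$‑weighted sum of the sixteen traces $\tr(\bSigma_{p,i}\bSigma_{p,j}\bSigma_{p,l}\bSigma_{p,h})$ and using the observation above gives $\sum_r\rho_{p,r}^4=\tr(\bSigma_p^4)/\tr^2(\bSigma_p^2)\to0$, so Lyapunov's condition holds and $\zeta_p\stackrel{D}{\longrightarrow}N(0,1)$. Under \ref{cond5}, for fixed $q$ split $\zeta_p$ into its first $q$ terms $\zeta_p^{(q)}$ and a remainder; the finite sum $\zeta_p^{(q)}\stackrel{D}{\longrightarrow}2^{-1/2}\sum_{r\le q}\rho_r(W_r-1)$ as $p\to\infty$ since $\rho_{p,r}\to\rho_r$, while the $L^2$‑norms of $\zeta_p-\zeta_p^{(q)}$ and of $\zeta-2^{-1/2}\sum_{r\le q}\rho_r(W_r-1)$ are at most $\sup_{p'}\sum_{r>q}\rho_{p',r}^2+\sum_{r>q}\rho_r^2\to0$ as $q\to\infty$ by \ref{cond5}; a bounded‑Lipschitz triangle inequality (letting $q\to\infty$ after $p\to\infty$) then gives $\zeta_p\stackrel{D}{\longrightarrow}\zeta$.

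For (c) I would first carry out a Hoeffding‑type decomposition $T_{n,p,0}=M_{n,p}+E_{n,p}$, where $M_{n,p}=\sum_{i\ne i'}\langle\bv_{n,p,i},\bv_{n,p,i'}\rangle$ is the leading degenerate $U$‑statistic built from the $n$ \emph{independent} mean‑zero $p$‑vectors $\bv_{n,p,i}$ equal to a scalar multiple of $\bg_{p,i}$ for $i\le n_1$ and of $\mathbf{k}_{p,j}$ for $i=n_1+j$, scaled so that $\sum_i\Cov(\bv_{n,p,i})$ equals $\bSigma_p$ up to a $p$‑free $o(1)$ multiple of $\bSigma_{p,1}+\bSigma_{p,2}$, and $E_{n,p}$ collects the finitely many cross and fully degenerate remainder terms, each carrying a factor $\bar\bh_p$. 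Bounding the variance of each remainder term by a fourth‑moment quantity and invoking the two clauses of \ref{cond2} together with the factor structure \ref{cond3} and the above observation shows $E_{n,p}/\sqrt{2\tr(\bSigma_p^2)}\to0$ in probability uniformly‑over‑$p$, so by the uniform‑over‑$p$ Slutsky theorem (\autoref{Slutsky}) it suffices to treat $M_{n,p}$. Under \ref{cond5} I would truncate in the eigenbasis $\{\bu_{p,r}\}$ of $\bSigma_p$: orthogonality of the projections onto $\mathrm{span}(\bu_{p,1},\dots,\bu_{p,q})$ and its complement splits $M_{n,p}=M_{n,p}^{(q)}+M_{n,p}^{(q,\perp)}$ exactly. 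For the head, the $q$‑vector $\big(\lambda_{p,r}^{-1/2}\bu_{p,r}^\top\!\sum_i\bv_{n,p,i}\big)_{r\le q}$ is a sum of $n$ independent $q$‑vectors with covariance $\mathbf{I}_q+o(1)$ satisfying the uniform‑over‑$p$ Lyapunov condition (the observation above bounds the fourth moments, using \ref{cond3}), so by \autoref{lindeberg} and \autoref{lyapunov} it converges to $N_q(\mathbf 0_q,\mathbf{I}_q)$ uniformly‑over‑$p$; combined with a uniform law of large numbers for the normalized diagonal $\lambda_{p,r}^{-1}\sum_i(\bu_{p,r}^\top\bv_{n,p,i})^2$ and a uniform‑in‑$p$ version of the continuous‑mapping step for the functional $\bx\mapsto2^{-1/2}\sum_{r\le q}\rho_{p,r}(x_r^2-1)$ (whose coefficients lie in $[0,1]$ and converge to $\rho_r$; one handles the finitely many small $p$ by the classical fixed‑$p$ degenerate $U$‑statistic limit and the rest by a ball‑truncation), this yields $M_{n,p}^{(q)}/\sqrt{2\tr(\bSigma_p^2)}\Longrightarrow2^{-1/2}\sum_{r\le q}\rho_{p,r}(W_r-1)$ uniformly‑over‑$p$. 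The tail $M_{n,p}^{(q,\perp)}$ is mean zero with variance at most $C_\tau\sum_{r>q}\lambda_{p,r}^2$, so after scaling its second moment is at most $\tfrac12 C_\tau\sup_p\sum_{r>q}\rho_{p,r}^2\to0$ as $q\to\infty$ by \ref{cond5}, which likewise controls $\zeta_p-2^{-1/2}\sum_{r\le q}\rho_{p,r}(W_r-1)$; the standard approximation argument ($n\to\infty$, then $q\to\infty$, against bounded‑Lipschitz test functions) gives $M_{n,p}/\sqrt{2\tr(\bSigma_p^2)}\Longrightarrow\zeta_p$ uniformly‑over‑$p$, hence (c) under \ref{cond5}.

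Under \ref{cond4} the fixed‑$q$ truncation is unavailable, since the $\ell^2$‑mass of $\{\rho_{p,r}\}_r$ can disperse as $p\to\infty$ (e.g.\ $\bSigma_p=\mathbf{I}_p$); this is the genuine difficulty. Instead I would use the martingale‑difference representation $M_{n,p}=\sum_{k=2}^n 2\,\bv_{n,p,k}^\top\!\sum_{i<k}\bv_{n,p,i}$, whose summed conditional variances equal $2\tr(\bSigma_p^2)\{1+o_{\mathrm P}(1)\}$ uniformly over $\{p\ge p_0\}$ for $p_0$ large --- precisely because \ref{cond4} forces the fourth‑order traces $\tr(\bSigma_{p,i}\bSigma_{p,j}\bSigma_{p,l}\bSigma_{p,h})$ to be $o\{\tr^2(\bSigma_p^2)\}$ --- while the conditional Lyapunov ratio is $O(1/n)$ uniformly in $p$; a uniform‑over‑$p$ martingale CLT (equivalently, a quantitative martingale CLT with constants uniform on $\{p\ge p_0\}$) then yields $\sup_{p\ge p_0}d_{\mathrm{BL}}\big(T_{n,p,0}/\sqrt{2\tr(\bSigma_p^2)},N(0,1)\big)\to0$ as $n\to\infty$. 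For the finitely many $p<p_0$, the classical fixed‑$p$ limit theory for degenerate $U$‑statistics gives $T_{n,p,0}/\sqrt{2\tr(\bSigma_p^2)}\stackrel{D}{\longrightarrow}\zeta_p$; since $\zeta_p\stackrel{D}{\longrightarrow}N(0,1)$ by (a), choosing $p_0$ so that $\sup_{p\ge p_0}d_{\mathrm{BL}}(\zeta_p,N(0,1))$ is small and patching the two regimes gives $\sup_p d_{\mathrm{BL}}\big(T_{n,p,0}/\sqrt{2\tr(\bSigma_p^2)},\zeta_p\big)\to0$, i.e.\ (c) under \ref{cond4}. Finally, (b) is immediate: for bounded continuous $f$, $\big|\E f(T_{n,p,0}/\sqrt{2\tr(\bSigma_p^2)})-\E f(\zeta)\big|\le\sup_{p'}\big|\E f(T_{n,p',0}/\sqrt{2\tr(\bSigma_{p'}^2)})-\E f(\zeta_{p'})\big|+\big|\E f(\zeta_p)-\E f(\zeta)\big|$, whose first term vanishes as $n\to\infty$ by (c) and whose second vanishes as $p\to\infty$ by (a).

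The hard part will be the uniform‑over‑$p$ treatment of the leading term $M_{n,p}$ in the \ref{cond4} regime, where the convenient fixed‑truncation device fails and one must instead run a genuinely $p$‑dependent (martingale) Gaussian CLT with control uniform only on $\{p\ge p_0\}$ and then splice it to the classical small‑$p$ behaviour; aligning the normalizations so that the Gaussian target for large $p$ matches $\zeta_p$ uniformly is the delicate point. A secondary, laborious rather than conceptual, obstacle is verifying that each of the several remainder terms in $E_{n,p}$ is negligible relative to $\sqrt{\tr(\bSigma_p^2)}$ uniformly‑over‑$p$; the two clauses of \ref{cond2} are exactly calibrated for this, the first controlling the fully degenerate term and the second the mixed $\bg_{p,i}$--$\bar\bh_p$ and $\mathbf{k}_{p,j}$--$\bar\bh_p$ terms.
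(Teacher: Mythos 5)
Your proposal is structurally sound, but it takes a genuinely different route from the paper's. The paper proves (a), then (b) as a \emph{joint} limit $n,p\to\infty$, and finally deduces (c) by a patching argument: for large $n,p$ use (a) and (b), for the finitely many small $p$ use the classical fixed-$p$ degenerate $U$-statistic limit, then combine. You instead prove (c) directly and read off (b) as a corollary. The deduction of (b) from (a)+(c) is correct and is the mirror image of the paper's deduction of (c) from (a)+(b)+fixed-$p$-theory, so either ordering can be made to work.

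Where the two routes genuinely diverge is in the \ref{cond4} regime. You propose to establish the uniformity over $\{p\ge p_0\}$ directly, via a quantitative (Bolthausen/Heyde–Brown type) martingale CLT whose error constants are controlled uniformly on $\{p\ge p_0\}$ by the Lyapunov ratio $O(1/n)$ and the conditional-variance concentration (governed by the fourth-order traces that \ref{cond4} forces to be $o\{\tr^2(\bSigma_p^2)\}$). The paper avoids all of this: it only needs the \emph{joint} limit for (b), which it obtains by citing Theorem~1(b) of \cite{zhang2021hetero} (which in turn builds on \cite{chen2010two}'s martingale CLT), and the uniformity in (c) then falls out of the patching lemma. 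Your uniform-over-$\{p\ge p_0\}$ statement is strictly stronger than joint convergence; it is plausible under \ref{cond4}+\ref{cond3}, but substantiating it requires writing out a Berry–Esseen-for-martingales bound and verifying uniformity of its ingredients from scratch, which is considerably more work than the paper's citation-plus-patch.

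In the \ref{cond5} regime your approach is closer to the paper's: both truncate in the eigenbasis of $\bSigma_p$, send the leading $q$ coordinates through the uniform-over-$p$ CLT (\autoref{lindeberg}/\autoref{lyapunov}), and control the tail via $\sup_p\sum_{r>q}\rho_{p,r}^2\to0$. But the paper runs the argument entirely at the level of characteristic functions, using the uniform L\'evy continuity theorem (\autoref{levy}) and \autoref{thm:1_lemma}; this makes the ``$p$-dependent continuous mapping'' step you worry about trivial, because one is just composing a Gaussian CF with the quadratic form. Your plan to work at the level of distributions with a uniform LLN for the diagonal and a uniform-in-$p$ continuous mapping for $\bx\mapsto 2^{-1/2}\sum_{r\le q}\rho_{p,r}(x_r^2-1)$ is more cumbersome and leaves several auxiliary uniform results to be proved. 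The Hoeffding-decomposition reduction and the role of the two clauses of \ref{cond2} in killing the remainder are the same in both proofs, and your verification of Lyapunov for part (a) via $\tr(\bSigma_p^4)/\tr^2(\bSigma_p^2)\to0$ is equivalent to the paper's appeal to the classical Lindeberg--Feller theorem. In short: correct strategy, but your (C4) branch replaces the paper's ``cite and patch'' with a self-contained uniform martingale CLT, and your (C5) branch replaces the CF computation with a distributional one; both substitutions are valid in principle but buy nothing and cost substantial extra verification.
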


\begin{remark}
The assumptions of \autoref{lemma_quantile} hold here since each of the random variables $\{\zeta_p\}_{p \in \mathbb{N}}$ and $\zeta$ have a strictly increasing and continuous dfs on the support.
\end{remark}

\begin{remark}
Unlike \cite{zhang2020simple, zhang2021hetero}, part (b) holds without any assumption on the relative growth rates of $n$ and $p$. Fix $q \in \mathbb{N}$ and an arbitrary $\epsilon > 0$. While bounding a quantity similar to $|\varphi_{n,p}^{(q)}(t) - \varphi_{\tilde{\zeta}_p^{(q)}}(t)|$ (see p. 28 of the Appendix for the definition) by $\epsilon$, the authors had used the classical CLT, but this cannot provide a bound that holds uniformly over $p \geq q$. We achieve this uniformity by using the uniform-over-$p$ CLT (see \autoref{lindeberg}).
\end{remark}

When $\bh_p( \bx, \by ) = \bx - \by$, parts (b) and (c) of \autoref{thm:1} extend the results given in \cite{zhang2020simple, zhang2021hetero} to the uniform-over-$p$ case and also cover a larger class of eigen structures of $\bSigma_p$ like the spiked covariance (see, e.g., \cite{johnstone2018}). 
The choice $\bh_p( \bx, \by ) = \bx - \by$ requires the existence of the mean and covariance of the random vectors. This may not be the case for heavy-tailed distributions, e.g., the multivariate Cauchy distribution. On the other hand, the choice $\bh_p( \bx, \by ) = ( \bx - \by ) / \| \bx - \by \|$ ensures that all moments exist because the kernel $\bh_p( \bx, \by )$ itself is bounded and more widely~applicable. 

We now describe a procedure for implementing our test. By \autoref{thm:1}, the asymptotic distribution of $T_{n,p}$ is same as that of $\sum_{r=1}^p \lambda_{p,r}(W_r - 1)$ for a fixed $p$. Motivated by this key observation, we use a simulation based approach to estimate the cut-off under the null hypothesis. We first take an estimator $\hat\bSigma_p$ of $\bSigma_p$ and compute its eigenvalues $\hat\lambda_{n,p,i}$ for $i=1,\ldots,p$. Fix $\alpha \in (0,1)$. To implement a level-$\alpha$ test, we first generate i.i.d. standard normal observations $Z_{i,j}$ for $i = 1,\ldots,p$ and compute $V_{n,p,j} = \sum_{i=1}^p \hat\lambda_{n,p,i} (Z_{ij}^2 - 1)$ for $j= 1,\ldots, M$, where $M$ is a large positive integer. Let $\hat c_{n,p}(\alpha)$ be the $(1 - \alpha)$-th sample quantile of the values $V_{n,p,1}, \ldots,V_{n,p,M}$. Finally, we reject the null hypothesis if $T_{n,p} > \hat c_{n,p}(\alpha)$. 

We shall now focus on two different estimators of $\bSigma_p$. The first one is defined as
\begin{align*}
\hat\bSigma_{p}^{(1)} &= \frac{1}{nn_1n_2} \sum_{i=1}^{n_1} \left(\sum_{j=1}^{n_2} \bh_p(\bX_{p,i}, \bY_{p,j})\right)\left(\sum_{j=1}^{n_2} \bh_p(\bX_{p,i}, \bY_{p,j})\right)^\top \\
&\;\;\;\;\;+ \frac{1}{nn_1n_2} \sum_{j=1}^{n_2} \left(\sum_{i=1}^{n_1} \bh_p(\bX_{p,i}, \bY_{p,j})\right)\left(\sum_{i=1}^{n_1} \bh_p(\bX_{p,i}, \bY_{p,j})\right)^\top - \hat{\bdelta}_p \hat{\bdelta}_p^\top,
\end{align*}
where $\hat{\bdelta}_p = (\sum_{i=1}^{n_1} \sum_{j=1}^{n_2} \bh_p(\bX_{p,i}, \bY_{p,j})) / (n_1 n_2)$. Let $\hat\bSigma_p = (\hat\sigma_{ij})_{p \times p}$.
Motivated by covariance structures like the AR(1), a second estimator based on tapering is defined as follows:
$$\hat \bSigma_{p}^{(2)} = (w_{ij}\hat\sigma_{ij})_{p \times p}$$
with $w_{ij} = 1$ if $|i - j| < k / 2$, $w_{ij} =1 - |i-j| / k$ if $k / 2 < |i-j| <k$ and $w_{ij} = 0$ otherwise, for $k = \min\{n^{1 / (2\beta + 2)}, p\}$. This tapering procedure was considered by \cite{ttcai2010optimalcov} for banded covariance matrices.

We now study the asymptotic level and power of the proposed test statistic. Consider the following assumptions.
\begin{enumerate}[label = (C\arabic*), ref = (C\arabic*)]
\setcounter{enumi}{5}
\item $\sup_p \left[\tr(\bSigma_p) / \sqrt{\tr(\bSigma_p^2)}\,\right] < \infty$. \label{cond6}
\item Let $\bSigma_p = (\sigma_{ij})_{p \times p}$. There exists constants $C$ and $\beta > 0$ (independent of $p$) such that
$|\sigma_{ij}| \leq C \lambda_1 |i-j|^{-(\beta + 1)}$
and $\sup_p \left[p \lambda_1^2 / \tr(\bSigma_p^2)\right] < \infty$. \label{cond7}
\end{enumerate}
Assumptions \ref{cond5} and \ref{cond6} are satisfied when $\bSigma_p$ is an equi-correlation, a spiked correlation matrix, or when the eigenvalues of $\bSigma_p$ decay very rapidly. On the other hand, when $\bSigma_p$ is an identity matrix, the autocorrelation matrix of some MA($k$) or AR($1$) process, or has a banded structure, then conditions \ref{cond4} and \ref{cond7} are satisfied. The following lemma is necessary to derive the asymptotic level and power of our test.

\vspace*{-0.1in}
\begin{lemma}\label{lemma_gaussian_estimation}
Suppose that the assumptions of \autoref{thm:1} hold. In addition, assume \ref{cond6} when using $\hat\bSigma_p^{(1)}$ and \ref{cond7} when using $\hat\bSigma_{p}^{(2)}$. Then, 
$$V_{n,p,1} / \sqrt{2\tr(\bSigma_{p}^2)} \Longrightarrow \zeta_p
\quad \text{uniformly-over-$p$}.$$
\end{lemma}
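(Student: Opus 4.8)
The plan is to pass to characteristic functions and apply the uniform-over-$p$ L\'evy continuity theorem (\autoref{levy}). Since the auxiliary normals $Z_{11},\dots,Z_{p1}$ are drawn independently of the data, conditionally on the sample
\[
\frac{V_{n,p,1}}{\sqrt{2\,\tr(\bSigma_p^2)}}=\sum_{i=1}^p \hat a_{n,p,i}\,(Z_{i1}^2-1),\qquad \hat a_{n,p,i}:=\frac{\hat\lambda_{n,p,i}}{\sqrt{2\,\tr(\bSigma_p^2)}},
\]
where $\hat\lambda_{n,p,1}\ge\cdots\ge\hat\lambda_{n,p,p}$ are the eigenvalues of $\hat\bSigma_p$ (either $\hat\bSigma_p^{(1)}$ or $\hat\bSigma_p^{(2)}$). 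Because $Z_{11}^2,\dots,Z_{p1}^2$ are i.i.d.\ $\chi_1^2$, the unconditional characteristic function of this quantity at $t\in\mathbb R$ is $\Phi_{n,p}(t)=\E\big[\prod_{i=1}^p\psi(\hat a_{n,p,i}t)\big]$, with $\psi(x)=e^{-ix}(1-2ix)^{-1/2}$ the characteristic function of $\chi_1^2-1$, while $\zeta_p$ has characteristic function $\Psi_p(t)=\prod_{r=1}^p\psi(a_{p,r}t)$, where $a_{p,r}=\lambda_{p,r}/\sqrt{2\,\tr(\bSigma_p^2)}=\rho_{p,r}/\sqrt2$. By part (a) of \autoref{thm:1} and \autoref{lemma_quantile} (together with the remark following \autoref{thm:1}), the family $\{\zeta_p\}_{p\in\mathbb N}$ satisfies \ref{assumption1} and \ref{assumption2}, so \autoref{levy} reduces the claim to showing that, for each fixed $t$, $\sup_{p\in\mathbb N}|\Phi_{n,p}(t)-\Psi_p(t)|\to0$ as $n\to\infty$.

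To bound this I would use $|\psi(x)|=(1+4x^2)^{-1/4}\le1$ together with the elementary inequality $|\prod_iu_i-\prod_iv_i|\le\sum_i|u_i-v_i|$ for complex numbers of modulus at most one, so that $|\Phi_{n,p}(t)-\Psi_p(t)|\le\E\sum_{i=1}^p|\psi(\hat a_{n,p,i}t)-\psi(a_{p,i}t)|$. Since $|\psi'(x)|=2|x|(1+4x^2)^{-3/4}\le2|x|$, the mean value theorem along the segment joining $a_{p,i}$ to $\hat a_{n,p,i}$ gives $|\psi(\hat a_{n,p,i}t)-\psi(a_{p,i}t)|\le2t^2(|a_{p,i}|+|\hat a_{n,p,i}|)\,|\hat a_{n,p,i}-a_{p,i}|$, and two applications of Cauchy--Schwarz (first over $i$, then over the probability space) yield
\[
\sup_{p}|\Phi_{n,p}(t)-\Psi_p(t)|\le 2t^2\Big(\sup_p\E\sum_{i}(|a_{p,i}|+|\hat a_{n,p,i}|)^2\Big)^{1/2}\Big(\sup_p\E\sum_{i}(\hat a_{n,p,i}-a_{p,i})^2\Big)^{1/2}.
\]
By the Hoffman--Wielandt inequality for symmetric matrices, $\sum_i(\hat\lambda_{n,p,i}-\lambda_{p,i})^2\le\|\hat\bSigma_p-\bSigma_p\|_F^2$, so the last factor is at most $(\tfrac12\sup_p\kappa_{n,p})^{1/2}$ with $\kappa_{n,p}:=\E\|\hat\bSigma_p-\bSigma_p\|_F^2/\tr(\bSigma_p^2)$; and since $\sum_ia_{p,i}^2=\tfrac12$ and $\E\sum_i\hat a_{n,p,i}^2=\E\tr(\hat\bSigma_p^2)/\{2\tr(\bSigma_p^2)\}\le1+\kappa_{n,p}$, the first factor is at most $(3+2\sup_p\kappa_{n,p})^{1/2}$. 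Thus the whole statement reduces to the single estimate $\sup_{p\in\mathbb N}\E\|\hat\bSigma_p-\bSigma_p\|_F^2/\tr(\bSigma_p^2)\to0$ as $n\to\infty$, which is the heart of the argument.

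That estimate is established separately for the two estimators, and this is where the added assumptions are used. For $\hat\bSigma_p^{(1)}$ the plan is to substitute the Hoeffding-type decomposition $\bh_p(\bX_{p,i},\bY_{p,j})=\E[\bh_p\mid\bX_{p,i}]+\E[\bh_p\mid\bY_{p,j}]-\bdelta_p+\bar{\bh}_p(\bX_{p,i},\bY_{p,j})$ into the sums defining $\hat\bSigma_p^{(1)}$, use the factor-model form of the conditional means in \ref{cond3} (with the term from $\hat{\bdelta}_p\hat{\bdelta}_p^\top$ cancelling the $\bdelta_p\bdelta_p^\top$ contribution, so that $\E\hat\bSigma_p^{(1)}$ matches $\bSigma_p$ up to $O(1/n)$ terms), and then expand $\E\|\hat\bSigma_p^{(1)}-\bSigma_p\|_F^2$ into finitely many moment terms; the degenerate and cross terms are controlled by the fourth-moment bounds packaged in \ref{cond2}, and after dividing by $\tr(\bSigma_p^2)$ one invokes \ref{cond6}, which bounds $\tr(\bSigma_p)/\sqrt{\tr(\bSigma_p^2)}$ (hence the effective rank of $\bSigma_p$), to obtain $\kappa_{n,p}=O(1/n)$ uniformly in $p$. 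For $\hat\bSigma_p^{(2)}$ the plan is the bias--variance split $\|\hat\bSigma_p^{(2)}-\bSigma_p\|_F^2\le2\|\hat\bSigma_p^{(2)}-\E\hat\bSigma_p^{(2)}\|_F^2+2\|\E\hat\bSigma_p^{(2)}-\bSigma_p\|_F^2$: the variance term involves only the $O(pk)$ retained entries, each of variance $O(1/n)$ on the scale of $\lambda_1^2$, hence is of order $(k/n)\,p\lambda_1^2$; the bias term is $\sum_{ij}(1-w_{ij})^2\sigma_{ij}^2$, which the polynomial decay $|\sigma_{ij}|\le C\lambda_1|i-j|^{-(\beta+1)}$ from \ref{cond7} bounds by a constant times $p\lambda_1^2\,k^{-(2\beta+1)}$; dividing by $\tr(\bSigma_p^2)$ and using $\sup_p p\lambda_1^2/\tr(\bSigma_p^2)<\infty$ from \ref{cond7} together with the bandwidth $k=\min\{n^{1/(2\beta+2)},p\}$ sends both pieces of $\kappa_{n,p}$ to $0$ uniformly in $p$.

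The main obstacle is exactly this last step: getting Frobenius-norm consistency of $\hat\bSigma_p$, normalized by $\tr(\bSigma_p^2)$, \emph{uniformly} in $p$ rather than merely for fixed $p$ or along $p\to\infty$. Accordingly, the moment expansions for $\hat\bSigma_p^{(1)}$ must be carried out with the normalization in place throughout, so that every contribution is weighed against $\tr(\bSigma_p^2)$; conditions \ref{cond2} and \ref{cond6} are stated in precisely this ratio form to make the resulting bounds independent of $p$. Everything preceding it --- the passage to characteristic functions, the product and mean value estimates, and the Hoffman--Wielandt step --- is routine once the L\'evy reduction has been set up.
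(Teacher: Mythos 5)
Your proposal is correct, and it reduces the lemma to the same key estimate as the paper (the Frobenius-norm consistency $\sup_{p}\E\|\hat\bSigma_p-\bSigma_p\|_F^2/\tr(\bSigma_p^2)\to 0$, which is exactly \autoref{estimation_auxiliary}, proved separately there), but it reaches that point by a genuinely different route in the characteristic-function step. The paper observes that one can \emph{couple} the two sums by using the \emph{same} i.i.d.\ $\chi_1^2$ variates $W_i$ in both $V_{n,p,1}/\sqrt{2\tr(\bSigma_p^2)}\stackrel{d}{=}\sum_i\hat\lambda_{n,p,i}(W_i-1)/\sqrt{2\tr(\bSigma_p^2)}$ and $\zeta_p\stackrel{d}{=}\sum_i\lambda_{p,i}(W_i-1)/\sqrt{2\tr(\bSigma_p^2)}$, and then applies the elementary inequality $|\E e^{itX}-\E e^{itY}|\le|t|\,\E|X-Y|\le|t|\,(\E(X-Y)^2)^{1/2}$; the cross terms in $\E(X-Y)^2$ vanish because the $W_i$ are independent of the $\hat\lambda_{n,p,i}$ and of each other, and $\E(W_i-1)^2=2$ gives $|\varphi_{n,p}(t)-\varphi_p(t)|^2\le \frac{t^2}{\tr(\bSigma_p^2)}\E\sum_i(\hat\lambda_{n,p,i}-\lambda_{p,i})^2$ in one step, followed by Hoffman--Wielandt. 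Your argument avoids the coupling and instead factors the characteristic function as $\E[\prod_i\psi(\hat a_{n,p,i}t)]$, applies the telescoping product inequality and a gradient bound on $\psi$, and then needs two extra Cauchy--Schwarz passes to arrive at the same Hoffman--Wielandt/Frobenius reduction; this costs you an additional factor of $|t|$ in the final bound and a bit more bookkeeping, but is perfectly valid. One small caution: what you call ``the mean value theorem along the segment'' is being applied to a complex-valued function, which fails as stated; the correct justification is the integral form $\psi(y)-\psi(x)=\int_x^y\psi'(s)\,\mathrm{d}s$ yielding $|\psi(y)-\psi(x)|\le|y-x|\sup_{s\in[x,y]}|\psi'(s)|$, which gives the same numerical bound. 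Your sketch of the Frobenius-norm estimate itself --- Hoeffding decomposition plus \ref{cond2}/\ref{cond6} for $\hat\bSigma_p^{(1)}$, and bias--variance split with \ref{cond7} and $k=\min\{n^{1/(2\beta+2)},p\}$ for $\hat\bSigma_p^{(2)}$ --- mirrors the paper's \autoref{estimation_auxiliary} closely. Overall: a correct, slightly more laborious alternative at the CF step, identical at the core.
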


Using this important lemma, we now establish the uniform-over-$p$ asymptotic validity and consistency of our proposed test.

\vspace*{-0.1in}
\begin{theorem} \label{thm:2}
\begin{enumerate}[label = (\alph*), ref = (C\alph*)]
\item If the assumptions of \autoref{lemma_gaussian_estimation} hold under $\text{H}_0$, then the probability of Type 1 error of the proposed level-$\alpha$ test converges to $\alpha$ uniformly-over-$p$ as $n,M\to \infty$. 

\item Under $\text{H}_1$, suppose either 
\begin{enumerate}[label = (\roman*)]
    \item $\inf_p \|\bdelta_p\| > 0$ and $\sup_p \sup_{\bx} \sup_{\by} \|\bh_p(\bx, \by)\| < \infty$, or
    \item $\inf_p\left( \|\bdelta_p\|^2 / \sqrt{\tr(\bSigma_p^2)}\;\right) > 0$ and the assumptions of \autoref{lemma_gaussian_estimation} hold.
\end{enumerate}
Then, the power of the proposed level-$\alpha$ test converges to $1$ uniformly-over-$p$ as $n \to \infty$.
\end{enumerate}
\end{theorem}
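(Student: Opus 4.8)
The plan is to treat parts (a) and (b) separately, in both cases reducing everything to the uniform-over-$p$ machinery of Section~\ref{sec:theory}. Write $s_p := \sqrt{2\,\tr(\bSigma_p^2)}$ for the scaling appearing in \autoref{thm:1} and \autoref{lemma_gaussian_estimation}, and let $F_{\zeta_p}$ and $Q_{\zeta_p}$ denote the df and quantile function of $\zeta_p$. By the remark following \autoref{thm:1}, the family $\{F_{\zeta_p}\}_{p\in\mathbb{N}}$ together with $F_\zeta$ satisfies the hypotheses of \autoref{lemma_quantile}; consequently, any uniform-over-$p$ convergence in distribution to $\zeta_p$ upgrades automatically to uniform (in $x$ and $p$) convergence of the dfs and to uniform-over-$p$ convergence of the $(1-\alpha)$-quantiles. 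The overarching idea for part (a) is to show that the bootstrap cut-off, once rescaled by $s_p$, converges in probability, uniformly-over-$p$, to $Q_{\zeta_p}(1-\alpha)$, and then to conclude via a uniform-over-$p$ Slutsky argument (\autoref{Slutsky}).

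\emph{Part (a).} Under $\text{H}_0$ we have $\bdelta_p=\mathbf{0}_p$, so $T_{n,p}=T_{n,p,0}$, and $\{T_{n,p}>\hat c_{n,p}(\alpha)\}=\{T_{n,p,0}/s_p>\hat c_{n,p}(\alpha)/s_p\}$. By \autoref{thm:1}(c), $T_{n,p,0}/s_p\Longrightarrow\zeta_p$ uniformly-over-$p$. I would prove $\hat c_{n,p}(\alpha)/s_p\stackrel{\P}{\longrightarrow}Q_{\zeta_p}(1-\alpha)$ uniformly-over-$p$ (as $n,M\to\infty$) by decoupling the Monte Carlo error from the statistical error. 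Conditionally on the data, $V_{n,p,1},\dots,V_{n,p,M}$ are i.i.d.\ with a conditional df $G_{n,p}$; the Dvoretzky–Kiefer–Wolfowitz inequality gives $\sup_x|\widehat G_M(x)-G_{n,p}(x)|=O_\P(M^{-1/2})$ with a constant independent of $p$ and of the data, so the sample $(1-\alpha)$-quantile $\hat c_{n,p}(\alpha)$ lies within $o_\P(1)$ of the conditional population quantile $G_{n,p}^{-1}(1-\alpha)$, uniformly-over-$p$, once $G_{n,p}$ is shown to be (uniformly) strictly increasing near level $1-\alpha$. For the statistical error I would show that the conditional df of $V_{n,p,1}/s_p$ converges, in probability and uniformly-over-$p$, to $F_{\zeta_p}$: this is the conditional-on-the-data strengthening of \autoref{lemma_gaussian_estimation} and rests on the same ingredients as its proof — the uniform-over-$p$ consistency of the normalized estimated spectrum, i.e.\ $\tr(\hat\bSigma_p^2)/\tr(\bSigma_p^2)\to1$ and $\hat\lambda_{n,p,r}/\sqrt{\tr(\bSigma_p^2)}\to\rho_{p,r}$ with a tail that is negligible uniformly-over-$p$, in probability — fed into the uniform-over-$p$ L\'{e}vy continuity theorem (\autoref{levy}) applied to the conditional characteristic function $\bt\mapsto\E[e^{\mathrm{i}\,\bt^\top V_{n,p,1}/s_p}\mid\text{data}]$. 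Combining the two errors with \autoref{lemma_quantile} gives $\hat c_{n,p}(\alpha)/s_p\stackrel{\P}{\longrightarrow}Q_{\zeta_p}(1-\alpha)$ uniformly-over-$p$, and then the uniform-over-$p$ Slutsky theorem together with continuity of $F_{\zeta_p}$ at $Q_{\zeta_p}(1-\alpha)$ yields $\P[T_{n,p,0}/s_p>\hat c_{n,p}(\alpha)/s_p]\to\P[\zeta_p>Q_{\zeta_p}(1-\alpha)]=\alpha$ uniformly-over-$p$.

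\emph{Part (b).} Since the four summands of $T_{n,p}$ have pairwise independent arguments, $\E[T_{n,p}]=\frac{(n_1-1)(n_2-1)}{n}\|\bdelta_p\|^2\sim\tau(1-\tau)\,n\,\|\bdelta_p\|^2$. A Hoeffding-type decomposition $T_{n,p}=\E[T_{n,p}]+L_{n,p}+D_{n,p}$, with $L_{n,p}$ the $\bdelta_p$-driven linear part and $D_{n,p}$ the degenerate part, gives $\Var(L_{n,p})=O(n\,\bdelta_p^\top\bSigma_p\bdelta_p)$ and $D_{n,p}=O_\P\!\big(\sqrt{\tr(\bSigma_p^2)}\,\big)$. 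In case (i), boundedness of $\bh_p$ forces $\|\bdelta_p\|\le B:=\sup_p\sup_{\bx,\by}\|\bh_p(\bx,\by)\|$ and, from the explicit form of $\hat\bSigma_p$, $\tr(\hat\bSigma_p)=O(B^2)$ and $\tr(\hat\bSigma_p^2)=O(B^4)$ deterministically; hence $\Var(T_{n,p})=O(nB^4)$ while $\E[T_{n,p}]\gtrsim n\inf_p\|\bdelta_p\|^2\to\infty$, so Chebyshev gives $\P[T_{n,p}\le K]=O(1/n)$ for every fixed $K$, uniformly-over-$p$, whereas a one-line Markov bound on $\#\{j\le M:V_{n,p,j}>t\}$ (whose conditional mean is $\le 8MB^4/t^2$) shows $\hat c_{n,p}(\alpha)=O_\P(1)$ uniformly-over-$p$ and uniformly in $M$; combining, the power tends to $1$ uniformly-over-$p$. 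In case (ii), divide by $s_p$: the signal condition $\|\bdelta_p\|^2\ge c\sqrt{\tr(\bSigma_p^2)}$ gives $\E[T_{n,p}]/s_p\gtrsim n\|\bdelta_p\|^2/\sqrt{\tr(\bSigma_p^2)}\to\infty$ uniformly-over-$p$, while $\bdelta_p^\top\bSigma_p\bdelta_p\le\lambda_{p,1}\|\bdelta_p\|^2\le\sqrt{\tr(\bSigma_p^2)}\,\|\bdelta_p\|^2$ yields $\mathrm{sd}(L_{n,p})/\E[T_{n,p}]=O(1/\sqrt n)$ and $D_{n,p}/\E[T_{n,p}]=O_\P\!\big(\sqrt{\tr(\bSigma_p^2)}/(n\|\bdelta_p\|^2)\big)=o_\P(1)$, both uniformly-over-$p$; since part (a) shows $\hat c_{n,p}(\alpha)/s_p$ is bounded in probability uniformly-over-$p$ (it converges to $Q_{\zeta_p}(1-\alpha)$, and $\sup_p Q_{\zeta_p}(1-\alpha)<\infty$ by tightness of $\{F_{\zeta_p}\}$), we conclude $\P[T_{n,p}/s_p>\hat c_{n,p}(\alpha)/s_p]\to1$ uniformly-over-$p$.

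\emph{Main obstacle.} The essential difficulty is entirely in part (a): forcing the parametric-bootstrap cut-off to behave correctly \emph{uniformly over $p$} despite its doubly-stochastic (data plus Monte Carlo) nature. One must control the conditional law of the simulated statistic given the data — in effect, show that the estimated normalized eigen-spectrum converges to $\{\rho_{p,r}\}_r$ in probability with a tail that is negligible uniformly-over-$p$, so that \autoref{levy} can be applied conditionally — and then transfer this to quantiles via \autoref{lemma_quantile} while the DKW bound absorbs the Monte Carlo fluctuation. By contrast, the Slutsky step and the power computations of part (b) are comparatively routine.
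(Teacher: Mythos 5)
Your proposal tracks the paper's overall strategy, and part~(b) follows the paper's Chebyshev computation essentially verbatim; your Markov bound on the exceedance count $\#\{j \le M : V_{n,p,j} > t\}$ is in fact a cleaner implementation of what the paper writes, since the paper's asserted identity $\E[\hat c_{n,p}(\alpha)^2 \vert \text{data}] = 2\tr(\hat\bSigma_p^2)$ is true for a single draw $V_{n,p,1}$ rather than for the sample quantile, and your exceedance argument is what actually delivers the needed tail bound. The genuine difference is in part~(a). The paper works throughout with the \emph{marginal} law $F_{n,p}$ of $V_{n,p,1}/\sqrt{2\tr(\bSigma_{n,p}^2)}$: it applies \autoref{lemma_gaussian_estimation} to get $F_{n,p}\Longrightarrow F_p$ uniformly-over-$p$, then \autoref{lemma_quantile} to upgrade this to uniform convergence of dfs and quantiles, and finally absorbs the Monte Carlo fluctuation via the exponential quantile-concentration inequality of Theorem~5.9 in \cite{junshao2003}, applied to $F_{n,p}$. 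You instead condition on the data, use DKW for the Monte Carlo error, and propose a conditional-on-data strengthening of \autoref{lemma_gaussian_estimation} for the statistical error. Your route is more work, since the conditional version is not a free corollary of the stated lemma; it does, however, follow from \autoref{estimation_auxiliary} plus a Markov step, because the characteristic-function bound $|\varphi_{n,p}(t)-\varphi_p(t)|^2\le t^2\,\tr((\hat\bSigma_p-\bSigma_p)^2)/\tr(\bSigma_p^2)$ used in that proof already holds pointwise in the data. What your route buys is logical clarity: the bootstrap draws are i.i.d.\ only conditionally on the data, so any i.i.d.-based quantile concentration (Shao's or DKW) must be applied conditionally before integrating; the paper applies Shao's inequality directly to the marginal $F_{n,p}$, which glosses over exactly this step and which your conditioning makes explicit. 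One small caveat in part~(b), case~(ii): you appeal to ``part~(a)'' for the uniform-over-$p$ tightness of $\hat c_{n,p}(\alpha)/\sqrt{2\tr(\bSigma_p^2)}$, but part~(a) is stated under $\text{H}_0$; this is harmless because \autoref{lemma_gaussian_estimation} and the quantile-convergence argument nowhere use $\text{H}_0$ and so carry over unchanged, but you should say so rather than cite part~(a) directly.
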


Here, we do not assume anything on the relative growth rates of $n$ and $p$, which is in line with \autoref{thm:1}. 
When $\bh_p(\bx, \by) = \bx - \by$, the required conditions hold provided $\inf_p \|\bdelta_p\| / \sqrt{\tr(\bSigma_p^2)} > 0$ under $\text{H}_1$ and the assumptions of \autoref{lemma_gaussian_estimation} are satisfied under $\text{H}_0$.  
When $\bh_p(\bx, \by) = (\bx - \by) / \|\bx - \by\|$ the required conditions for part (b) hold as long as $\inf_p \|\bdelta_p\| > 0$ under $\text{H}_1$. 


\vspace*{-0.25in}
\section{Comparison of performance} \label{sec:data_analysis}

In this section, we study the performance of the proposed test through an extensive simulation study and a real data analysis.
Our goal is to examine the finite-sample behavior of our test in terms of the empirical size and power under diverse settings, including both Gaussian and heavy-tailed distributions across varying covariance structures as well as~dimensions. 


Traditional multivariate procedures such as the Hotelling’s $T^2$ test are optimal when the dimension $p$ is small relative to the sample size. However, this test is not well-defined when $p > n$ due to the singularity of the sample covariance matrix. To address this, a variety of high-dimensional two-sample tests have been proposed including those by \cite{bai1996effect}, \cite{chen2010two} and \cite{cai2014two}, but they often perform poorly when $p$ is small or moderate. A dimension agnostic testing procedure should retain validity and good power in both low and high-dimensional settings. Our proposed test is designed with this goal and we empirically evaluate its performance across both~regimes.

When $p \ll n$, we first examine the performance of the proposed test in comparison with two classical procedures: the Hotelling’s $T^2$ test (say, HT2) and the \cite{choi1997approach} test (say, CM1997), along with the test proposed by \cite{zhang2020simple} (say, ZGZC2020). The HT2 test is the most powerful invariant test under multivariate normality, while the CM1997 test extends rank-based inference to multivariate settings. These tests achieve scale invariance by normalizing with the estimated covariance matrix which is not possible when $p > n$ due to its singularity. Attempts to regularize the covariance matrix may affect the asymptotic property of the test. Our test circumvents this issue by avoiding any such normalization. While this forfeits strict scale invariance, it offers a simple recipe, especially, in high-dimensional settings. The key question that we explore here is whether our proposed test can retain comparable (in fact, superior for some cases) performance even in scenarios where the normalized methods are theoretically~optimal.

In the high-dimensional regime, the proposed test and ZGZC2020 test are evaluated alongside several widely used tests from the literature.
We denote them as BS1996 (\cite{bai1996effect}), CLX2014 (\cite{cai2014two}), CQ2010 (\cite{chen2010two}), CLZ2014 (\cite{chen2014two,chen2019two}) and SD2008 (\cite{srivastava2008test}) tests.
These methods are implemented using the \texttt{highmean} package in \texttt{R} (see \cite{highmean}). 
Each method handles the high-dimensional setting differently (e.g., via projections, regularized sample covariance, etc.). Our goal is to situate the proposed test in this landscape and evaluate whether its simplicity also translates into good empirical performance.

We consider two versions of the proposed test: one with the identity kernel $\bh_p(\bx, \by) = \bx - \by$ (say, KCDG2025) and one with the spatial sign kernel $\bh_p(\bx, \by) = (\bx - \by)/\|\bx - \by\|$ (say, sKCDG2025). Each version was applied with two choices of estimated covariance matrix: non-tapered (say, 1) and tapered with $\beta=0.25$ (say, 2). 
So, we have four versions of our method, namely, KCDG2025$^1$ and KCDG2025$^2$, and sKCDG2025$^1$ and sKCDG2025$^2$.
R codes for the proposed method are available from \href{https://github.com/ritakarma/uniform-over-dim-2sample-location/}{this Github link}.

We consider three models involving both Gaussian and non-Gaussian distributions:
\begin{itemize}
    \item \textit{Model 1:} $\nu_{p,0} \equiv N_p({\bf 0}_p, \bSigma_p)$, the $p$-dimensional Gaussian distribution;
    \item \textit{Model 2:} $\nu_{p,0} \equiv t_{p,4}({\bf 0}_p, \bSigma_p)$, the heavy-tailed multivariate $t$ distribution with 4 degrees of freedom;~and
    \item \textit{Model 3:} $\nu_{p,0} \equiv t_{p,1}({\bf 0}_p, \bSigma_p)$, the multivariate Cauchy distribution.
\end{itemize}

\noindent
For the covariance matrix $\bSigma_p = (\sigma_{ij})$, we take three representative forms: (i) $\sigma_{ij} = 0.5 + 0.5 \mathbb{I}(i=j)$, (ii) $\sigma_{ij} = 0 + 1 \mathbb{I}(i=j)$ and (iii) $\sigma_{ij} = 0.75^{|i-j|}$ for $1 \leq i, j \leq p$.
We set the sample sizes to $n_1 = 40$ and $n_2 = 50$, and consider dimensions $p = 5, 25, 50$ and $100$. For each configuration, we define ${\bf a}_{p,1} = {\bf 0}_p$ and ${\bf a}_{p,2} = \delta {\bf h}_p$ with ${\bf h}_p = (1, 2, \ldots, p)^\top / \| (1, 2, \ldots, p)^\top \|$. By varying $\delta$, we generate different degrees of separation between the two population means, allowing the estimation of empirical power curves as a function of $\delta$. When $\delta = 0$ (the null hypothesis is true), the corresponding proportions estimate the empirical sizes of the~tests.

The distributions of Models 1 and 2 are either identical or very similar to those in Models 1 and 2 in subsection 3.1 of \cite{zhang2020simple}. The shifts are also expressed similarly, but the values of $\delta$, the dimension $p$ and the sample sizes $n_1$ and $n_2$ differ here. We choose these models because of their simplicity and convenience for comparisons.
We further consider Model 3 to investigate the performances of the ZGZC2020 and the other tests under significant departure from Gaussianity. In fact, the moments assumptions required for the theoretical validity of these tests fail to hold, while the proposed sKCDG2025 tests do not require the existence of such population moments.

\vspace*{-0.1in}
\subsection{Empirical analysis}
\label{sec:emp_analysis}

Table~\ref{tab:size_p=5} reports the estimated sizes for low-dimensional settings (with $p=5$), while Table~\ref{tab:size_p=100} presents results for the high-dimensional case (with $p=100$). Additional numerical results for $p=25$ and $p=50$ are given in Section \ref{sec:supplement_numerical} of the Supplementary.
\begin{table}[b!]
\centering
\caption{Estimated sizes at nominal level $\alpha = 5\%$ for the different tests for multivariate data based on $10,000$ independent replications for $p=5$ with $n_1 = 40$ and $n_2 = 50$.}
\label{tab:size_p=5}

\vspace{0.1in}
\begin{tabular}{cccccccccc}
\hline
Model & 1.i. & 2.i. & 3.i. & 1.ii. & 2.ii. & 3.ii. & 1.iii. & 2.iii. & 3.iii. \\ \hline
KCDG2025$^1$ & 0.0482 & 0.0418 & 0.0088 & 0.0440 & 0.0432 & 0.0028 & 0.0542 & 0.0494 & 0.0128 \\
KCDG2025$^2$ & 0.0524 & 0.0460 & 0.0102 & 0.0446 & 0.0438 & 0.0032 & 0.0552 & 0.0518 & 0.0142 \\
sKCDG2025$^1$ & 0.0490 & 0.0474 & 0.0510 & 0.0448 & 0.0536 & 0.0464 & 0.0548 & 0.0482 & 0.0542 \\
sKCDG2025$^2$ & 0.0506 & 0.0484 & 0.0524 & 0.0474 & 0.0538 & 0.0478 & 0.0548 & 0.0478 & 0.0546 \\
ZGZC2020 & 0.0522 & 0.0464 & 0.0102 & 0.0518 & 0.0508 & 0.0050 & 0.0566 & 0.0522 & 0.0158 \\
HT2 & 0.0516 & 0.0464 & 0.0178 & 0.0488 & 0.0490 & 0.0144 & 0.0518 & 0.0436 & 0.0164 \\
CM1997 & 0.0658 & 0.0640 & 0.0512 & 0.0584 & 0.0608 & 0.0558 & 0.0658 & 0.0598 & 0.0544 \\ \hline
\end{tabular}
\end{table}

In Table \ref{tab:size_p=5}, the estimated sizes of the ZGZC2020, HT2, KCDG2025, sKCDG2025 and CM1997 tests are presented for all three models. It should be noted that the HT2 test is an exact non-asymptotic test, while the other tests are asymptotic tests. It can be observed from this table that the estimated sizes for all the tests are close to the nominal level in Models 1 and 2. However, in Model 3, the estimated sizes of the ZGZC2020 test and the HT2 test are considerably different from the nominal level, while the estimated sizes of the sKCDG2025 test and the CM1997 test are close to the nominal level. This indicates that significant departure from Gaussianity affects the ZGZC2020 test and the HT2 test in a similar way, but
the sKCDG2025 test and the CM1997 test remain comparably unaffected.
\begin{table}[b!]
\centering
\caption{Estimated sizes at nominal level $\alpha = 5\%$ for the different tests for high-dimensional data based on $10,000$ independent replications for $p=100$ with $n_1 = 40$ and $n_2 = 50$.}
\label{tab:size_p=100}

\vspace{0.1in}
\begin{tabular}{cccccccccc}
\hline
Model & 1.i. & 2.i. & 3.i. & 1.ii. & 2.ii. & 3.ii. & 1.iii. & 2.iii. & 3.iii. \\ \hline
KCDG2025$^1$ & 0.0487 & 0.0497 & 0.0083 & 0.0090 & 0.0003 & 0.0000 & 0.0323 & 0.0180 & 0.0000 \\
KCDG2025$^2$ & 0.1870 & 0.1837 & 0.0900 & 0.0383 & 0.0260 & 0.0003 & 0.0587 & 0.0540 & 0.0053 \\
sKCDG2025$^1$ & 0.0499 & 0.0543 & 0.0516 & 0.0065 & 0.0060 & 0.0033 & 0.0297 & 0.0303 & 0.0324 \\
sKCDG2025$^2$ & 0.1627 & 0.1661 & 0.1610 & 0.0415 & 0.0409 & 0.0369 & 0.0503 & 0.0518 & 0.0579 \\
ZGZC2020    & 0.0543 & 0.0560 & 0.0090 & 0.0517 & 0.0043 & 0.0000 & 0.0613 & 0.0343 & 0.0003 \\
BS1996      & 0.0737 & 0.0703 & 0.0150 & 0.0577 & 0.0060 & 0.0000 & 0.0697 & 0.0437 & 0.0003 \\
CLX2014     & 0.0337 & 0.0313 & 0.0020 & 0.0620 & 0.0487 & 0.0020 & 0.0513 & 0.0403 & 0.0027 \\
CQ2010      & 0.0737 & 0.0707 & 0.0160 & 0.0590 & 0.0057 & 0.0000 & 0.0703 & 0.0437 & 0.0003 \\
CLZ2014     & 0.2037 & 0.2027 & 0.1410 & 0.0720 & 0.0580 & 0.0107 & 0.1590 & 0.1527 & 0.0493 \\
SD2008      & 0.0243 & 0.0250 & 0.0027 & 0.0503 & 0.0050 & 0.0000 & 0.0470 & 0.0233 & 0.0000 \\
\hline
\end{tabular}
\end{table}

In Table \ref{tab:size_p=100}, we can observe that the estimated sizes of the sKCDG2025$^1$ test (for case~i, namely, equi-correlation) and the sKCDG2025$^2$ test (the other choices) are close to the nominal level of 5\% across all models for $p=100$. 
The estimated sizes of the ZGZC2020 test are close to the nominal level in Models 1 and 2, but significantly smaller than the nominal level for the non-Gaussian distribution in Model 3. The estimated sizes of the BS1996, SD2008 and CQ2010 tests are usually slightly larger than the nominal level of 5\%
in the majority of cases in Models 1 and 2, but quite small compared to the nominal level in Model 3. The estimated sizes of the CLX2014 test are usually not far from the nominal level except in Model 3, where it is almost $0$. 
Irrespective of the models, 
the estimated sizes of the CLZ2014 test are always quite higher than the nominal~level. Similar observations hold for $p=25$ and $p=50$ (see Section \ref{sec:supplement_numerical} of the Supplementary for these results).

These observations indicate that the sKCDG2025 test has satisfactory sizes irrespective of Gaussianity or non-Gaussianity, or different values of $p$. The ZGZC2020 test has satisfactory sizes for the Gaussian as well as non-Gaussian distributions (when the moment assumptions for its validity are satisfied), but it fails to yield a proper size in case the sample arises from a non-Gaussian distribution (which does not satisfy the moment conditions for its theoretical validity). The estimated sizes of the other tests are generally not close to the nominal level compared to those of the ZGZC2020 test and the sKCDG2025 test.
Overall, the sKCDG2025 test demonstrates remarkable stability for the empirical size and remains unaffected by increasing the data dimension, or departure from Gaussianity.


Next, we study the estimated  powers of these tests. We drop the CLZ2014 test from our power comparison since it is improperly sized. As mentioned before, by varying the values of $\delta$, 
we generate estimated power curves of the tests. 
For the nine total combinations of the three models (i.e., 1, 2 and 3) and three choices of $\bSigma$ (i.e., i, ii and iii), we get a total of nine plots, which we present together in \autoref{fig:fig_p=5}. 
In the heading of each plot, the column corresponds to one of the models, while the choices of $\bSigma$ are mentioned across the rows. 
Within each plot, the estimated power curves are plotted against varying values of $\delta$.

\begin{figure}
	\centering
	\includegraphics[width=0.95\textwidth]{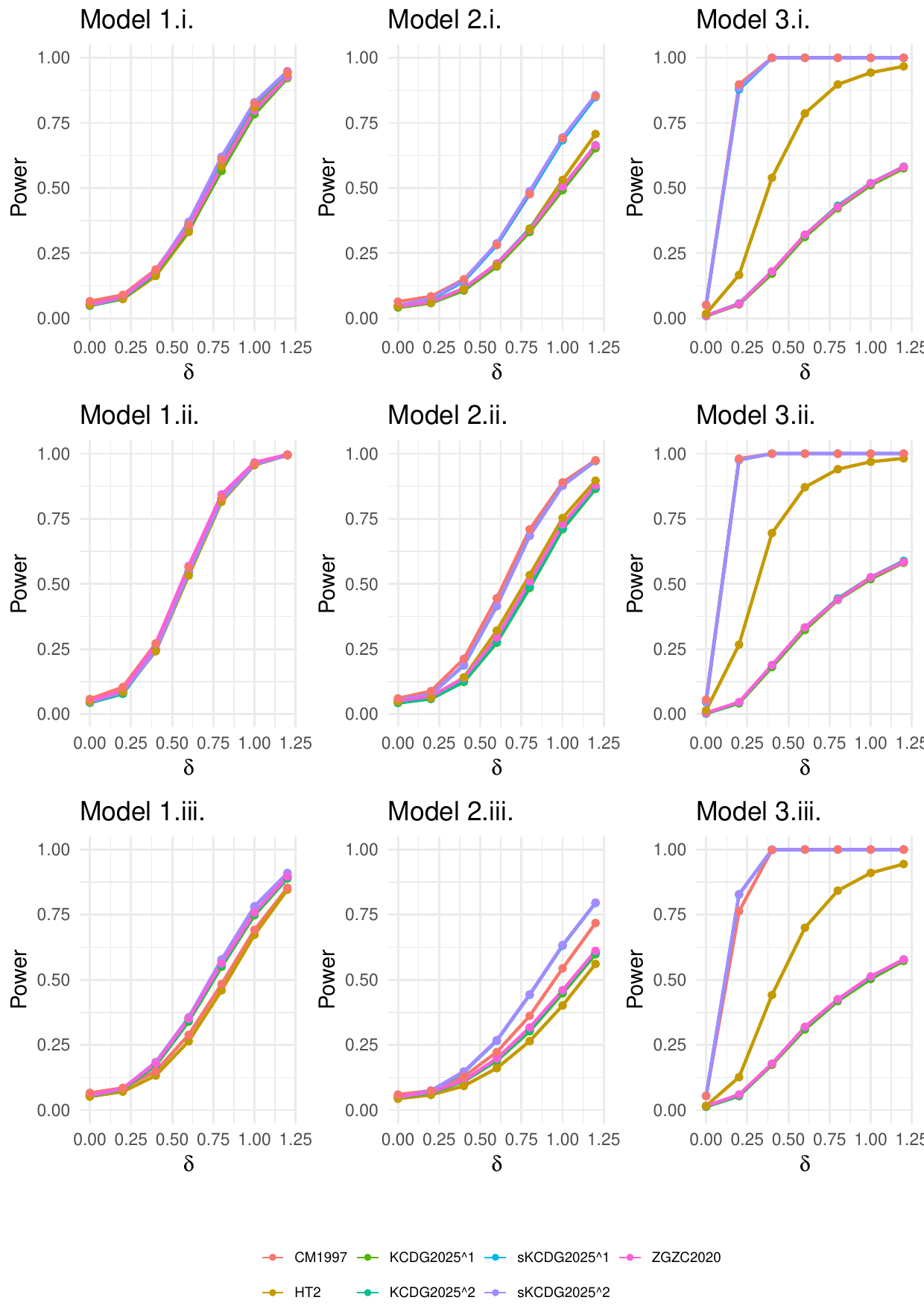}
	\caption{Estimated powers at nominal level $\alpha = 5\%$ based on $10,000$ independent replications at $p=5$ for $n_1 = 40$ and $n_2 = 50$.}
	\label{fig:fig_p=5}
\end{figure}

In \autoref{fig:fig_p=5} (with $p=5$), it can be observed that the un-normalized tests ZGZC2020 and sKCDG2025 have higher estimated power than their normalized counterparts, namely, the HT2 test and the CM1997 test, over the entire range of $\delta$. In Model 1, although the underlying distribution is Gaussian, both the ZGZC2020 test and the sKCDG2025 test have higher estimated power than the HT2 test. This is in spite of the fact that the HT2 test is an exact non-asymptotic test for Gaussian distributions and it is the most powerful invariant test. 
Generally, the estimated power curve of the sKCDG2025 test is either higher than or coincides with that of the ZGZC2020 test. However, the ZGZC2020 test has a slightly higher power than the sKCDG2025 test in Model 1.


\begin{figure}
	\centering
	\includegraphics[width=0.95\textwidth]{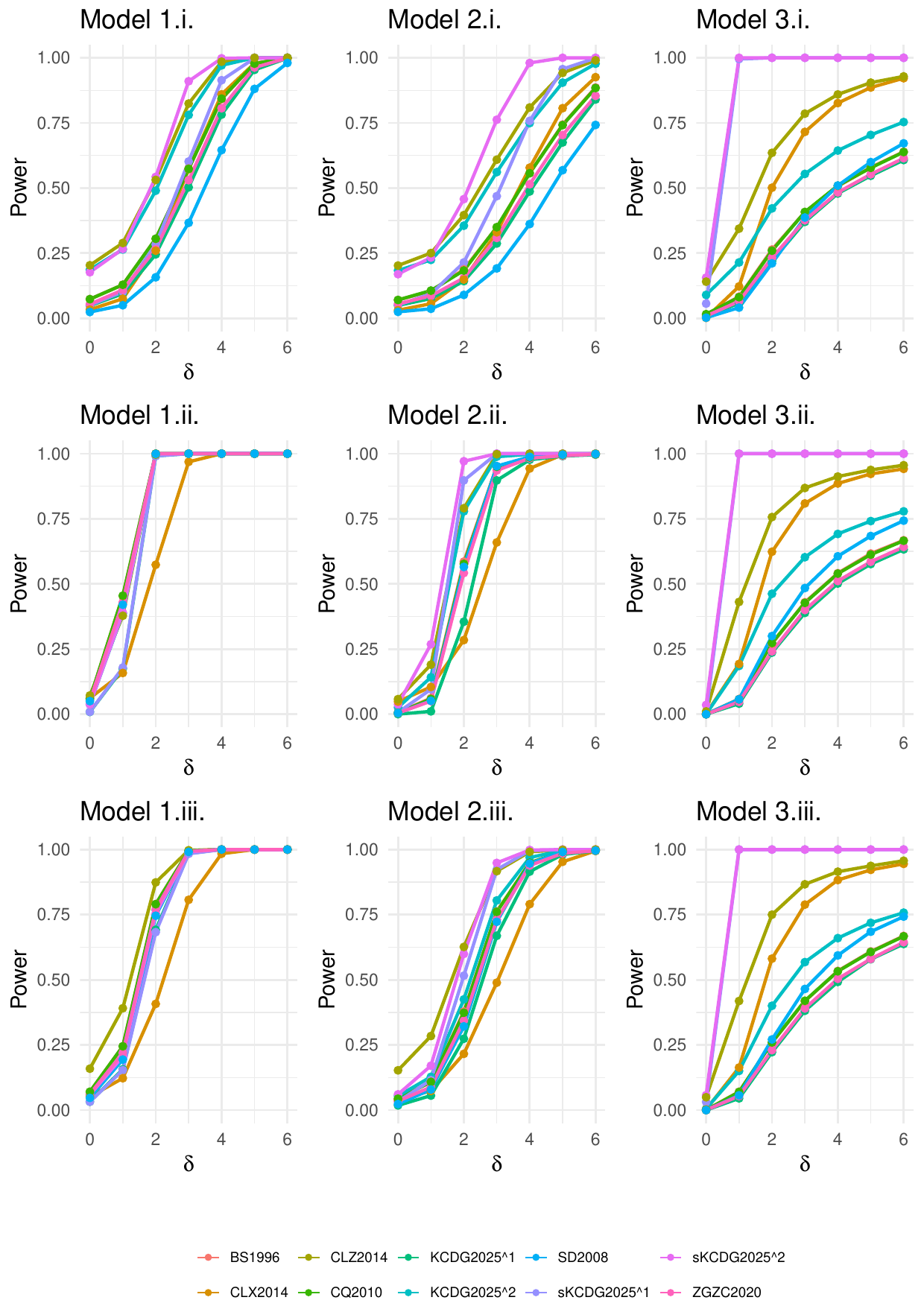}
	\caption{Estimated powers at nominal level $\alpha = 5\%$ based on $10,000$ independent replications at $p=100$ for $n_1 = 40$ and $n_2 = 50$.}
	\label{fig:fig_p=100}
\end{figure}

From the plots in \autoref{fig:fig_p=100} (with $p=100$), we observe that the estimated power curve of the KCDG2025 test is almost uniformly higher than all the other tests, except the BS1996 test and the CQ2010 test for small to moderate values of $\delta$. This holds irrespective of Gaussian, or non-Gaussian distributions. As expected, the difference between the estimated powers of the sKCDG2025 test and the best performing tests among the rest is quite narrow for the Gaussian distribution. For small to moderate values of $\delta$, the estimated powers of the KCDG2025 test is also close to those of the BS1996 test and the CQ2010 test.

For $\delta = 0$ (corresponds to the null hypothesis), the estimated powers of the BS1996 test and the CQ2010 test are 
noticeably higher than the nominal level of 5\%.
This perhaps contributes to the head start of their power curves against the KCDG2025 test for small $\delta$, and vanishes gradually as $\delta$ becomes larger (also see \autoref{tab:size_p=100}). The difference between the power curves of the KCDG2025 test and the other tests increases in the non-Gaussian distribution for Model 2. However, in Model 3, the difference between the powers of the KCDG2025 test and all other tests is quite stark. The ZGZC2020 test performs well in Models 1 and 2, but not in Model 3. Additional power curves for $p = 25$ and $p=50$ are provided in Section \ref{sec:supplement_numerical} of the Supplementary. From \autoref{fig:fig_p=100} and also the plots given in the Supplementary, we observe that the powers of all the tests decrease as $p$ increases, and the difference between the power of the KCDG2025 test and the other tests increases with~$p$.

These observations clearly validate the utility of the sKCDG2025 test irrespective of 
the underlying distribution. 
Whether the magnitude of $p$ is similar to the sample size or considerably larger, the KCDG2025 test is observed to perform well and almost uniformly outperforms all other tests if one accounts for the proper size of the tests at the nominal level of $5\%$. Moreover, for a non-Gaussian distribution (when the moment conditions for the validity of other tests do not hold), the difference between the performances of the KCDG2025 tests and all other tests becomes clearly significant.
In fact, the differences in power between the sKCDG2025 test and other tests widen when we increase the value of $p$ from $5$ to $100$, which further reinforces its superior performance.
These results indicate that the sKCDG2025 test successfully balances robustness, validity and power across both classical multivariate as well as high-dimensional situations. In fact, we observe that sKCDG2025$^2$ has marginally better power than sKCDG2025$^1$ in some~cases.

\subsection{Analysis of human colon tissue data} \label{subsec:realdata}

We analyze the colon data available from \href{http://genomics-pubs.princeton.edu/oncology/affydata/index.html}{this link}. The colon data was first analyzed in \cite{alon1999broad}, and later in other papers including \cite{zhang2020simple}. It contains the expression values of 2000 genes from 62 samples of human colon tissue, 22 are from normal tissue and 40 are from tumor tissue. We are interested in testing whether the gene expression values differ between normal and tumor tissue. So, we have $p = 2000$ and $n_1 = 40, n_2 = 22$. Let us consider the tests of the previous section that belong to the high-dimensional regime. The p-values obtained are presented in the first row of \autoref{tab:pvalues}. Except for the SD2008 test, all p-values are notably~small.

\begin{table}[t!]
\centering
\caption{First row: p-values from the colon data with $p = 2000$. Second row: Average p-values from the colon data, averaged over 50 blocks with $p = 40$.}
\label{tab:pvalues}

\vspace{0.1in}
\resizebox{1\textwidth}{!}{
\begin{tabular}{cccccccccc}
\hline
ZGZC2020   &BS1996      &CLX2014     &CQ2010   &SD2008 &KCDG2025$^1$   &KCDG2025$^2$   &sKCDG2025$^1$   &sKCDG2025$^2$  \\\hline
0.000625 &0.000000 &0.000000 &0.000000 &0.270284 &0.001300 &0.000000 &0.000800 &0.000000  \\\hline
0.039675 &0.039190 &0.080385 &0.031240  &0.158169 &0.030928 &0.020946 &0.016822 &0.010854  \\\hline
\end{tabular}
}
\end{table}

Next, we are interested in observing the p-values of the tests when the dimension $p$ is closer in magnitude to either of the sample sizes $n_1$ and $n_2$. To implement this, we partition the data matrix by considering blocks of $40$ consecutive columns. This way, we have $50$ data matrices each with $p = 40$, $n_1 = 40$ and $n_2 = 22$. So, here the magnitude of $p$ is comparable to the sample size.

\begin{figure}[b!]
	\centering
	\includegraphics[width = 14cm]{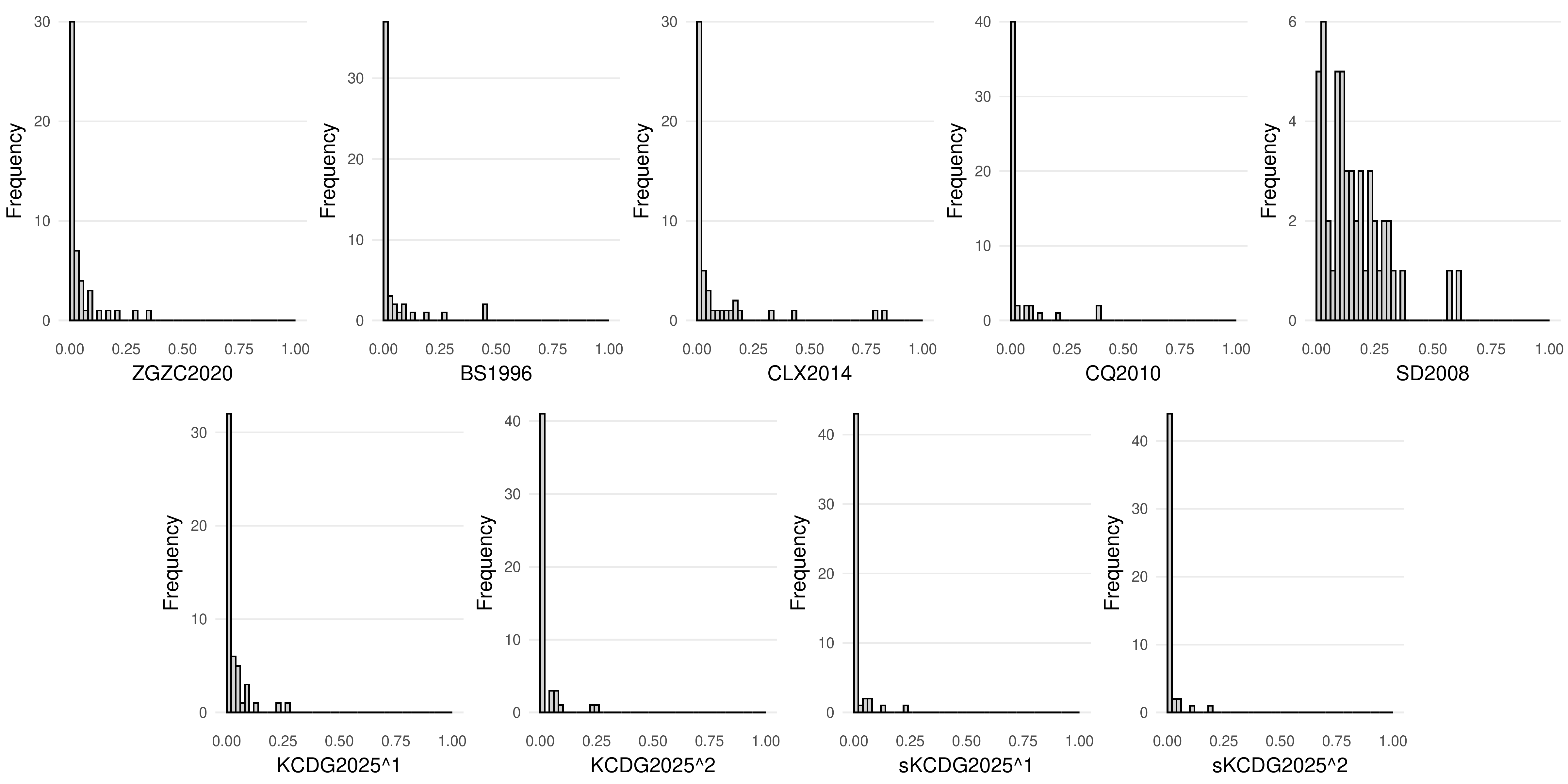}
	\caption{Histograms of p-values of the tests in the human colon tissue data.}
	\label{fig:fig_histogram}
\end{figure}

We present the histogram of the 50 p-values for each of the tests in \autoref{fig:fig_histogram}. The averages of the p-values are presented in the second row of \autoref{tab:pvalues}. 
Although the p-values of the sKCDG2025 tests were quite small in the first row of this table, most of the other p-values were also nearly zero and it seems difficult to conclude that one of the tests is better than another test.
However, as the value of $p$ becomes lower and comparable to the magnitude of $n_1$ and $n_2$ (see the second row of this table), the p-values of all the tests increase. In this case, it can be seen that the sKCDG2025 tests have the lowest average p-value. In \autoref{fig:fig_histogram} too, it can be observed that the p-values of the sKCDG2025 tests are concentrated closest to zero which gives us a strong evidence towards rejecting the~null. 


\section{Conclusions} \label{sec:conclusion}

In this work, we have proposed a notion of convergence for fixed dimensional functions of potentially high-dimensional data, that holds uniformly over the dimension. We used this notion to obtain a general class of kernel-based two-sample location tests applicable to data of any dimension. For two specific kernel choices, we obtained the KCDG2025 test and the spatial sign–based sKCDG2025 test. An extensive simulation study and a real data analysis show that the KCDG2025 test outperforms several popular tests for high-dimensional data under both Gaussian and non-Gaussian models, while the sKCDG2025 test surpasses the classical Hotelling's $T^2$ test in certain low-dimensional Gaussian as well as non-Gaussian~settings.

We also demonstrated the applicability of uniform-over-dimension convergence to several standard results on convergence in distribution. Many other results, such as U-statistic asymptotics and other uniform-over-dimension hypothesis tests remain to be explored. We plan to investigate  this in a future work.

\vspace*{-0.2in}
\section*{Appendix: Proofs of mathematical results}

The following lemma will be used in what follows. The proofs of all the lemmas introduced in this section are given in the Supplementary.

\begin{lemma}\label{fourth_moment}
Suppose $\bz = (z_1, \ldots, z_m)^\top$ satisfies the conditions imposed on $\bz_{ij}$ outlined in assumption \ref{cond3}.  
For any $p \times m$ matrix $\bA$, we have
$\Var(\|\bA\bz\|^2) \leq (5 + \Delta) \tr([\bA \bA^\top]^2 )$, where $\Delta$ is as defined in condition \ref{cond3}.
\end{lemma}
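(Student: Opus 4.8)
I want to bound $\Var(\|\bA\bz\|^2)$ where $\|\bA\bz\|^2 = \bz^\top \bA^\top\bA \bz = \bz^\top \bB \bz$ with $\bB = \bA^\top \bA = (b_{kl})_{m\times m}$ symmetric positive semidefinite. Expanding, $\bz^\top\bB\bz = \sum_{k,l} b_{kl} z_k z_l$, so I would compute $\E[\bz^\top\bB\bz] = \sum_k b_{kk}\E[z_k^2] = \tr(\bB)$ (using $\E z_k^2 = 1$ and $\E z_k = 0$), and then
\[
\Var(\bz^\top\bB\bz) = \sum_{k,l,k',l'} b_{kl} b_{k'l'}\bigl(\E[z_k z_l z_{k'} z_{l'}] - \E[z_k z_l]\E[z_{k'}z_{l'}]\bigr).
\]
The fourth-moment tensor $\E[z_k z_l z_{k'} z_{l'}]$ is nonzero only when the indices pair up (by the product-of-expectations assumption in \ref{cond3} together with $\E z_k = 0$): either all four equal, or they split into two matched pairs. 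The diagonal case contributes $\sum_k b_{kk}^2(\E z_k^4 - 1) = (2+\Delta)\sum_k b_{kk}^2$ after subtracting the $\E[z_k^2]^2 = 1$ term. The off-diagonal pairings give $\sum_{k\neq l} (b_{kk}b_{ll} + b_{kl}^2 + b_{kl}b_{lk})$ from $\E[z_k z_l z_{k'}z_{l'}]$, minus $\sum_{k\neq l} b_{kk}b_{ll}$ from the subtracted mean-square term; the $b_{kk}b_{ll}$ parts cancel, leaving $2\sum_{k\neq l} b_{kl}^2$ (using symmetry $b_{kl}=b_{lk}$).

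**Assembling the bound.** Combining, $\Var(\bz^\top\bB\bz) = (2+\Delta)\sum_k b_{kk}^2 + 2\sum_{k\neq l} b_{kl}^2 = 2\sum_{k,l} b_{kl}^2 + \Delta\sum_k b_{kk}^2 = 2\tr(\bB^2) + \Delta\sum_k b_{kk}^2$, since $\sum_{k,l} b_{kl}^2 = \tr(\bB^\top\bB) = \tr(\bB^2)$ by symmetry. It then remains to bound $\sum_k b_{kk}^2 \le \tr(\bB^2)$: because $\bB$ is positive semidefinite, each diagonal entry satisfies $b_{kk} \le \lambda_{\max}(\bB) \le \tr(\bB)$ is too crude, so instead I would use $b_{kk}^2 \le \sum_l b_{kl}^2$ (the $k$-th diagonal of $\bB^2$ is $\sum_l b_{kl}b_{lk} = \sum_l b_{kl}^2 \ge b_{kk}^2$), hence $\sum_k b_{kk}^2 \le \sum_k\sum_l b_{kl}^2 = \tr(\bB^2)$. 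Substituting back with $\bB = \bA^\top\bA$ so that $\tr(\bB^2) = \tr([\bA^\top\bA]^2) = \tr([\bA\bA^\top]^2)$ gives $\Var(\|\bA\bz\|^2) = 2\tr([\bA\bA^\top]^2) + \Delta\sum_k b_{kk}^2 \le (2+\Delta)\tr([\bA\bA^\top]^2) \le (5+\Delta)\tr([\bA\bA^\top]^2)$, which is even slightly stronger than claimed (the looser constant $5$ presumably accommodates a less careful accounting, e.g.\ if one only assumes $|\E z_k^4| \le 3 + |\Delta|$ and bounds crudely).

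**Main obstacle.** The only genuinely delicate point is the bookkeeping of which index-coincidence patterns survive in the fourth-moment tensor: I must invoke the hypothesis from \ref{cond3} that $\E(z_{l_1}^{\alpha_1}\cdots z_{l_q}^{\alpha_q}) = \prod \E(z_{l_j}^{\alpha_j})$ whenever $\sum\alpha_j \le 4$ and the $l_j$ are distinct, so that any term with an index appearing an odd number of times vanishes, and I must carefully separate the "all four indices equal'' case (coefficient $\E z^4 = 3+\Delta$) from the three "two-and-two'' pairings. Everything else is routine algebra. I would present the computation compactly, grouping terms by pairing pattern rather than writing out the full quadruple sum.
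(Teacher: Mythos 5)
Your proof follows essentially the same route as the paper: expand $\|\bA\bz\|^2 = \bz^\top\bB\bz$ with $\bB=\bA^\top\bA$, use the product-of-moments hypothesis in \ref{cond3} to identify which terms of the resulting quadruple sum survive (only the pairing patterns), and then bound via $\tr(\bB^2)$. The one difference is cosmetic: you compute the variance directly and let the $b_{kk}b_{ll}$ cross terms cancel, arriving at the exact identity $\Var(\bz^\top\bB\bz)=2\tr(\bB^2)+\Delta\sum_k b_{kk}^2$, whereas the paper first bounds $\E[\|\bA\bz\|^4]\le (5+\Delta)\tr(\bB^2)+[\tr(\bB)]^2$ and then subtracts $[\tr(\bB)]^2$.

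One small caution on your closing remark: the intermediate step $\Var\le(2+\Delta)\tr(\bB^2)$ requires $\Delta\ge 0$, since $\sum_k b_{kk}^2\le\tr(\bB^2)$ only helps after multiplying by a nonnegative $\Delta$. Condition \ref{cond3} fixes $\E z^4 = 3+\Delta$ and thus permits any $\Delta\ge -2$ (Rademacher coordinates give $\Delta=-2$). For $\Delta<0$ you should instead drop the nonpositive term $\Delta\sum_k b_{kk}^2$, giving $\Var\le 2\tr(\bB^2)\le(5+\Delta)\tr(\bB^2)$ because $5+\Delta\ge 3$. Either branch yields the stated $(5+\Delta)$ bound, so the lemma is correctly established; the only inaccuracy is the unqualified claim that $(2+\Delta)$ is always a valid (stronger) constant, and the accompanying speculation about why the paper uses $5+\Delta$.
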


\noindent
\begin{proof}[Proof of \autoref{thm:1}] \textbf{(a)} Under assumption \ref{cond4}, the claim follows from the classical\\ Lindeberg-Feller central limit theorem (CLT). Now, $\E[(\xi_p - \xi)^2] = \sum_{r=1}^p (\rho_{p,r} - \rho_r)^2 + \sum_{r=p+1}^\infty \rho_r^2$ which can be bounded above by $\sum_{r=1}^q (\rho_{p,r} - \rho_r)^2 + \sup_p\sum_{r=q+1}^p \rho_{p,r}^2 + \sum_{r=q+1}^\infty \rho_r^2$ for any $ q< p$. Under assumption \ref{cond5}, taking $\limsup$ as ${p \to \infty}$ and then letting $q \to \infty$ yields the claim.

\textbf{(b)} For real sequences $\{x_{n,p}\}_{n, p\in \mathbb{N}}$ and $\{x_{p}\}_{p\in\mathbb{N}}$, we say $x_{n,p} \to x_p$ uniformly-over-$p$ if $\sup_p |x_{n,p} - x_p| \to 0$ as $n \to \infty$. Since $\tr(\bA) \geq 0$ and $\tr(\bA \bB) \geq 0$ for any two symmetric non-negative definite matrices $\bA$ and $\bB$, we have
\begin{align*}
\left|\frac{\tr(\bSigma_{n,p}^2)}{\tr(\bSigma_{p}^2)} - 1\right| &= \left|\frac{\left(\frac{n_2^2}{n^2} - (1-\tau)^2\right)\tr(\bSigma_{p,1}^2) +  \left(\frac{n_1^2}{n^2} - \tau^2\right)\tr(\bSigma_{p,2}^2) +  \left(\frac{2n_1 n_2}{n^2} - 2\tau(1-\tau)\right)\tr(\bSigma_{p,1} \bSigma_{p,2})}{(1-\tau)^2\tr(\bSigma_{p,1}^2) + \tau^2 \tr(\bSigma_{p,2}^2) + 2\tau(1-\tau)\tr(\bSigma_{p,1}\bSigma_{p,2})} \right|\\
& \leq \left|\frac{n_2^2}{n^2} - (1-\tau)^2\right| \frac{1}{(1-\tau)^2} + \left|\frac{n_1^2}{n^2} - \tau^2\right| \frac{1}{\tau^2} + \left|\frac{2n_1 n_2}{n^2} - 2\tau(1-\tau)\right| \frac{1}{2\tau(1-\tau)}
\end{align*}
which goes to $0$ uniformly-over-$p$ as $n \to \infty$. By \autoref{Slutsky}, the given assertion is equivalent to showing
$T_{n,p} / \sqrt{2\tr(\bSigma_{n,p}^2)} \stackrel{D}{\longrightarrow} \zeta$ as $n\to \infty, p \to \infty$.

Define  $\tilde{\bh}_p\left( \bX_{p,i}, \bY_{p,j} \right)
= \bh_p\left( \bX_{p,i}, \bY_{p,j} \right) - \bdelta_p $, $\bbf(\bX_{p,i}) = \E[\tilde{\bh}_p(\bX_{p,i}, \bY_{p,1} \vert \bX_{p,i})]$ and $\bg(\bY_{p,j}) = \E[\tilde{\bh}_p(\bX_{p,1}, \bY_{p,j} \vert \bY_{p,j})]$. Define $\bar\bh_p\left( \bX_{p,i}, \bY_{p,j} \right)$ as in (\ref{hbar}). 
Observe that
\begin{align*}
& nn_1n_2T_{n,p,0} = \|\sum_{i=1}^{n_1} \sum_{j=1}^{n_2}(\bbf(\bX_{p,i}) + \bg(\bY_{p,j}))\|^2 - n_2^2\sum_{i=1}^{n_1} \|\bbf(\bX_{p,i})\|^2 - n_1^2\sum_{j=1}^{n_2} \|\bg(\bY_{p,j})\|^2 \\
& +   \sum_{i_1 \neq i_2}\sum_{j_1 \neq j_2} \bar{\bh}_p(\bX_{p,i_1}, \bY_{p, j_1})^\top \bar{\bh}_p(\bX_{p,i_2}, \bY_{p, j_2}) +  2  \sum_{i_1 \neq i_2}\sum_{j_1 \neq j_2} \bar{\bh}_p(\bX_{p,i_1}, \bY_{p, j_1})^\top (\bbf(\bX_{p,i_2}) + \bg(\bY_{p,j_2}))\\
&- 2(n_1 + n_2 - 1) \sum_{i=1}^{n_1}\sum_{j=1}^{n_2} \bbf(\bX_{p,i})^\top \bg(\bY_{p,j}) - n_2 \sum_{i_1 \neq i_2} \bbf(\bX_{p,i_1})^\top \bbf(\bX_{p,i_2}) - n_1 \sum_{j_1\neq j_2} \bg(\bY_{p,j_1})^\top \bg(\bY_{p,j_2}).
\end{align*}
Now, $\E[n_2^2\sum_{i=1}^{n_1} \|\bbf(\bX_{p,i})\|^2 + n_1^2\sum_{j=1}^{n_2} \|\bg(\bY_{p,j})\|^2] = nn_1n_2 \tr(\bSigma_{n,p})$. Using \autoref{fourth_moment} and Markov's inequality, it can be shown that 
$$\left[\frac{1}{nn_1n_2}\left(n_2^2\sum_{i=1}^{n_1} \|\bbf(\bX_{p,i})\|^2 + n_1^2\sum_{j=1}^{n_2} \|\bg(\bY_{p,j})\|^2\right) - \tr(\bSigma_{n,p})\right] / \sqrt{2\tr(\bSigma_{n,p}^2)} \stackrel{\P}{\longrightarrow} 0$$ as $n \to \infty, p\to \infty$. We shall show that 
$$ \left[\frac{1}{nn_1n_2}\|\sum_{i=1}^{n_1} \sum_{j=1}^{n_2}(\bbf(\bX_{p,i}) + \bg(\bY_{p,j}))\|^2 - \tr(\bSigma_{n,p})\right] / \sqrt{2\tr(\bSigma_{n,p}^2)} \stackrel{D}{\longrightarrow} \zeta $$
as $n \to \infty, p \to \infty$, which by usual Slutsky's theorem will imply that 
$$ \left[\|\sum_{i=1}^{n_1} \sum_{j=1}^{n_2}(\bbf(\bX_{p,i}) + \bg(\bY_{p,j}))\|^2 - n_2^2\sum_{i=1}^{n_1} \|\bbf(\bX_{p,i})\|^2 - n_1^2\sum_{j=1}^{n_2} \|\bg(\bY_{p,j})\|^2\right]/ \left[nn_1n_2 \sqrt{2\tr(\bSigma_{n,p}^2)}\right] \stackrel{D}{\longrightarrow} \zeta $$
as $n \to \infty, p \to \infty$. Now, we show that all the terms other than $\|\sum_{i=1}^{n_1} \sum_{j=1}^{n_2}(\bbf(\bX_{p,i}) + \bg(\bY_{p,j}))\|^2$, $n_2^2\sum_{i=1}^{n_1} \|\bbf(\bX_{p,i})\|^2$ and $n_1^2\sum_{j=1}^{n_2} \|\bg(\bY_{p,j})\|^2$ converge to $0$ uniformly-over-$p$ in probability after dividing by $\sqrt{\tr(\bSigma_{n,p}^2)}$. The expectation of each of these terms is $0$. It is enough to verify that the expectation of the squares of each of these terms converge to $0$ uniformly-over-$p$. Consider the term 
$\sum_{i_1 \neq i_2}\sum_{j_1 \neq j_2} \bar{\bh}_p(\bX_{p,i_1}, \bY_{p, j_1})^\top \bar{\bh}_p(\bX_{p,i_2}, \bY_{p, j_2})$, whose~square~is
$$\sum_{i_1 \neq i_2}\sum_{j_1 \neq j_2}\sum_{i_3 \neq i_4}\sum_{j_3 \neq j_4} \bar{\bh}_p(\bX_{p,i_1}, \bY_{p, j_1})^\top \bar{\bh}_p(\bX_{p,i_2}, \bY_{p, j_2}) \bar{\bh}_p(\bX_{p,i_3}, \bY_{p, j_3})^\top \bar{\bh}_p(\bX_{p,i_4}, \bY_{p, j_4}).$$
After taking expectation, only the terms for which $|\{i_1, i_2, i_3, i_4\}| \leq 2$ and $|\{j_1, j_2, j_3, j_4\}| \leq 2$ survive and there are $O(n_1^2 n_2^2)$ such terms. Using the Cauchy-Schwartz (CS) inequality, each such term can be bounded above by
$$\left(\E[\|\bar{\bh}_p(\bX_{p,i_1}, \bY_{p, j_1})\|^4] \E[\|\bar{\bh}_p(\bX_{p,i_2}, \bY_{p, j_2})\|^4] \E[\|\bar{\bh}_p(\bX_{p,i_3}, \bY_{p, j_3})\|^4] \E[\|\bar{\bh}_p(\bX_{p,i_4}, \bY_{p, j_4})\|^4]\right)^{1/4}.$$
By condition \ref{cond2}, this becomes bounded uniformly-over-$p$ after dividing by $\tr(\bSigma_{p}^2)$. Since $\tr(\bSigma_{n,p}^2) / \tr(\bSigma_{p}^2)$ is bounded uniformly-over-$p$ for large $n$, $$\sup_p\E\left[\left(\frac{1}{nn_1n_2}\sum_{i_1 \neq i_2}\sum_{j_1 \neq j_2} \bar{\bh}_p(\bX_{p,i_1}, \bY_{p, j_1})^\top \bar{\bh}_p(\bX_{p,i_2}, \bY_{p, j_2})/ \sqrt{\tr(\bSigma_{n,p}^2)}\right)^2\right] = O\left(1 / n^2\right),$$ which shows that the concerned term goes to $0$ in probability uniformly-over-$p$. The other terms can be handled similarly using \autoref{fourth_moment} and \ref{cond1}-\ref{cond3}. Using Slutsky's theorem, we can then get $T_{n,p,0} / \sqrt{2\tr(\bSigma_{n,p}^2)} \stackrel{D}{\longrightarrow} \zeta$ as $n\to \infty, p \to \infty$. Therefore, all we need to show is that
$$ \left[\|\bw_{n,p}\|^2 - \tr(\bSigma_{n,p})\right]/\sqrt{2\tr(\bSigma_{n,p}^2)} \stackrel{D}{\longrightarrow} \zeta \quad \text{as $n\to \infty, p \to \infty$},$$
where 
$$\bw_{n,p} = \frac{1}{\sqrt{nn_1n_2}}\left(\sum_{i=1}^{n_1} \sum_{j=1}^{n_2}(\bbf(\bX_{p,i}) + \bg(\bY_{p,j}))\right) = \sqrt{\frac{n_1n_2}{n}} \left(\frac{1}{n_1}\sum_{i=1}^{n_1} \bbf(\bX_{p,i}) - \frac{1}{n_2}\sum_{j=1}^{n_2} (-\bg(\bY_{p,j}))\right).$$
The remainder of the proof establishes this claim. Theorem 1 of \cite{zhang2021hetero} handles a similar expression, with $\bbf(\bX_{p,i})$ and $(-\bg(\bY_{p,j}))$ as observations from the two groups. Under condition \ref{cond4}, $\zeta \sim N(0,1)$ and the claim follows from the proof of Theorem 1, part (b) of \cite{zhang2021hetero}, which builds on the work of \cite{chen2010two}. 

We now establish the claim under condition \ref{cond5}. Let $\bZ_i = \sqrt{\frac{n_2}{nn_1}}\bbf(\bX_{p,i})$ for $1 \leq i \leq n_1$ and $\bZ_i = \sqrt{\frac{n_1}{nn_2}}\bg(\bY_{p,i})$ for $n_1+1\leq i \leq n$ (the dependence on $p$ is implicit here). The $\bZ_i$'s are independent, $\E[\bZ_i] = {\bf 0}_p$, $\bw_{n,p} = \sum_{i=1}^n\bZ_i$ and $\E[\|\bw_{n,p}\|^2] = \tr(\bSigma_{n,p})$. Let $\bu_{p,1}, \ldots, \bu_{p,p}$ be the mutually orthogonal eigenvectors of $\bSigma_{p}$ associated with the eigenvalues $\lambda_{p,1}\geq \cdots \geq \lambda_{p,p}$. Define $\xi_{n,p,r} = \bu_{p,r}^\top \bw_{n,p}$ for $1 \leq r \leq p$. We have $\|\bw_{n,p}\|^2 = \sum_{r=1}^p \xi_{n,p,r}^2$ and $\E[\xi^2_{n,p,r}] = \bu_{p,r}^\top \bSigma_{n,p} \bu_{p,r} = \lambda_{n,p,r}$ (say). Fix $t \in \mathbb{R}$ and $\epsilon> 0$. Let $\varphi_{n,p}(t)$, $\varphi_{n,p}^{(q)}(t)$, $\varphi_{\tilde{\zeta}_p^{(q)}}(t)$, $\varphi_{\zeta^{(q)}}(t)$ and $\varphi_{\zeta}(t)$ denote the characteristic functions of 
$\sum_{r=1}^p (\xi_{n,p,r}^2 - \lambda_{n,p,r}) / \sqrt{\tr(\bSigma_{n,p}^2)}$, $\sum_{r=1}^q (\xi_{n,p,r}^2 - \lambda_{n,p,r}) /\sqrt{\tr(\bSigma_{n,p}^2)}$, $\sum_{r=1}^q\rho_{p,r}(W_r - 1) /\sqrt{2}$, $\sum_{r=1}^q\rho_{r}(W_r - 1) / \sqrt{2}$ and $\sum_{r=1}^\infty\rho_{r}(W_r - 1) / \sqrt{2}$, respectively, for any $q \leq p$. Our target is to show $|\varphi_{n,p}(t) - \varphi_{\zeta}(t)| \to 0$ as $n\to \infty, p \to \infty$. Consider the bound
\begin{align}
|\varphi_{n,p}(t) - \varphi_{\zeta}(t)| \leq & ~|\varphi_{n,p}(t) - \varphi_{n,p}^{(q)}(t)| + |\varphi_{n,p}^{(q)}(t) - \varphi_{\tilde{\zeta}_p^{(q)}}(t)|\nonumber\\ & \qquad+ |\varphi_{\tilde{\zeta}_p^{(q)}}(t) - \varphi_{\zeta^{(q)}}(t)| + |\varphi_{\zeta^{(q)}}(t) - \varphi_{\zeta}(t)|. \label{thm1:eq1}
\end{align}  
Define $i=\sqrt{-1}$. We shall use the inequality $[\E|e^{iX} - e^{iY}|]^2 \leq E[(X-Y)^2]$ for rvs $X$ and $Y$. 
\begin{lemma}\label{thm:1_lemma}
Under assumptions \ref{cond1}-\ref{cond3} and \ref{cond5}, for any $t\in\mathbb{R}$ and $\epsilon >0$, there exist $Q, N \in \mathbb{N}$ such that  $|\varphi_{n,p}(t) - \varphi_{n,p}^{(q)}(t)| < \epsilon$ for all $q \geq Q$, $n \geq N$ and $ p \geq  q$.
\end{lemma}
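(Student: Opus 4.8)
The plan is to reduce the claim to a single second–moment estimate. Put $A_{n,p,q} = \sum_{r=q+1}^{p}\xi_{n,p,r}^{2}$; the two random variables whose characteristic functions are $\varphi_{n,p}(t)$ and $\varphi_{n,p}^{(q)}(t)$ are defined on the same probability space and differ exactly by the centered quantity $\big(A_{n,p,q}-\E[A_{n,p,q}]\big)/\sqrt{\tr(\bSigma_{n,p}^{2})}$. Applying the quoted inequality $[\E|e^{iU}-e^{iV}|]^{2}\le\E[(U-V)^{2}]$ with $U,V$ the corresponding $t$-scaled variables gives
\[
|\varphi_{n,p}(t)-\varphi_{n,p}^{(q)}(t)|\;\le\;|t|\,\sqrt{\Var(A_{n,p,q})}\Big/\sqrt{\tr(\bSigma_{n,p}^{2})},
\]
so (the case $t=0$ being trivial) it suffices to make $\Var(A_{n,p,q})/\tr(\bSigma_{n,p}^{2})$ small, uniformly over $p\ge q$, by taking first $q$ and then $n$ large. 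Writing $\mathbf{P}_q=\sum_{r=q+1}^{p}\bu_{p,r}\bu_{p,r}^{\top}$ for the spectral projection of $\bSigma_p$ onto its trailing eigenspace, we have $A_{n,p,q}=\bw_{n,p}^{\top}\mathbf{P}_q\bw_{n,p}$ and $\E[A_{n,p,q}]=\tr(\mathbf{P}_q\bSigma_{n,p})$, where (for $p=q$, $\mathbf{P}_q=0$ and there is nothing to prove).

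Next I would use the factor model in \ref{cond3}: since $\bbf(\bX_{p,i})=\bGamma_{p,1}\bz_{p,i1}$ and $\bg(\bY_{p,j})=\bGamma_{p,2}\bz_{p,j2}$, one can write $\bw_{n,p}=\sum_{i=1}^{n}\bZ_i$ with the $\bZ_i$ independent, $\bZ_i=c_i\bGamma_{p,k(i)}\bz_{p,i}$ (with $k(i)=1$, $c_i=\sqrt{n_2/(nn_1)}$ for $i\le n_1$ and $k(i)=2$, $c_i=\sqrt{n_1/(nn_2)}$ for $i>n_1$), and $\sum_i c_i^{2}\bSigma_{p,k(i)}=\bSigma_{n,p}$. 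Split $A_{n,p,q}=\sum_i\bZ_i^{\top}\mathbf{P}_q\bZ_i+\sum_{i\ne i'}\bZ_i^{\top}\mathbf{P}_q\bZ_{i'}=:A_1+A_2$; by independence and $\E[\bZ_i]=\mathbf{0}_p$ one gets $\Cov(A_1,A_2)=0$, hence $\Var(A_{n,p,q})=\Var(A_1)+\Var(A_2)$. For $A_1=\sum_i\|\mathbf{P}_q\bZ_i\|^{2}$, a sum of independent terms, \autoref{fourth_moment} with $\bA=c_i\mathbf{P}_q\bGamma_{p,k(i)}$ bounds each summand by $(5+\Delta)c_i^{4}\tr[(\mathbf{P}_q\bSigma_{p,k(i)}\mathbf{P}_q)^{2}]$; combining with the Loewner bounds $\bSigma_{p,1}\preceq(n/n_2)\bSigma_{n,p}$, $\bSigma_{p,2}\preceq(n/n_1)\bSigma_{n,p}$ (from $\bSigma_{n,p}\succeq\tfrac{n_2}{n}\bSigma_{p,1}$ and $\succeq\tfrac{n_1}{n}\bSigma_{p,2}$), the contraction $\|\mathbf{P}_q\bB\mathbf{P}_q\|_{F}\le\|\bB\|_{F}$, and monotonicity of $\bB\mapsto\tr(\bB^{2})$ on positive semidefinite matrices, I expect $\Var(A_1)\le(5+\Delta)(n_1^{-1}+n_2^{-1})\tr(\bSigma_{n,p}^{2})$, which is $o(1)$ uniformly over $p$ and $q$.

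For $A_2$, only the index patterns with $\{i,i'\}$ matched contribute to $\E[A_2^{2}]$ (independence and centering), giving $\Var(A_2)=2\sum_{i\ne i'}c_i^{2}c_{i'}^{2}\tr\!\big(\mathbf{P}_q\bSigma_{p,k(i)}\mathbf{P}_q\bSigma_{p,k(i')}\big)$; using $\sum_i c_i^{2}\bSigma_{p,k(i)}=\bSigma_{n,p}$ and discarding the nonnegative diagonal correction yields $\Var(A_2)\le 2\tr\!\big[(\mathbf{P}_q\bSigma_{n,p}\mathbf{P}_q)^{2}\big]$. To relate this to $\tr(\bSigma_{n,p}^{2})$, note that by \ref{cond1} (and $\bSigma_{p,1}\preceq(1-\tau)^{-1}\bSigma_p$, $\bSigma_{p,2}\preceq\tau^{-1}\bSigma_p$, from $\bSigma_p=(1-\tau)\bSigma_{p,1}+\tau\bSigma_{p,2}$) we have $(1-\eta_n)\bSigma_p\preceq\bSigma_{n,p}\preceq(1+\eta_n)\bSigma_p$ with $\eta_n\to0$ independent of $p$; since $\mathbf{P}_q$ is a spectral projection of $\bSigma_p$, monotonicity of $\tr(\cdot^{2})$ gives
\[
\frac{\tr\!\big[(\mathbf{P}_q\bSigma_{n,p}\mathbf{P}_q)^{2}\big]}{\tr(\bSigma_{n,p}^{2})}\;\le\;\frac{(1+\eta_n)^{2}}{(1-\eta_n)^{2}}\cdot\frac{\tr\!\big[(\mathbf{P}_q\bSigma_{p}\mathbf{P}_q)^{2}\big]}{\tr(\bSigma_p^{2})}\;=\;\frac{(1+\eta_n)^{2}}{(1-\eta_n)^{2}}\sum_{r=q+1}^{p}\rho_{p,r}^{2}.
\]
By the tail condition $\sup_{p>q}\sum_{r=q+1}^{p}\rho_{p,r}^{2}\to0$ in \ref{cond5}, choose $Q$ so that this supremum is $<\epsilon^{2}/(8t^{2})$ for $q\ge Q$, then $N$ so that for $n\ge N$ both $(5+\Delta)(n_1^{-1}+n_2^{-1})<\epsilon^{2}/(2t^{2})$ and $(1+\eta_n)^{2}/(1-\eta_n)^{2}\le2$; combining the two variance bounds gives $\Var(A_{n,p,q})/\tr(\bSigma_{n,p}^{2})<\epsilon^{2}/t^{2}$, hence $|\varphi_{n,p}(t)-\varphi_{n,p}^{(q)}(t)|<\epsilon$ for all $q\ge Q$, $n\ge N$, $p\ge q$.

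The main obstacle is the $A_2$ term: one must collapse a sum over $\asymp n^{2}$ cross-products into the single trace $\tr[(\mathbf{P}_q\bSigma_{n,p}\mathbf{P}_q)^{2}]$ and then bound it \emph{uniformly in $p$}. This is where the two–sample factor structure (two distinct matrices $\bSigma_{p,1},\bSigma_{p,2}$ entering with $n_1,n_2$-dependent weights) has to be folded into $\bSigma_{n,p}$, which forces the Loewner comparisons $\bSigma_{p,k}\preceq(\cdot)^{-1}\bSigma_p$ and $(1\pm\eta_n)\bSigma_p$, and — crucially — why one needs the uniformity built into \ref{cond5} (the $\sup_{p>q}$) rather than mere convergence of $\sum_{r>q}\rho_{p,r}^{2}$ for each fixed $p$; this uniformity is exactly what a classical-CLT argument in the style of \cite{zhang2021hetero} would not supply.
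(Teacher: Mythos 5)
Your proof is correct and follows essentially the same strategy as the paper's: a second-moment (Cauchy--Schwarz) bound on the characteristic-function gap, a split of $A_{n,p,q}$ into a diagonal part handled by \autoref{fourth_moment} and a cross-term part collapsed to a trace, and the $\sup_{p>q}$ part of \ref{cond5} to achieve uniformity over $p$. The two points where you deviate are both valid small improvements: you use the exact orthogonality $\Cov(A_1,A_2)=0$ rather than the paper's $(a+b)^2\le 2a^2+2b^2$, and you handle the $n_1/n\ne\tau$ mismatch with the Loewner sandwich $(1-\eta_n)\bSigma_p\preceq\bSigma_{n,p}\preceq(1+\eta_n)\bSigma_p$ rather than the paper's explicit $\tilde B_{n,p}$-plus-correction-terms trace algebra, which is a cleaner way to the same $\sum_{r>q}\rho_{p,r}^2$ bound.
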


Pick $Q, N$ using this lemma. Since $\lim_{q\to\infty}\sum_{r=q+1}^\infty \rho_r^2 = 0$, as there exists $Q'$ such that for $q \geq Q'$, $|\varphi_{\zeta^{(q)}}(t) - \varphi_{\zeta}(t)| < \epsilon$. Fix any $q \geq \max\{Q, Q'\}$. Since $\lim_{p\to\infty}\rho_{p,r} = \rho_r$ for any $r$, there exists $P\geq q$ depending on $q, t$ and $\epsilon$ such that for $p \geq P$, $|\varphi_{\tilde{\zeta}_p^{(q)}}(t) - \varphi_{\zeta^{(q)}}(t)| < \epsilon$. 

Define ${\bf \Lambda}_q = [\bu_{p,1}, \ldots, \bu_{p,q}]$. We shall apply \autoref{lyapunov} on $\sum_{r=1}^n {\bf \Lambda}_{q}^\top \bZ_r / \left(2\tr(\bSigma_{n,p}^2)\right)^{1/4}$. The $(i,j)$-th element of its covariance matrix is $\bu_{p,i}^\top \bSigma_{n,p} \bu_{p,j} / \sqrt{2\tr(\bSigma_{n,p}^2)}$ which converges to $\rho_{p,i} \mathbb{I}\{i=j\}/ \sqrt{2}$ as $n \to \infty$ uniformly-over-$p$, for any $1 \leq i,j \leq q$. Also $\rho_{p,i} \to \rho_i \in [0,1]$ as $p \to \infty$. Hence, conditions (a) and (b) of \autoref{lindeberg} hold. The uniform-over-$p$ Lyapunov's condition can be verified (with $\delta = 2$) by using \autoref{fourth_moment}. Thus, $\sum_{r=1}^n {\bf \Lambda}_{q}^\top \bZ_r / \left(2\tr(\bSigma_{n,p}^2)\right)^{1/4} \Longrightarrow N_q(\mathbf{0}_{q},  \mathbf{D}_{p,q})$ uniformly-over-$p$ with $\mathbf{D}_{p,q}$ being a diagonal matrix with the elements $\{\rho_{p,r} / \sqrt{2}\}_{r=1,\ldots,q}$. Note that
$\{\zeta_p\}_{p\in\mathbb{N}}$ satisfies assumptions \ref{assumption1} and \ref{assumption2}. Using Theorems \autoref{mappingthm} and \autoref{portmanteau}, we conclude that there exists $N'$ depending only on $q, t$ and $\epsilon$ such that for $n \geq N'$, we have 
$\sup_{p\geq q}|\varphi_{n,p}^{(q)}(t) - \varphi_{\tilde{\zeta}_p^{(q)}}(t)| < \epsilon$.

Thus, equation \eqref{thm1:eq1} gives $|\varphi_{n,p}(t) - \varphi_{\zeta}(t)| < 4\epsilon$ for $n \geq \max\{N, N'\}$ and $p \geq P$.
Hence, $\varphi_{n,p}(t) \to \varphi_{\zeta}(t)$ as $n, p \to \infty$ (with no relation between growth rates). This proves part (b).

\textbf{(c)} Let $F_{n,p}, F_p$ and $F$ denote the dfs of $T_{n,p} /\sqrt{2\mathrm{tr}(\bSigma_{p}^2)}, \zeta_p$ and $\zeta$, respectively. Fix $x \in \mathbb{R}$ and $\epsilon > 0$. Using (a), pick $P$ such that for $p \geq P$, $|F_p(x) - F(x)| < \epsilon$. Choose $P', N$ so that for $n \geq N$ and $p \geq P'$, $|F_{n,p}(x) - F(x)| < \epsilon$. Using the usual CLT, we choose $N'$ large enough so that for $n \geq N'$ and $p \leq \max\{P, P'\}$, $|F_{n,p}(x) - F_p(x)| < \epsilon$. For $p \geq \max\{P,P'\}$ and any $n \geq N$, we have $|F_{n,p}(x) - F_p(x)| \leq |F_{n,p}(x) - F(x)| + |F_p(x) - F(x)| < 2\epsilon$. Thus, for $n \geq \max(N, N')$, we have $\sup_{p}|F_{n,p}(x) - F_p(x)| < 2\epsilon$. Using \autoref{portmanteau}, we conclude that $T_{n,p,0} / \sqrt{2\mathrm{tr}(\bSigma_{p}^2)} \Longrightarrow \zeta_p$ uniformly-over-$p$ as $n \to \infty$.
\end{proof}

\begin{proof}[Proof of \autoref{lemma_gaussian_estimation}]
Let $\hat\lambda_{n,p,1} \geq  \cdots \geq \hat\lambda_{n,p,p}$ be the eigenvalues of $\hat{\bSigma}_p$. Fix $t \in \mathbb{R}$. Let $\varphi_{n, p}(t)$ and $\varphi_{p}(t)$ denote the characteristic functions of $V_{n, p  ,1} / \sqrt{2\tr(\bSigma_p^2)} \stackrel{d}{=} (\sum_{i = 1}^{p } \hat\lambda_{n,p, i}(W_i - 1)) / \sqrt{2 \sum_{i = 1}^{p } \lambda_{p, i}^2}$ and $\zeta_p \stackrel{d}{=} (\sum_{i = 1}^{p } \lambda_{n,p, i}(W_i - 1)) / \sqrt{2 \sum_{i = 1}^{p } \lambda_{p, i}^2}$, where $W_i$'s are i.i.d. $\chi^2_1$ variates (independent of $\hat\lambda_{n,p, i}$'s). Therefore, we have
\begin{align}
&|\varphi_{n, p}(t) - \varphi_p(t)|^2\leq \left[|t| \E\left|\frac{\sum_{i = 1}^{p } (\hat\lambda_{n,p, i} - \lambda_{p, i})(W_i -1)}{\sqrt{2 \sum_{i = 1}^{p } \lambda_{p, i}^2}} \right|\right]^2 \leq  \frac{t^2}{2\tr(\bSigma_p^2)} \E\left[\sum_{i = 1}^{p } (\hat\lambda_{n,p, i} - \lambda_{p, i})(W_i-1)\right]^2 \nonumber\\
&= \frac{t^2}{2\tr(\bSigma_p^2)} \left(\sum_{i = 1}^{p } \E(\hat\lambda_{p, i} - \lambda_{p, i})^2 \E(W_i - 1)^2 + 2\sum_{i< j}\E\left[(\hat\lambda_{p,i} - \lambda_{p,i})(\hat\lambda_{p,j} - \lambda_{p,j})\right] \E\left[(W_i -1 )(W_j - 1)\right]\right) \nonumber\\
&= \frac{t^2}{\tr(\bSigma_p^2)} \E\left[\sum_{i = 1}^{p } (\hat\lambda_{p, i} - \lambda_{p, i})^2\right]  \leq \frac{t^2}{\tr(\bSigma_p^2)} \E\left[\tr[(\hat\bSigma_p - \bSigma_p)^2]\right].\label{lemma_gaussian_estimation_eq1}
\end{align}
We have used Corollary 6.3.8 of \cite{horn2012} in the last line. 

\begin{lemma}\label{estimation_auxiliary}
Under the assumptions of \autoref{lemma_gaussian_estimation}, we have 
$$ \sup_{p\in \mathbb{N}} \left[\tr\left(\E\left[(\hat\bSigma_p - \bSigma_p)^2\right]\right) / \tr(\bSigma_p^2)\right]  \to 0 \text{ as } n \to \infty.$$
\end{lemma}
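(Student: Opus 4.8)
The plan is to reduce the claim to a bound on the expected squared Frobenius distance between $\hat\bSigma_p$ and $\bSigma_p$, and then to estimate that distance term by term, ensuring that every bound is uniform in $p$. Since $\hat\bSigma_p - \bSigma_p$ is symmetric, $\tr[(\hat\bSigma_p - \bSigma_p)^2] = \sum_{i,j=1}^p (\hat\sigma_{ij} - \sigma_{ij})^2$, so it suffices to show $\tr\!\left(\E\!\left[(\hat\bSigma_p - \bSigma_p)^2\right]\right) = o\!\left(\tr(\bSigma_p^2)\right)$ as $n \to \infty$, uniformly over $p \in \mathbb{N}$. I would treat $\hat\bSigma_p^{(1)}$ and $\hat\bSigma_p^{(2)}$ separately.

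For $\hat\bSigma_p^{(1)}$, substitute the Hoeffding-type decomposition $\bh_p(\bx,\by) = \bdelta_p + \bbf(\bx) + \bg(\by) + \bar{\bh}_p(\bx,\by)$, with $\bbf,\bg,\bar{\bh}_p$ as in the proof of \autoref{thm:1}, into the three sums defining $\hat\bSigma_p^{(1)}$ and into $\hat\bdelta_p$, and expand. The leading terms reassemble $(n_2/n)\bSigma_{p,1} + (n_1/n)\bSigma_{p,2} \to (1-\tau)\bSigma_{p,1} + \tau\bSigma_{p,2} = \bSigma_p$, while the rank-one pieces $\bdelta_p\bdelta_p^\top$ generated by the two double sums cancel against $\hat\bdelta_p\hat\bdelta_p^\top$ since $n_2/n + n_1/n - 1 = 0$. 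It remains to bound, in expected squared Frobenius norm, finitely many error terms of three kinds: (i) the deterministic biases $(n_2/n - (1-\tau))\bSigma_{p,1}$ and $(n_1/n - \tau)\bSigma_{p,2}$, of squared Frobenius norm $(n_j/n - \tau)^2\tr(\bSigma_{p,j}^2)$, where $\tr(\bSigma_{p,j}^2) \le \max\{(1-\tau)^{-2},\tau^{-2}\}\,\tr(\bSigma_p^2)$ because all cross-traces in $\tr(\bSigma_p^2) = (1-\tau)^2\tr(\bSigma_{p,1}^2) + 2\tau(1-\tau)\tr(\bSigma_{p,1}\bSigma_{p,2}) + \tau^2\tr(\bSigma_{p,2}^2)$ are non-negative; (ii) centred averages such as $n_1^{-1}\sum_i[\bbf(\bX_{p,i})\bbf(\bX_{p,i})^\top - \bSigma_{p,1}]$, with expected squared Frobenius norm $\le n_1^{-1}\E[\|\bbf(\bX_{p,1})\|^4]$, which under \ref{cond3} (so that $\bbf(\bX_{p,1}) = \bGamma_{p,1}\bz_{p,11}$) is at most $n_1^{-1}[(5+\Delta)\tr(\bSigma_{p,1}^2) + \tr(\bSigma_{p,1})^2]$ by \autoref{fourth_moment}, and where \ref{cond6} gives $\tr(\bSigma_{p,1})^2 \le (1-\tau)^{-2}\tr(\bSigma_p)^2 \le (1-\tau)^{-2}M^2\tr(\bSigma_p^2)$; (iii) the $O(1/n_2)$ ``diagonal'' contributions from the index pairs with $j_1 = j_2$ that are not removed, together with the cross terms involving $\bar{\bh}_p$, controlled by \ref{cond2} and \ref{cond6} via the Cauchy--Schwarz inequality. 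After dividing by $\tr(\bSigma_p^2)$, each such term is a $p$-free constant times $O(1/n) + O(1/\sqrt n)$, hence vanishes uniformly in $p$.

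For $\hat\bSigma_p^{(2)} = (w_{ij}\hat\sigma_{ij})$, write $w_{ij}\hat\sigma_{ij} - \sigma_{ij} = w_{ij}(\hat\sigma_{ij} - \sigma_{ij}) + (w_{ij}-1)\sigma_{ij}$, so that
\[
\tr\!\left(\E\!\left[(\hat\bSigma_p^{(2)} - \bSigma_p)^2\right]\right) \le 2\sum_{|i-j|<k}\E[(\hat\sigma_{ij} - \sigma_{ij})^2] + 2\sum_{|i-j|\ge k/2}\sigma_{ij}^2 ,
\]
using $|w_{ij}| \le 1$, $w_{ij} = 0$ for $|i-j| \ge k$, and $w_{ij} = 1$ for $|i-j| < k/2$. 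For the first (variance) sum the analysis of $\hat\bSigma_p^{(1)}$ shows each $\E[(\hat\sigma_{ij} - \sigma_{ij})^2] = O(\lambda_1^2/n)$ up to lower-order terms (using the diagonal bound $\sigma_{ii} \le \lambda_1$ and the moment assumption \ref{cond3}), and there are $O(pk)$ such pairs, so the sum is $O(pk\lambda_1^2/n)$; dividing by $\tr(\bSigma_p^2) \ge p\lambda_1^2/M$, which holds by \ref{cond7}, leaves $O(k/n) \le O\big(n^{-(2\beta+1)/(2\beta+2)}\big) \to 0$. For the second (tapering-bias) sum, the polynomial decay $|\sigma_{ij}| \le C\lambda_1|i-j|^{-(\beta+1)}$ of \ref{cond7} gives $\sum_{|i-j|\ge k/2}\sigma_{ij}^2 \le C'\lambda_1^2\,p\,k^{-(2\beta+1)}$, and dividing by $\tr(\bSigma_p^2) \asymp p\lambda_1^2$ leaves $O\big(k^{-(2\beta+1)}\big)$, which tends to $0$ as $k = \min\{n^{1/(2\beta+2)}, p\}$ grows. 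Summing the two contributions yields the claim for $\hat\bSigma_p^{(2)}$.

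The main obstacle is not any individual estimate but the requirement that every bound hold \emph{uniformly} over $p$: the natural pointwise-in-$p$ bounds all depend on $\bSigma_p$ through quantities such as $\tr(\bSigma_{p,j}^2)$, $\tr(\bSigma_p)^2$, $p\lambda_1^2$ and $\E[\|\bar{\bh}_p(\bX_{p,1},\bY_{p,1})\|^4]$, and the purpose of \ref{cond2}, \ref{cond3}, \ref{cond6} and \ref{cond7} is precisely to keep the ratios of these quantities to $\tr(\bSigma_p^2)$ bounded across all $p$, so that the $n$-dependent factors $n^{-1}$, $n^{-1/2}$, $k/n$ and $k^{-(2\beta+1)}$ drive everything to zero simultaneously. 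A secondary delicate point is the bias--variance balance for the tapered estimator: $k$ must be large enough to suppress the tapering bias and small enough to keep the $O(pk/n)$ variance negligible, and the prescription $k = \min\{n^{1/(2\beta+2)}, p\}$ (following \cite{ttcai2010optimalcov}), together with the second half of \ref{cond7}, is exactly what achieves this balance.
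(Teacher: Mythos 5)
Your proposal is correct in strategy and follows essentially the same ideas as the paper's proof, though organized a bit differently for $\hat\bSigma_p^{(1)}$. The paper first reduces to $\bdelta_p = \mathbf{0}_p$ (by rewriting $\hat\bSigma_p^{(1)}$ in a form that is invariant under centering by $\hat\bdelta_p$), and then uses the algebraic identity $\tr\!\left(\E\!\left[(\hat\bSigma_p^{(1)} - \bSigma_p)^2\right]\right) = \E\!\left[\tr\!\left(\hat\bSigma_p^{(1)\,2}\right)\right] - 2\,\E\!\left[\tr\!\left(\hat\bSigma_p^{(1)}\bSigma_p\right)\right] + \tr(\bSigma_p^2)$, showing the first two pieces each converge to $\tr(\bSigma_p^2)$ uniformly-over-$p$ by expanding $\hat\bSigma_p^{(1)}$ into triple sums of $\bh_p(\cdot,\cdot)\bh_p(\cdot,\cdot)^\top$ and counting which index configurations contribute. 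You instead push the Hoeffding-type decomposition $\bh_p = \bdelta_p + \bbf + \bg + \bar\bh_p$ through the formula for $\hat\bSigma_p^{(1)}$ and bound $\E[\|\cdot\|_F^2]$ of each resulting error block; the two routes are algebraically equivalent, so this is a bookkeeping difference rather than a genuinely new argument, but it does buy a cleaner separation of bias, sample-fluctuation, and degenerate-remainder effects. Your group~(iii), however, is the loosest part: when the decomposition is substituted into all three sums defining $\hat\bSigma_p^{(1)}$ there are more cross-terms than the phrase ``diagonal contributions $\ldots$ together with the cross terms involving $\bar\bh_p$'' suggests (e.g.\ the block $\frac{1}{nn_2}\bigl[\sum_j\bg(\bY_j)\bigr]\bigl[\sum_j\bg(\bY_j)\bigr]^\top$ coming from the first sum, and $\bbf$--$\bg$ cross-products), so to make the proof airtight you would need to list these blocks explicitly; the tools you cite (Cauchy--Schwarz, \autoref{fourth_moment}, \ref{cond2}, \ref{cond6}) are indeed sufficient for each of them, and the paper's term-counting estimate for $\E[\tr(\hat\bSigma_p^{(1)\,2})]$ is the systematic way to ensure none are missed. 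Your treatment of $\hat\bSigma_p^{(2)}$ matches the paper exactly, down to the split $w_{ij}\hat\sigma_{ij} - \sigma_{ij} = w_{ij}(\hat\sigma_{ij} - \sigma_{ij}) + (w_{ij}-1)\sigma_{ij}$, the $O(pk\lambda_1^2/n)$ variance bound, the $O(\lambda_1^2 p k^{-(2\beta+1)})$ tapering-bias bound from \ref{cond7}, and the normalization $\tr(\bSigma_p^2) \gtrsim p\lambda_1^2$; one small caveat is that the bound ``$\E[(\hat\sigma_{ij}-\sigma_{ij})^2] = O(\lambda_1^2/n)$'' is not something the first half of your argument delivers entry-wise (it controls the Frobenius sum), so as written this needs its own entry-wise estimate via the factor structure in \ref{cond3}, which is exactly the Cauchy--Schwarz bound $\E[\tr(\tilde\Phi\tilde\Phi)] \le C_1\lambda_1^2 pk$ the paper uses. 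Your concluding paragraph correctly identifies that the whole point of \ref{cond2}, \ref{cond3}, \ref{cond6}, \ref{cond7} is to keep the normalized quantities bounded uniformly in $p$, and that the choice $k = \min\{n^{1/(2\beta+2)}, p\}$ balances the two contributions for the tapered estimator.
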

\noindent
The result is now immediate.
\end{proof}

\begin{proof}[Proof of \autoref{thm:2}] 
\textbf{(a)} Fix $\alpha \in (0, 1)$. Let $\tilde{c}_{n,p}, c_{n,p}$ and $c_p$ denote the $(1-\alpha)$-th quantile of the sample $V_{n,p,1} /\sqrt{\tr(\bSigma_{n,p}^2)}, \ldots, V_{n,p,M} /\sqrt{\tr(\bSigma_{n,p}^2)}$ and the distributions $V_{n,p,1} /\sqrt{2\tr(\bSigma_{n,p}^2)} \sim F_{n,p}$ and $\zeta_p \sim F_p$, respectively. Fix any $\epsilon, \epsilon' > 0$. Using Theorem 5.9 of \cite{junshao2003}, we get
\begin{equation}
\P(|\tilde{c}_{n,p} - c_{n,p}| > \epsilon) \leq 2Ce^{-2M[\min\left\{F_{n,p}\left(c_{n,p} + \epsilon\right) - (1-\alpha), (1-\alpha) - F_{n,p}\left(c_{n,p} - \epsilon\right) \right\}]^2}, \label{theorem_level_alpha_eq1}
\end{equation}
where $C$ is a universal constant. By \autoref{thm:1}, $F_p(x) \to F(x)$ uniformly over $x \in \mathbb{R}$ with $\zeta \sim F$. By \autoref{lemma_quantile},
$\sup_{\, p \in \mathbb{N}} \sup_{x \in \mathbb{R}}|F_{n,p}(x) - F(x)| \to 0$ and $c_{n,p} \to c_p$ uniformly-over-$p$ as $n \to \infty$. From the proof of \autoref{lemma_quantile}, we have $\lim_{p\to\infty}c_p = c$, the $(1-\alpha)$-th quantile of $F$. Pick $P$ and $N_1$ such that for $n \geq N_1$ and $p \geq P$, $|c_{n,p} - c| < \epsilon / 2 $. Define $\delta = \min\{F(c + \epsilon / 2) - (1-\alpha), (1-\alpha) - F(c - \epsilon / 2)\} > 0$. Pick $N_2$ such that for $n \geq N_2$ and any $p$, $\sup_{x \in \mathbb{R}}|F_{n,p}(x) - F(x)| < \delta / 2$.  Then, for $n \geq \max\{N_1, N_2\}$ and $p \geq P$, $F_{n,p}(c_{n,p} + \epsilon) \geq F(c_{n,p} + \epsilon) - \delta / 2 \geq F(c + \epsilon/2) - \delta / 2$ and $F_{n,p}(c_{n,p} - \epsilon) \leq F(c_{n,p} - \epsilon) + \delta / 2 \leq F(c - \epsilon/2) + \delta / 2$ and
$$\min\left\{F_{n,p}\left(c_{n,p} + \epsilon\right) - (1-\alpha), (1-\alpha) - F_{n,p}\left(c_{n,p} - \epsilon\right) \right\} \geq \delta - \frac{\delta}{2} = \frac{\delta}{2} > 0.$$
Note that $\delta$ does not depend on $n$. Similarly, it can be shown that there exists $N_3$ such~that  
$$\min\left\{F_{n,p}\left(c_{n,p} + \epsilon\right) - (1-\alpha), (1-\alpha) - F_{n,p}\left(c_{n,p} - \epsilon\right) \right\} > 0
\vspace*{-0.1in}
$$
for $n \geq N_3$ and $p < P$ and the lower bound does not depend on $n$. Thus, for any $n \geq N = \max\{N_1, N_2, N_3\}$, we get
\vspace*{-0.15in}
$$\inf_{p} \min\left\{F_{n,p}\left(c_{n,p} + \epsilon\right) - (1-\alpha), (1-\alpha) - F_{n,p}\left(c_{n,p} - \epsilon\right) \right\} > 0
\vspace*{-0.1in}
$$
and the lower bound does not depend on $n$. Using (\ref{theorem_level_alpha_eq1}), there exists a $M_0 \in \mathbb{R}$ such that
$\P(|\tilde{c}_{n,p} - c_{n,p}| > \epsilon) < \epsilon'$
for any $M \geq M_0$ and  $n \geq N$. Combining this with the fact that $c_{n,p} \to c_p$ uniformly-over-$p$, we have
$\tilde{c}_{n,p} \stackrel{\P}{\longrightarrow} c_p$
uniformly-over-$p$ as $n \to \infty$ and $M \to \infty$. 

Recall that $\text{H}_0$ is rejected if $T_{n,p} > \hat{c}_{n,p}(\alpha) = \tilde{c}_{n,p} \sqrt{\tr(\bSigma_{n,p}^2)}$. By \autoref{thm:1} and \autoref{Slutsky}, we~have
$$\sup_p\left|\P\left(T_{n,p} > \hat{c}_{n,p}\right) - (1-\alpha)\right| = \sup_p\left|\P\left(T_{n,p} / \sqrt{\tr(\bSigma_p^2)} > \tilde{c}_{n,p}\right) - \P(\zeta_p > c_p)\right| \to 0$$
as $n \to \infty$ and $M \to \infty$. This completes the proof of part (a).

\textbf{(b)} We have $\E[T_{n,p}] = (n_1 -1)(n_2 - 1) \|\bdelta_p\|^2 / n$ and $\Var(T_{n,p}) $ is equal to 
\begin{align*}
\frac{1}{n^2 n_1^2 n_2^2} \sum_{i_1 \neq i_2} \sum_{j_1 \neq j_2} \sum_{i_3 \neq i_4} \sum_{j_3 \neq j_4} \Cov (\bh_p(\bX_{p,i_1}, \bY_{p,j_1})^\top \bh_p(\bX_{p,i_2}, \bY_{p,j_2}), \bh_p(\bX_{p,i_3}, \bY_{p,j_3})^\top \bh_p(\bX_{p,i_4}, \bY_{p,j_4})).
\end{align*}
If $|\{i_1, i_2, i_3, i_4\}| = |\{j_1, j_2, j_3, j_4\}| = 4$, then the covariance term is $0$. The number of non-zero covariance terms is at most $O(n^7)$. Using the Cauchy-Schwartz (CS) inequality, any such term can be bounded above by $\Var(\bh_p(\bX_{p,i_1}, \bY_{p,j_1})^\top \bh_p(\bX_{p,i_2}, \bY_{p,j_2}))$, which is equal to
\vspace*{-0.1in}
\begin{align*}
& \E[\tr\left(\bh_p(\bX_{p,i_1}, \bY_{p,j_1})\bh_p(\bX_{p,i_1}, \bY_{p,j_1})^\top \bh_p(\bX_{p,i_2}, \bY_{p,j_2})\bh_p(\bX_{p,i_2}, \bY_{p,j_2})^\top \right)] - \|\bdelta_p\|^4 \\
&= \tr((\bSigma_{p,0} + \bdelta_p \bdelta_p^\top)^2) - \|\bdelta_p\|^4 \leq  \tr(\bSigma_{p,0}^2) + 2\|\bdelta_p\|^2 \sqrt{\tr(\bSigma_{p,0}^2)}.
\end{align*}
\vspace*{-0.1in}
Here, $\bSigma_{p,0} = \Var(\bh_p(\bX_{p,i}, \bY_{p,j}))$. Under (i), $\tr(\bSigma_{p,0})$ and $\tr(\hat\bSigma_{p,0})$ are bounded uniformly-over-$p$. Using a similar line of arguments as given in \autoref{estimation_auxiliary}, it follows that under (ii), $\tr(\bSigma_{p, 0}^2) / \tr(\bSigma_{p}^2)$ and $\E[\tr(\hat\bSigma_p^2)] / \tr(\bSigma_{p}^2)$ are both bounded uniformly-over-$p$. Thus, using (i) or (ii), both $\tr(\bSigma_{p, 0}^2) / \|\bdelta_p\|^4$ and $\E[\tr(\hat\bSigma_p^2)] / \|\bdelta_p\|^4$ are bounded uniformly-over-$p$. Fix $0 < \epsilon < 1$. By Markov's inequality, we have
$\P\left(\left|T_{n,p} / \E[T_{n,p}] - 1\right| > \epsilon\right) \leq \Var(T_{n,p})/(\epsilon^2\E[T_{n,p}]^2) \leq C / (n\epsilon^2)$
for some constant $C$ (independent of $n$ and $p$). Thus,
\begin{align*}
\P(T_{n,p} > \hat c_{n,p}(\alpha)) &\geq \P\left((1-\epsilon)\E[T_{n,p}] > \hat{c}_{n,p}, (1-\epsilon)\E[T_{n,p}] < T_{n,p}\right)\\
&\geq \P\left((1-\epsilon)\E[T_{n,p}] > \hat{c}_{n,p}(\alpha)) - \P((1-\epsilon)\E[T_{n,p}] \geq T_{n,p}\right)  \\
&\geq \P\left((1-\epsilon)\E[T_{n,p}] > |\hat{c}_{n,p}(\alpha)|\right) -\frac{C}{n\epsilon^2} \\
& = 1- \E[\P\left(|\hat{c}_{n,p}(\alpha)| \geq (1-\epsilon)\E[T_{n,p}] \right \vert \bX_i, \bY_j)] - \frac{C}{n\epsilon^2} \\
& \geq 1 - \frac{2n^2\E[\tr(\hat\bSigma_p^2)]}{(n_1-1)^2 (n_2-1)^2(1-\epsilon)^2 \|\bdelta_p\|^4} - \frac{C}{n \epsilon^2}.
\end{align*}
In the last line, we have used Markov's inequality and the fact that $\E[\hat{c}_{n,p}(\alpha)^2 \vert \bX_{p,i}, \bY_{p,j}] = 2\tr(\hat\bSigma_p^2)$. Thus, $\P(T_{n,p} > \hat c_{n,p}(\alpha)) \to 1$ uniformly-over-$p$ as $ n \to \infty$.
\end{proof}






\newpage

\stepcounter{supplement}
\renewcommand{\thesection}{S\arabic{section}}
\renewcommand{\thefigure}{S\arabic{figure}}
\renewcommand{\thetable}{S\arabic{table}}
\renewcommand{\theequation}{S\arabic{equation}}

\section*{Supplementary Material}

\section{Proofs of the mathematical results of Section~2} \label{sec:supplement_theory}
	
This section provides detailed statements and proofs of the results stated in Section \ref{sec:theory} of the paper concerning uniform-over-$p$ convergence.
	
	\begin{stheorem}[Uniform-over-$p$ continuous mapping theorem]\label{mappingthm}
		Let $\bX_{n, p}$ and $\bX_p$ be $d$-dimensional random vectors with $\bX_{n, p} \Longrightarrow \bX_p$ uniformly-over-$p$ and $g : \mathbb{R}^d \to \mathbb{R}$ be a continuous function. Then, $g\left( \bX_{n, p} \right) \Longrightarrow g\left( \bX_p \right)$ uniformly-over-$p$.
	\end{stheorem}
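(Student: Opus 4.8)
The plan is to reduce the claim directly to \autoref{definition1} by precomposing test functions with $g$. First I would fix an arbitrary bounded continuous function $f : \mathbb{R} \to \mathbb{R}$ and observe that $f \circ g : \mathbb{R}^d \to \mathbb{R}$ is again bounded (because $f$ is) and continuous (as a composition of continuous maps), hence is an admissible test function for the hypothesis $\bX_{n, p} \Longrightarrow \bX_p$ uniformly-over-$p$.

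Then I would apply \autoref{definition1} to the test function $f \circ g$, which yields
$$\lim_{n\to\infty} \sup_{p \in \mathbb{N}} \left| \int (f\circ g)\, \mathrm{d}\mu_{n,p} - \int (f\circ g)\, \mathrm{d}\mu_p \right| = 0.$$
Since $\mu_{n, p}$ and $\mu_p$ are the laws of $\bX_{n, p}$ and $\bX_p$, the change-of-variables identity gives $\int (f\circ g)\,\mathrm{d}\mu_{n,p} = \E[f(g(\bX_{n,p}))]$ and $\int (f\circ g)\,\mathrm{d}\mu_p = \E[f(g(\bX_p))]$, so the display reads
$$\lim_{n\to\infty} \sup_{p \in \mathbb{N}} \left| \E[f(g(\bX_{n,p}))] - \E[f(g(\bX_p))] \right| = 0.$$
As $f$ ranges over all bounded continuous functions on $\mathbb{R}$, this is precisely the statement $g(\bX_{n, p}) \Longrightarrow g(\bX_p)$ uniformly-over-$p$ (that is, \autoref{definition1} for the real-valued image random variables).

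There is essentially no obstacle here. The only point worth a remark is that $g$ is assumed continuous on all of $\mathbb{R}^d$, so---unlike the classical continuous mapping theorem---there is no exceptional set of discontinuity points that needs to be handled; and the supremum over $p$ is inherited automatically, because the family of admissible test functions is stable under precomposition with the fixed map $g$. Thus the whole argument is a one-line reduction to the definition, which is why the theorem is stated to follow immediately.
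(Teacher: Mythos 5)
Your proof is correct and follows essentially the same route as the paper: fix a bounded continuous $f:\mathbb{R}\to\mathbb{R}$, note that $f\circ g$ is bounded and continuous, and apply \autoref{definition1} to the test function $f\circ g$. The paper's proof of \autoref{mappingthm} is precisely this one-line reduction.
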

	\begin{proof}[Proof of \autoref{mappingthm}]
		For any bounded and continuous $f : \mathbb{R} \to \mathbb{R}$, the composite function $h : \mathbb{R}^d \to \mathbb{R}$ defined by $h( \bx ) = f( g( \bx ) )$ is bounded and continuous. Since $\bX_{n, p} \Longrightarrow \bX_p$ uniformly-over-$p$, from \autoref{definition1}, we have
		\begin{align*}
			\sup_p \left| \E[ f( g( \bX_{n, p} ) ) ] - \E[ f( g( \bX_p ) ) ] \right|
			= \sup_p \left| \E[ h( \bX_{n, p} ) ] - \E[ h( \bX_p ) ] \right| \to 0
		\end{align*}
		as $n \to \infty$, which implies that $g\left( \bX_{n, p} \right) \Longrightarrow g\left( \bX_p \right)$ uniformly-over-$p$.
	\end{proof}
	
	Next, we state a basic lemma concerning approximation by indicator functions. To do so, we need some notation. For any $A \subseteq \mathbb{R}^d$, let $\partial A$ and $\bar{A}$ respectively denote the boundary and closure of the set $A$ with respect to the standard metric topology. For $A \subseteq \mathbb{R}^d$ and $\delta > 0$, define $(\partial{A})^\delta = \{\bx : dist(\bx, A) < \delta\}$, where the distance of the point $\bx \in \mathbb{R}^d$ from $A$ is defined as $dist(\bx, A) = \inf_{\bz \in A}\|\bx-\bz\|$. It is easy to check that $dist(\bx, A) = 0$ iff $\bx \in \bar{A}$ and $dist(\bx, A) \leq \|\bx -\by\| + dist(\by, A)$ (using the triangle inequality) for any $\by \in \mathbb{R}^d$. The last inequality also shows that for any $\bx, \by \in \mathbb{R}^d$, we have
	\begin{equation}
		|dist(\bx, A) - dist(\by, A)| \leq \|\bx - \by\|.
		\label{indapprox_eq1}
	\end{equation} 
	\begin{slemma}\label{indapprox}
		Let $A \subseteq \mathbb{R}^d.$ For all $\epsilon>0$, there exists functions $g_\epsilon, h_\epsilon : \mathbb{R}^d \to [0,1] $ such that:
		\begin{enumerate}[label = (\roman*), ref = (\roman*)]
			\item $g_\epsilon \leq \mathbb{I}_A \leq h_\epsilon$, where $\mathbb{I}_A$ is the indicator function of the set $A$. \label{indapprox_pt1}
			\item $h_\epsilon(\bx) - g_\epsilon(\bx) \leq 1$ for all $\bx \in \mathbb{R}^d$ and if $dist(x, \partial A) \geq \epsilon$, then $h_\epsilon(\bx) - g_\epsilon(\bx) = 0$. \label{indapprox_pt2}
			\item $g_\epsilon$ and $h_\epsilon$ are bounded, Lipschitz continuous functions that are zero outside a compact~set. \label{indapprox_pt3}
		\end{enumerate}
	\end{slemma}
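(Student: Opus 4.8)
The plan is to construct $g_\epsilon$ and $h_\epsilon$ explicitly from the two distance functions $d_A(\bx):=dist(\bx,A)$ and $d_{A^c}(\bx):=dist(\bx,A^c)$, namely
\[
h_\epsilon(\bx)=\max\bigl\{0,\; 1-\epsilon^{-1}d_A(\bx)\bigr\},
\qquad
g_\epsilon(\bx)=\min\bigl\{1,\; \epsilon^{-1}d_{A^c}(\bx)\bigr\},
\]
both of which are $[0,1]$-valued. I would then read off \ref{indapprox_pt1} from the zero sets of the distance functions: $d_A(\bx)=0$ iff $\bx\in\bar A$, and $d_{A^c}(\bx)=0$ iff $\bx\notin\mathrm{int}(A)$ (using $dist(\cdot,A^c)=dist(\cdot,\overline{A^c})$ and $\overline{A^c}=\mathbb{R}^d\setminus\mathrm{int}(A)$), so that $h_\epsilon\equiv1$ on $\bar A\supseteq A$ and $g_\epsilon\equiv0$ on $\overline{A^c}\supseteq A^c$; combined with the $[0,1]$ range this gives $g_\epsilon\le\mathbb{I}_A\le h_\epsilon$. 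Part \ref{indapprox_pt3} is likewise routine: $d_A,d_{A^c}$ are $1$-Lipschitz by \eqref{indapprox_eq1}, the outer maps $t\mapsto\max\{0,1-\epsilon^{-1}t\}$ and $t\mapsto\min\{1,\epsilon^{-1}t\}$ are $\epsilon^{-1}$-Lipschitz, so $g_\epsilon,h_\epsilon$ are bounded by $1$ and $\epsilon^{-1}$-Lipschitz; moreover $h_\epsilon$ vanishes where $d_A\ge\epsilon$ and $g_\epsilon$ vanishes on $A^c$, so their supports lie in the closed $\epsilon$-neighbourhood of $\bar A$ and in $\bar A$ respectively, which are compact once $A$ is bounded (the relevant regime, given tightness).

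The real content is \ref{indapprox_pt2}. Here $0\le h_\epsilon-g_\epsilon\le1$ is immediate from $g_\epsilon\le\mathbb{I}_A\le h_\epsilon$ and the $[0,1]$ range. For the vanishing statement I would argue as follows: if $dist(\bx,\partial A)\ge\epsilon$ then $\bx\notin\partial A$ (as $\epsilon>0$), so $\bx$ lies in $\mathrm{int}(A)$ or in $\mathrm{int}(A^c)$, and the crucial step will be the geometric identity
\[
d_{A^c}(\bx)=dist(\bx,\partial A)\ \text{ for }\bx\in\mathrm{int}(A),
\qquad
d_{A}(\bx)=dist(\bx,\partial A)\ \text{ for }\bx\in\mathrm{int}(A^c).
\]
Granting it, the conclusion follows: for $\bx\in\mathrm{int}(A)$, $d_{A^c}(\bx)\ge\epsilon$ gives $g_\epsilon(\bx)=1$ and $d_A(\bx)=0$ gives $h_\epsilon(\bx)=1$; for $\bx\in\mathrm{int}(A^c)$, $d_A(\bx)\ge\epsilon$ gives $h_\epsilon(\bx)=0$ and $d_{A^c}(\bx)=0$ gives $g_\epsilon(\bx)=0$; in both cases $h_\epsilon(\bx)-g_\epsilon(\bx)=0$.

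It remains to prove the geometric identity, and this is the step I expect to be the main obstacle. By the symmetry $\partial(A^c)=\partial A$ it suffices to treat $\bx\in\mathrm{int}(A)$. The inequality $dist(\bx,\partial A)\ge d_{A^c}(\bx)$ is clear, since $\partial A\subseteq\overline{A^c}$ and $dist(\bx,\overline{A^c})=d_{A^c}(\bx)$. For the reverse inequality (trivial if $A^c=\emptyset$), I would fix $\bz\in A^c$ and look at the segment $\bx_t=(1-t)\bx+t\bz$, $t\in[0,1]$: the set $\{t\in[0,1]:\bx_t\in\mathrm{int}(A)\}$ is open in $[0,1]$ (preimage of an open set under a continuous map), contains $0$, and is not all of $[0,1]$ because $\bx_1=\bz\notin\mathrm{int}(A)$. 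Letting $t_0$ be the infimum of its complement, one has $t_0>0$; $\bx_t\in\mathrm{int}(A)$ for all $t<t_0$ forces $\bx_{t_0}\in\bar A$, while $\bx_{t_0}\notin\mathrm{int}(A)$ (the complement is closed) forces $\bx_{t_0}\in\overline{A^c}$, so $\bx_{t_0}\in\partial A$ and $dist(\bx,\partial A)\le\|\bx-\bx_{t_0}\|=t_0\|\bx-\bz\|\le\|\bx-\bz\|$. Taking the infimum over $\bz\in A^c$ yields $dist(\bx,\partial A)\le d_{A^c}(\bx)$, completing the proof. Apart from this segment-crossing argument and the care needed in juggling interiors and closures, everything is mechanical.
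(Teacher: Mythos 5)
Your construction coincides exactly with the paper's: the paper defines $h_\epsilon$ and $g_\epsilon$ by the same piecewise formulas that your $\max$/$\min$ expressions encode, and parts \ref{indapprox_pt1} and \ref{indapprox_pt3} are handled the same way. The one place you go beyond the paper is part \ref{indapprox_pt2}: the paper simply asserts that ``if $dist(\bx,\partial A)\ge\epsilon$, then either $\bx\in A$ and $dist(\bx,A^c)\ge\epsilon$ or $\bx\in A^c$ and $dist(\bx,A)\ge\epsilon$,'' whereas you correctly recognize that this rests on the identity $dist(\bx,\partial A)=dist(\bx,A^c)$ for $\bx\in\mathrm{int}(A)$ (and its mirror), and you supply a complete proof via the segment-crossing argument; that argument is sound. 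One caveat you are right to flag, and which the paper glosses over: the paper claims $g_\epsilon,h_\epsilon$ vanish outside the ``compact set'' $\{\bx : dist(\bx,A)\le\epsilon\}$, but this set is compact only when $A$ is bounded, so part \ref{indapprox_pt3} as stated holds only for bounded $A$ (which is all that is needed in the paper's applications, but the lemma statement should really carry a boundedness hypothesis or restrict to Lipschitz continuity for unbounded $A$).
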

	
	\begin{proof}[Proof of \autoref{indapprox}]
		For any $\epsilon>0,$ define
		$$ h_\epsilon(\bx) = \left\{\begin{array}{ll}
			1 -\frac{dist(\bx, A)}{\epsilon} & \text{if}\,\, dist(\bx, A) \leq \epsilon, \\
			0 & \text{otherwise},
		\end{array}\right.$$
		and
		$$ g_\epsilon(\bx) = \left\{\begin{array}{ll}
			\frac{dist(\bx, A^c)}{\epsilon} & \text{if}\,\, dist(\bx, A^c) \leq \epsilon, \\
			1 & \text{otherwise}.
		\end{array}\right.$$
		In other words, $h_\epsilon$ takes the value $0$ on $\{\bx : dist(\bx, A) \geq \epsilon\}$, lies strictly between $0$ and $1$ on $\{\bx : 0 < dist(\bx, A) < \epsilon\}$ and takes the value $1$ on $\bar{A}$. Similarly, $g_\epsilon$ takes the value $0$ on $\overline{A^c}$, lies strictly between $0$ and $1$ on $\{\bx \in A : 0 < dist(\bx, A^c) < \epsilon\}$ and takes the value $1$ on $\{\bx \in A : dist(\bx, A^c) \geq \epsilon\}$. Hence, $0 \leq g_\epsilon \leq \mathbb{I}_A  \leq h_\epsilon \leq 1,$ $h_\epsilon(\bx) - g_\epsilon(\bx) \leq 1$, and $g_\epsilon, h_\epsilon$ are zero outside of the compact set $\{\bx: dist(\bx , A) \leq \epsilon\}.$ If  $dist(\bx, \partial A) \geq \epsilon$, then either $\bx \in A$ and $dist(\bx, A^c) \geq \epsilon$ or $\bx \in A^c$ and $dist(\bx, A) \geq \epsilon$ and therefore, $h_\epsilon(\bx) = g_\epsilon(\bx)$. Using \eqref{indapprox_eq1}, it can now be verified that both $g_\epsilon$ and $h_\epsilon$ are Lipschitz continuous with Lipschitz constant~${1}/{\epsilon}$.
	\end{proof}

	The subsequent results will use assumptions \ref{assumption1} and \ref{assumption2}. Now, we state the uniform-over-$p$ analogue of the Portmanteau theorem (see Theorem 2.1 in \cite{billingsley2013convergence})~below. 
	
	\begin{stheorem}[Uniform-over-$p$ Portmanteau theorem]\label{portmanteau}
		Let $\mu_{n, p}$ and $\mu_p$ be probability measures on $(\mathbb{R}^d, \mathcal{R}^d)$ with corresponding distribution functions $F_{n, p}$ and $F_p$, where $\mathcal{R}^d$ is the Borel sigma field on $\mathbb{R}^d$, and the indices $n, p \in \mathbb{N}$. Suppose that the collection $\{F_p\}_{p \in \mathbb{N}}$ satisfies assumptions \ref{assumption1} and \ref{assumption2}. Then, the following are equivalent:
		\begin{enumerate}[label = (\roman*), ref = (\roman*)]
			\item \label{p1} $ \mu_{n, p} \Longrightarrow \mu_p $ uniformly-over-$p$.
			
			\item \label{p2} For every bounded and uniformly continuous function $ f : \mathbb{R}^d \to \mathbb{R} $,
			\begin{align*}
				\lim\limits_{n \to \infty} \sup_p \left| \int f \mathrm{d} \mu_{n, p} - \int f \mathrm{d} \mu_p \right| = 0 .
			\end{align*}
			
			\item \label{p3} For every continuous function $ f : \mathbb{R}^d \to \mathbb{R} $, which is zero outside of a compact set,
			\begin{align*}
				\lim\limits_{n \to \infty} \sup_p \left| \int f \mathrm{d} \mu_{n, p} - \int f \mathrm{d} \mu_p \right| = 0 .
			\end{align*}
			
			\item \label{p4} For every bounded and Lipschitz continuous function $ f : \mathbb{R}^d \to \mathbb{R} $,
			\begin{align*}
				\lim\limits_{n \to \infty} \sup_p \left| \int f \mathrm{d} \mu_{n, p} - \int f \mathrm{d} \mu_p \right| = 0 .
			\end{align*}

			\item \label{p5} For every Borel set $ A $ satisfying $\lim_{\delta\to 0}\sup_p\mu_p((\partial{A})^\delta) = 0$,
			\begin{align*}
				\lim\limits_{n \to \infty} \sup_p | \mu_{n, p}( A ) - \mu_p( A ) | = 0 .
			\end{align*}
			
			\item \label{p6} For $ \bx $ being a equicontinuity point of $\{ F_p\}_{p \in \mathbb{N}}$,
			\begin{align*}
				\lim\limits_{n \to \infty} \sup_p | F_{n, p}( \bx ) - F_p( \bx ) | = 0 .
			\end{align*}

		\end{enumerate}
	\end{stheorem}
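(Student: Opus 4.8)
The plan is to prove all six statements equivalent via the chain $\ref{p1}\Rightarrow\ref{p2}\Rightarrow\ref{p3}\Rightarrow\ref{p5}\Rightarrow\ref{p6}\Rightarrow\ref{p1}$, together with $\ref{p2}\Rightarrow\ref{p4}\Rightarrow\ref{p5}$ so that \ref{p4} is folded in. Several links are immediate from the definition of uniform-over-$p$ convergence (see \autoref{definition1}): $\ref{p1}\Rightarrow\ref{p2}$ because bounded uniformly continuous functions are bounded continuous, while $\ref{p2}\Rightarrow\ref{p3}$ and $\ref{p2}\Rightarrow\ref{p4}$ because a continuous function vanishing outside a compact set, and a bounded Lipschitz function, are each bounded and uniformly continuous; in each case the relevant class of test functions only shrinks, so the convergence is inherited. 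This leaves the three substantive implications $\ref{p3}\Rightarrow\ref{p5}$ (equally $\ref{p4}\Rightarrow\ref{p5}$), $\ref{p5}\Rightarrow\ref{p6}$, and $\ref{p6}\Rightarrow\ref{p1}$; note that assumptions \ref{assumption1} and \ref{assumption2} will be needed only in the last of these, while equicontinuity enters only in $\ref{p5}\Rightarrow\ref{p6}$.

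For $\ref{p3}\Rightarrow\ref{p5}$ I would invoke \autoref{indapprox}. Given a Borel set $A$ with $\lim_{\delta\to0}\sup_p\mu_p((\partial A)^\delta)=0$ and $\epsilon>0$, choose $\delta_0$ with $\sup_p\mu_p((\partial A)^{\delta_0})<\epsilon$ and take $g_{\delta_0}\le\mathbb{I}_A\le h_{\delta_0}$ as in that lemma; these are bounded, Lipschitz, vanish outside a compact set, and satisfy $\int(h_{\delta_0}-g_{\delta_0})\,\mathrm{d}\mu\le\mu((\partial A)^{\delta_0})$ for any probability measure $\mu$. From $\mu_{n,p}(A)-\mu_p(A)\le\int h_{\delta_0}\,\mathrm{d}\mu_{n,p}-\int g_{\delta_0}\,\mathrm{d}\mu_p$ and the symmetric bound, taking $\sup_p$, then $\limsup_{n\to\infty}$, and applying \ref{p3} (or \ref{p4}) to the functions $h_{\delta_0}$ and $g_{\delta_0}$, one gets $\limsup_n\sup_p|\mu_{n,p}(A)-\mu_p(A)|\le\epsilon$; letting $\epsilon\downarrow0$ closes it. For $\ref{p5}\Rightarrow\ref{p6}$, given an equicontinuity point $\bx$ of $\{F_p\}$, I would verify that the orthant $A_{\bx}=\{\by:\by\le\bx\}$ is of the type admitted in \ref{p5}: its boundary neighborhood $(\partial A_{\bx})^\delta$ is contained in a union of $d$ thin slabs adjacent to the faces, so $\mu_p((\partial A_{\bx})^\delta)\le\sum_{i=1}^{d}\bigl[F_p(\bx+\delta\mathbf{1})-F_p(\bx+\delta\mathbf{1}-2\delta\be_i)\bigr]$, and both arguments lie within $\delta\sqrt d$ of $\bx$, so equicontinuity at $\bx$ sends this bound to $0$ as $\delta\downarrow0$, uniformly in $p$. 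Then \ref{p5} applied to $A_{\bx}$ gives $\sup_p|F_{n,p}(\bx)-F_p(\bx)|=\sup_p|\mu_{n,p}(A_{\bx})-\mu_p(A_{\bx})|\to0$.

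The heart of the argument is $\ref{p6}\Rightarrow\ref{p1}$. Let $f$ be bounded and continuous, and $\eta>0$. By \ref{assumption1}, pick a compact $K$ with $\sup_p\mu_p(K^c)<\eta$. By \ref{assumption2}, $\mathbb{R}^d\setminus C$ is countable, so in each coordinate only countably many values occur as coordinates of points of $\mathbb{R}^d\setminus C$; choosing all grid values in each coordinate to avoid these countable sets, I can produce a box $B=\prod_{i}(-R,R]\supseteq K$ and a partition of $B$ into finitely many sub-boxes $Q_1,\dots,Q_m$, every one of whose $2^d$ corners lies in $C$, with $\mathrm{diam}\,Q_k$ small enough that $f$ oscillates by at most $\eta$ on each $Q_k$ (uniform continuity of $f$ on the compact $B$). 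Writing $\mu(Q_k)$ and $\mu(B)$ as the usual signed sums of the distribution function over corners, all equicontinuity points, \ref{p6} gives $\sup_p|\mu_{n,p}(Q_k)-\mu_p(Q_k)|\to0$ for each $k$ and $\sup_p|\mu_{n,p}(B)-\mu_p(B)|\to0$; the latter, with $\sup_p\mu_p(B^c)\le\sup_p\mu_p(K^c)<\eta$, yields $\sup_p\mu_{n,p}(B^c)<2\eta$ for large $n$, i.e. a derived uniform tightness of $\{\mu_{n,p}\}$. Approximating $f$ on $B$ by $\sum_k f(\bx_k)\mathbb{I}_{Q_k}$ and assembling the finitely many terms then gives $\limsup_{n}\sup_p\bigl|\int f\,\mathrm{d}\mu_{n,p}-\int f\,\mathrm{d}\mu_p\bigr|\le c\,\|f\|_\infty\,\eta$ for an absolute constant $c$, and $\eta\downarrow0$ finishes.

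I expect $\ref{p6}\Rightarrow\ref{p1}$ to be the main obstacle. The delicate points are: (a) producing arbitrarily fine partitions whose corners all avoid the countable exceptional set $\mathbb{R}^d\setminus C$, so that \ref{p6} applies at each corner; (b) upgrading \ref{p6} together with tightness of $\{\mu_p\}$ into a uniform-over-$p$ bound on the tail mass of $\{\mu_{n,p}\}$, which is not assumed; and (c) keeping every estimate uniform over $p$ while passing through the finitely many boxes and the three successive approximations (truncation to $B$, the step-function approximation, and the corner sums). The remaining implications are routine once \autoref{indapprox} is in hand.
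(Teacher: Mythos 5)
Your proposal is correct and follows essentially the same approach as the paper's proof: use \autoref{indapprox} to squeeze indicator functions for \ref{p3}$\Rightarrow$\ref{p5}, the orthant argument for \ref{p5}$\Rightarrow$\ref{p6}, and box partitions with corners avoiding the countable exceptional set together with tightness for \ref{p6}$\Rightarrow$\ref{p1}. Your organization is in fact slightly tighter than the paper's: by noting that bounded Lipschitz functions and continuous compactly supported functions both sit inside the class of bounded uniformly continuous functions, you get \ref{p2}$\Rightarrow$\ref{p3} and \ref{p2}$\Rightarrow$\ref{p4} for free and only need one version of the ``squeeze'' argument; and you fold the paper's two-step (\ref{p6}$\Rightarrow$\ref{p3}, then \ref{p6}+\ref{p3}$\Rightarrow$\ref{p1}) into a single direct \ref{p6}$\Rightarrow$\ref{p1} that derives the uniform-over-$p$ tail bound on $\{\mu_{n,p}\}$ from the corner sums applied to the enclosing box $B$.
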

	
	\begin{proof}[Proof of \autoref{portmanteau}]
		\ref{p1} $\implies$ \ref{p2}, \ref{p1} $\implies$ \ref{p3}, and \ref{p1} $\implies$ \ref{p4} are obvious from \autoref{definition1}. We shall prove that each of \ref{p2}, \ref{p3}, and \ref{p4} implies \ref{p5} and that \ref{p5} $\implies$ \ref{p6} $\implies$ \ref{p1}.
		
		\ref{p2} $\implies$ \ref{p5}:
		Fix any $A \in \mathbb{R}^d$ such that $\lim_{\delta \to 0}\sup_p\mu_p((\partial{A})^\delta) = 0$. Fix $\epsilon > 0$. Choose a $\delta > 0$ such that $\sup_p\mu_p((\partial{A})^\delta) < \epsilon$. Consider the functions $g_\delta$ and $h_\delta$ as defined in \autoref{indapprox}.  Both functions are Lipschitz continuous and hence, uniformly continuous. Therefore, there exists $N \in \mathbb{N}$ such that for any $n \geq N$ and for all $p \in \mathbb{N}$,
		$$\int g_\delta d\mu_{p} - \epsilon < \int g_\delta d\mu_{n,p} < \int g_\delta d\mu_{p} + \epsilon \quad \text{and}\quad  \int h_\delta d\mu_{p} - \epsilon < \int h_\delta d\mu_{n,p} < \int h_\delta d\mu_{p} + \epsilon.$$
		Now, $g_\delta \leq \mathbb{I}_A \leq h_\delta$. For all $n \geq N$ and $p \in \mathbb{N}$, we get
		$$\int g_\delta d\mu_p \leq \mu_{p}(A) \leq \int h_\delta d\mu_{p},$$
		$$\int g_\delta d\mu_{p} - \epsilon < \int g_\delta d\mu_{n,p} \leq \mu_{n,p}(A) \leq \int h_\delta d\mu_{n,p} < \int h_\delta d\mu_{p} + \epsilon.$$
		This yields
		$$|\mu_{n,p}(A) - \mu_p(A)| \leq \int (h_\delta - g_\delta) d\mu_p + 2\epsilon.$$
		Now, using part \ref{indapprox_pt2} of \autoref{indapprox} we have
		$$\int (h_\delta - g_\delta) d\mu_p \leq \int_{\{\bx: dist(x, \partial A) < \delta\}} d\mu_p \leq \sup_p \mu_p((\partial{A})^\delta) < \epsilon$$
		for all $p \in \mathbb{N}$. Thus, we have
		$$\sup_p |\mu_{n,p}(A) - \mu_p(A)| \leq 3\epsilon$$
		for all $n \geq N$. Hence, $$\lim_{n \to \infty}\sup_p |\mu_{n,p}(A) - \mu_p(A)| = 0.$$
		
		\ref{p3} $\implies$ \ref{p5}: By \autoref{indapprox}, the functions $g_\delta$ and $h_\delta$ are zero outside a compact set. Hence, the proof of \ref{p2} $\implies$ \ref{p5} applies without any modification. 
		
		\ref{p3} $\implies$ \ref{p5}: The functions $g_\delta$ and $h_\delta$ are Lipschitz continuous as well. Hence, the earlier proof carries over verbatim. 
		
		\ref{p5} $\implies$ \ref{p6}:
		Let $\bx = (x_1, \ldots, x_d)^\top$ be an equicontinuity point of $F_p$. Consider $A = (-\infty, x_1] \times \cdots \times (-\infty, x_d]$. Fix any $\epsilon > 0$. There exists $\delta_0 > 0$ such that for any $\by$ with $\|\bx -\by \| < \delta_0$, and we get $|F_p(\bx) - F_p(\by)| < \epsilon$ for any $p \in \mathbb{N}$. Let $0 <\delta < \frac{\delta_0}{2\sqrt{d}}$. Since there are only countably many non-equicontinuity points for the collection $\{F_p\}_{p\in\mathbb{N}}$, we can find two equicontinuity points $\bv = (v_1,\ldots, v_d)^\top$ and $\bu = (u_1,\ldots, u_d)^\top$ such that $u_i = x_i -r_1$ and $v_i = x_i + r_2$ for all $1 \leq i \leq d$ and $\frac{\delta_0}{2\sqrt{d}} < r_1, r_2 < \frac{\delta_0}{\sqrt{d}}$. Thus, we have $\delta_0 / 2 < \|\bx - \bu\| < \delta_0$ and $\delta_0 / 2 < \|\bx - \bv\| < \delta_0$. Let $\bz = (z_1, \ldots,z_d)^\top \in (\partial{A})^\delta$. Observe that if $\bz \in A^c$, then $dist(\bz, \partial A) = \sqrt{\sum_i [(z_i -x_i)_{+}]^2}$ where $(z_i -x_i)_{+} = (z_i -x_i)$ if $(z_i -x_i) \geq 0$ and $(z_i -x_i)_{+} = 0$ otherwise. If $\bz \in A$, then $dist(\bz, \partial A) = \min_i (x_i - z_i)$ . Using these observations we conclude that $dist(\bz, \partial A) < \delta$ implies $z_i < x_i + \delta$ for all $1 \leq i \leq d$ and $z_i > x_i -\delta$ for some $i$, which further implies $\bz \in (-\infty, x_1+ \delta] \times \cdots \times (-\infty, x_d+\delta]$ and $\bz \notin (-\infty, x_1-\delta] \times \cdots \times(-\infty, x_d-\delta]$. So, 
		\begin{align*}
			\mu_p((\partial{A})^\delta) &\leq  \mu_p(\{\bz: \bz \in (-\infty, x_1+ \delta] \times \cdots \times(-\infty, x_d+\delta], \bz \notin (-\infty, x_1-\delta] \times \cdots \times(-\infty, x_d-\delta]\})\\
			&\leq  \mu_p(\{\bz: \bz \in (-\infty, v_1] \times \cdots \times(-\infty, v_d], \bz \notin (-\infty, u_1] \times \cdots \times(-\infty, u_d]\})\\
			&= F_p(\bv) - F_p(\bu) \\
			&= (F_p(\bv) - F_p(\bx)) + (F_p(\bx) - F_p(\bu))& \\
			&< 2\epsilon \text{ for all } p \in \mathbb{N} \text{ and } 0 < \delta < \delta_0.
		\end{align*}
		Therefore, $\lim_{\delta \to 0} \sup_p\mu_p((\partial{A})^\delta) = 0$.
		By \ref{p5},
		$\lim_{n \to \infty} \sup_p |\mu_{n,p}(A) - \mu_p(A)| = 0$ which implies $\lim_{n \to \infty} \sup_p |F_{n,p}(x) - F_p(x)| = 0.$ This proves our claim.
		
		Now, we show that \ref{p6} $\implies$ \ref{p3}. After this, we shall prove that \ref{p6} and \ref{p3} together imply \ref{p1}. This will prove \ref{p6} $\implies$ \ref{p1}. 
		
		\ref{p6} $\implies$ \ref{p3}:
		Under assumption \ref{assumption2}, $A_i = \{u \in \mathbb{R}: (x_1, \ldots,x_d)^\top \in \mathbb{R}^d \setminus C, x_i =u\}$ is at most countable for $i =1, \ldots,d$. Hence, the set
		\begin{equation}
			B = \cup_{i=1}^d A_i = \cup_{i=1}^d\{u \in \mathbb{R}: (x_1, \ldots,x_d)^\top \in \mathbb{R}^d \setminus C, x_i =u\}
			\label{portmanteau_eq16}
		\end{equation}
		is at most countable. Clearly, any $\bx = ( x_1, \ldots, x_d )^\top$ such that $x_i \notin B$ for every $i$ is an equicontinuity point of $\{F_p\}_{p\in \mathbb{N}}$. 
		
		Next, let $\ba = ( a_1, \ldots, a_d )^\top$ and $\bb = ( b_1, \ldots, b_d )^\top$ be such that $a_i \notin B$, $b_i \notin B$ and $a_i < b_i$ for every $i = 1,\ldots,d$. Let $A = \{ \bx = ( x_1, \ldots, x_d )^\top \in \mathbb{R}^d : a_i < x_i \le b_i \text{ for } i = 1, \ldots, d \}$. We proceed to show that under \ref{p6}, $\sup_p | \mu_{n, p}( A ) - \mu_p( A ) | \to 0$ as $n \to \infty$. Let $\bc = ( c_1, \ldots, c_d )^\top$ be such that each $c_i$ is either $a_i$ or $b_i$, and let $n( \bc, \ba )$ denote the number of $a_i$'s in the components of the vector $\bc$. So, there are $2^d$ possible values of the vector $\bc$ depending on the values of $\ba$ and~$\bb$. Let $I_{r, i} : \mathbb{R}^d \to \mathbb{R}$ be defined as $I_{r, i}( \bx ) = \mathbb{I}\{\bx = ( x_1, \ldots, x_d )^\top : x_i \le r \}$ for $i = 1, \ldots, d$ and $r \in \mathbb{R}$. Clearly, for any probability measure $\nu$ on $\mathbb{R}^d$, $\nu( A ) = \int \prod_{i = 1}^d \{ I_{b_i, i}( \bx ) - I_{a_i, i}( \bx ) \} \mathrm{d} \nu( \bx )$. Also, $\prod_{i = 1}^d \{ I_{b_i, i}( \bx ) - I_{a_i, i}( \bx ) \} = \sum_{\bc} (-1)^{ n( \bc, \ba ) } \prod_{i = 1}^d I_{c_i, i}( \bx )$, where the sum is over all $2^d$ values of~$\bc$. Hence, $\nu( A ) = \sum_{\bc} (-1)^{ n( \bc, \ba ) } \int \prod_{i = 1}^d I_{c_i, i}( \bx ) \mathrm{d} \nu( \bx ) = \sum_{\bc} (-1)^{ n( \bc, \ba ) } \nu( B_\bc )$, where $B_\bc = \{ \bx = ( x_1, \ldots, x_d )^\top \in \mathbb{R}^d : x_i \le c_i \text{ for } i = 1, \ldots, d \}$. Since $\mu_{n, p}( B_\bc ) = F_{n, p}( \bc )$ and $\mu_p( B_\bc ) = F_p( \bc )$, we get that $\mu_{n, p}( A ) = \sum_{\bc} (-1)^{ n( \bc, \ba ) } F_{n, p}( \bc )$ and $\mu_p( A ) = \sum_{\bc} (-1)^{ n( \bc, \ba ) } F_p( \bc )$, which from \ref{p6}~yields
		\begin{equation}
			\sup_p | \mu_{n, p}( A ) - \mu_p( A ) | \le \sum_{\bc} (-1)^{ n( \bc, \ba ) } \sup_p \left| F_{n, p}( \bc ) - F_p( \bc ) \right| \to 0 \text{ as } n \to \infty.
			\label{portmanteau_eq17}
		\end{equation}
		Let $f : \mathbb{R}^d \to \mathbb{R}$ be a continuous function which is zero outside the compact set $K$. The restriction of $f$ onto $K$ is a continuous function defined on a compact set and hence is bounded and uniformly continuous. From this it is easy to show that $f$ is uniformly continuous on $\mathbb{R}^d$. Fix $\epsilon > 0$. Then there is a $\delta > 0$ such that $\| \bx - \by \| < \delta$ implies $| f( \bx ) - f( \by ) | <~\epsilon$. We now construct a finite collection of disjoint hyper-rectangles covering $K$ such that for every vertex $\bx = ( x_1, \ldots, x_d )^\top$ of any of the hyper-rectangles, we have $x_i \notin B$ for $i = 1, \ldots, d$, and hence, every such vertex is an equicontinuity point of $\{F_p\}_{p \in \mathbb{N}}$, and also, for points $\bx$ and $\by$ lying in a hyper-rectangle~$\| \bx - \by \| < \delta$. 
		
		The construction of hyper-rectangles is as follows. Since $B$ given in \eqref{portmanteau_eq16} is at most countable, $B^c$ is dense in $\mathbb{R}$. Because $K$ is compact and $B^c$ is dense in $\mathbb{R}$, there is $s > 0$ such that $s \in B^c$ and $K$ is a subset of the hyper-rectangle $( -s, s ]^d$. Take $s_1 = -s$, and given $s_k < s$ with $s - s_k \ge \delta / \sqrt{d}$, choose $s_{k + 1} > s_k$ such that $s_{k + 1} \in B^c$, $s_{k + 1} \le s$ and $\delta / 2 \sqrt{d} < s_{k + 1} - s_k < \delta / \sqrt{d}$. If $s - s_k < \delta / \sqrt{d}$, take $s_{k + 1} = s$. Such a sequence of $s_k$'s can be chosen because $B^c$ is dense. Also, by the method of construction, the set $\{ s_k : k \in \mathbb{N}\}$ is finite and can have at most $ ( 4 s \sqrt{d} / \delta ) + 1$ many elements. Let $q$ be the number of elements in $\{ s_k : k \in \mathbb{N}\}$. Then, the collection of intervals $\{ ( s_k, s_{k + 1} ] : k = 1, \ldots, q - 1 \}$ partitions $( -s, s ]$ and $s_i \notin B$ for $i = 1, \ldots, q$. Consider the corresponding partition of $( -s, s ]^d$ formed by hyper-rectangles with vertices $\bb = ( b_1, \ldots, b_d )^\top$ such that $b_i = s_j$ for some $j \in \{ 1, \ldots, q \}$ and $i = 1, \ldots, d$. Clearly, this partition has at most $(q - 1)^d$ many hyper-rectangles, which are formed by taking products of the intervals $\{ ( s_k, s_{k + 1} ] \}$. Now, from this partition of $( -s, s ]^d$, we remove those hyper-rectangles $A$ such that $A \cap K = \emptyset$. Let $m$ hyper-rectangles remain in the collection, and we denote them as $A_1, \ldots, A_m$. The collection $\{ A_1, \ldots, A_m \}$ of disjoint hyper-rectangles cover $K$ and for every vertex $\bb = ( b_1, \ldots, b_d )^\top$ of any of the hyper-rectangles, we have $b_i \notin B$ for $i = 1, \ldots, d$. If $\bx = ( x_1, \ldots, x_d )^\top$ and $\by = ( y_1, \ldots, y_d )^\top$ lie in a hyper-rectangle, then $| x_i - y_i | < \delta / \sqrt{d}$ for every $1 \leq i \leq d$ and hence, $\| \bx - \by \| < \delta$. Each $A_k$ is of the form $A_k = \prod_{i = 1}^d ( s_{k_i}, s_{k_i + 1} ]$, and define $\bz_k = ( z_{k, 1}, \ldots, z_{k, d} )^\top$ such that $z_{k, i} = ( s_{k_i} + s_{k_i + 1} ) / 2$ for $i = 1, \ldots, d$, i.e., $\bz_k$ is the mid-point of the hyper-rectangle $A_k$. In addition, from \eqref{portmanteau_eq17} we~get
		\begin{align}
			\sup_p | \mu_{n, p}( A_k ) - \mu_p( A_k ) | \to 0 \text{ as } n \to \infty \text{ for } k = 1, \ldots, m .
			\label{portmanteau_eq18}
		\end{align}
		
		Now, corresponding to $f$, we define another function $f_1 : \mathbb{R}^d \to \mathbb{R}$ in the following way. For any $\bx$, if $\bx \notin \cup_{i = 1}^m A_i$, then define $f_1( \bx ) = 0$. If $\bx \in \cup_{i = 1}^m A_i$, then there is exactly one hyper-rectangle $A_k$ which contains $\bx$, because $A_i$'s are disjoint, and define $f_1( \bx )$ for such an $\bx$ as $f_1( \bx ) = f( \bz_k )$, where $\bz_k$ is the mid-point of $A_k$. Note that for $\bx \notin \cup_{i = 1}^m A_i$, $f( \bx ) = f_1( \bx ) = 0$. For $\bx \in A_i$ for any $A_i$, we get $| f( \bx ) - f_1( \bx ) | = | f( \bx ) - f( \bz_i ) | < \epsilon$, since $\| \bx - \bz \| < \delta$. Therefore,
		\begin{align}
			\sup_\bx | f( \bx ) - f_1( \bx ) |
			& = \max\left\{ \sup_{\bx \in \cup_{i = 1}^m A_i} | f( \bx ) - f_1( \bx ) |, \sup_{\bx \notin \cup_{i = 1}^m A_i} | f( \bx ) - f_1( \bx ) | \right\} \nonumber\\
			& = \sup\left\{ | f( \bx ) - f_1( \bx ) | : \bx \in \cup_{i = 1}^m A_i \right\} \nonumber\\
			& = \max\left[ \sup\left\{ | f( \bx ) - f_1( \bx ) | : \bx \in A_i,\; i = 1, \ldots, m \right\} \right]
			< \epsilon .
			\label{portmanteau_eq19}
		\end{align}
		On the other hand, we have
		\begin{align*}
			\int f_1 \mathrm{d} \mu_{n, p} = \sum_{k = 1}^m f( \bz_k ) \mu_{n, p}( A_k ) \;\text{and}\;
			\int f_1 \mathrm{d} \mu_p = \sum_{k = 1}^m f( \bz_k ) \mu_p( A_k ) ,
		\end{align*}
		which from equation \eqref{portmanteau_eq18} implies that
		\begin{align}
			\sup_p \left| \int f_1 \mathrm{d} \mu_{n, p} - \int f_1 \mathrm{d} \mu_p \right|
			\le \sum_{k = 1}^m \left| f( \bz_k ) \right| \sup_p \left| \mu_{n, p}( A_k ) - \mu_p( A_k ) \right| \to 0 \mbox{ as } n \to \infty.
			\label{portmanteau_eq20}
		\end{align}
		Using \eqref{portmanteau_eq19} and \eqref{portmanteau_eq20}, we get
		\begin{align*}
			\sup_p \left| \int f \mathrm{d} \mu_{n, p} - \int f \mathrm{d} \mu_p \right|
			& \le \sup_p \left| \int \{ f( \bx ) - f_1( \bx ) \} \mathrm{d} \mu_{n, p}( \bx ) \right| + \sup_p \left| \int \{ f( \bx ) - f_1( \bx ) \} \mathrm{d} \mu_p( \bx ) \right| \\
			& \quad + \sup_p \left| \int f_1 \mathrm{d} \mu_{n, p} - \int f_1 \mathrm{d} \mu_p \right| \\
			& \le 2 \sup_\bx\left| f( \bx ) - f_1( \bx ) \right| 
			+ \sup_p \left| \int f_1 \mathrm{d} \mu_{n, p} - \int f_1 \mathrm{d} \mu_p \right| \\
			& < 3 \epsilon
		\end{align*}
		for all sufficiently large $n$. This establishes \ref{p6} $\implies$ \ref{p3} since $\epsilon > 0$ is arbitrary. 
		
		\ref{p6} $\implies$ \ref{p1}: We shall show \ref{p6} and \ref{p3} together implies \ref{p1}. Fix $\epsilon > 0$. Since $\{\mu_p\}_{p\in \mathbb{N}}$ is assumed to be tight, there exists a compact set $K$ such that $\mu_p(K^c) < \epsilon$ for all $p$. Since the number of non-equicontinuity points for the collection $\{F_p\}$ is countable, by an argument similar to the one that leads to (\ref{portmanteau_eq17}), we can find a rectangular set $R$ such that $ R \supseteq K$ and $\lim_{n\to\infty} \sup_p |\mu_{n,p}(R) - \mu_p(R)| = 0$ which implies $\lim_{n\to\infty} \sup_p |\mu_{n,p}(R^c) - \mu_p(R^c)| = 0.$ There exists $N$ such that for $n \geq N$ and for any $p \in \mathbb{N}$, we have
		\begin{equation}
			\mu_{n,p}(R^c) < \mu_{p}(R^c) + \epsilon < \mu_p(K^c) + \epsilon < 2\epsilon.
			\label{portmanteau_eq21}
		\end{equation}
		Let $f$ be any bounded continuous function, and $M > 0$ be an upper bound for $f$. Using \autoref{indapprox}, we can obtain a continuous function $0 \leq h \leq 1$ such that $h$ is $1$ on $R$ and is zero outside of a compact set. Hence, $f h$ is a continuous function that is zero outside of a compact set. Further, there exists $N_1$ such that for $n \geq N_1$ and any $p \in \mathbb{N}$, we get
		$$|\int fh d\mu_{n,p} - \int fh d\mu_p| < \epsilon.$$
		Thus, for any $p \in \mathbb{N}$ and $n > \max(N, N_1)$, we have
		\begin{align*}
			|\int f d\mu_{n,p} - \int f d\mu_p| &\leq  |\int fh d\mu_{n,p} - \int fh d\mu_p| + |\int f(1-h) d\mu_{n,p}| + |\int f(1-h) d\mu_p|\\
			&< \epsilon + |\int_{R^c} f(1-h) d\mu_{n,p}| + |\int_{R^c} f(1-h) d\mu_p| \\
			&\leq \epsilon + M\mu_{n,p}(R^c)+ M \mu_{p}(R^c)\\
			&\leq (3M + 1)\epsilon.
		\end{align*}
		This shows that
		$\lim_{n\to \infty}\sup_p|\int f d\mu_{n,p} - \int f d\mu_p| = 0$, which completes the proof of this~theorem.
	\end{proof}
	
	Next, we state and prove a uniform-over-$p$ version of Slutsky's theorem. First, we need the following lemma.
	
	\begin{slemma}\label{lemmaSlutsky}
		Let $\{ X_{n, p} \}_{p \in \mathbb{N}}$, $\{ Y_{n, p} \}_{p \in \mathbb{N}}$ and $\{ X_p \}_{p \in \mathbb{N}}$ be real-valued random variables and $\{ c_p \}_{p \in \mathbb{N}}$ be a bounded sequence of real numbers with countable closure. Then, the collection of measures assigning point mass $1$ to $\{ c_p \}_{p \in \mathbb{N}}$ satisfies the assumptions \ref{assumption1} and \ref{assumption2}. Moreover, if $\{X_p\}_{p \in \mathbb{N}}$ satisfies these assumptions, so does the collection of vectors $\{\left( X_p, c_p \right)^\top\}_{p \in \mathbb{N}}$. Finally, if $X_{n, p} \Longrightarrow X_p$ uniformly-over-$p$ and $Y_{n, p} \stackrel{\P}{\longrightarrow} c_p$ uniformly-over-$p$, then $\left( X_{n, p}, Y_{n, p} \right)^\top \Longrightarrow \left( X_p, c_p \right)^\top$ uniformly-over-$p$.
	\end{slemma}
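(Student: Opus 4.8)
The plan is to treat the three assertions in order, deducing the last one (the joint convergence) from the uniform-over-$p$ L\'evy continuity theorem (\autoref{levy}) applied to the limit family $\{(X_p,c_p)^\top\}$; the first two assertions are precisely what is needed to license that application, so they are prerequisites rather than independent statements.

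First I would handle the family $\{\mu_p\}$ of point masses at $c_p$. Tightness (\ref{assumption1}) is immediate: boundedness of $\{c_p\}$ gives a compact interval $K$ with $\mu_p(K)=1$ for every $p$. For \ref{assumption2}, the distribution function of $\mu_p$ is $F_p(x)=\mathbb{I}\{x\ge c_p\}$, which is constant on every open interval disjoint from the closure $\overline{\{c_p\}}$ of the set of values; hence every point outside $\overline{\{c_p\}}$ is an equi-continuity point of $\{F_p\}_{p\in\mathbb{N}}$, so the set of non-equi-continuity points is contained in $\overline{\{c_p\}}$, which is countable by hypothesis.

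Next, for the vector family $\{(X_p,c_p)^\top\}$, tightness follows by taking the product of a compact witnessing tightness of $\{X_p\}$ with a compact interval containing $\{c_p\}$. For \ref{assumption2} the key observation is that the joint distribution function factors, $G_p(x,y)=F_p^X(x)\,\mathbb{I}\{y\ge c_p\}$, where $F_p^X$ is the df of $X_p$. Using this, one checks directly that $(x_0,y_0)$ is an equi-continuity point of $\{G_p\}$ whenever $x_0$ is an equi-continuity point of $\{F_p^X\}$ and $y_0\notin\overline{\{c_p\}}$: given $\epsilon>0$, choose $\delta_1>0$ with $\sup_p|F_p^X(x)-F_p^X(x_0)|<\epsilon$ for $|x-x_0|<\delta_1$, and $\delta_2>0$ so small that $(y_0-\delta_2,y_0+\delta_2)$ misses $\overline{\{c_p\}}$, so that $\mathbb{I}\{y\ge c_p\}=\mathbb{I}\{y_0\ge c_p\}$ there for every $p$; then for $\|(x,y)-(x_0,y_0)\|<\min(\delta_1,\delta_2)$ one gets $|G_p(x,y)-G_p(x_0,y_0)|=\mathbb{I}\{y_0\ge c_p\}\,|F_p^X(x)-F_p^X(x_0)|<\epsilon$ uniformly in $p$. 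Since \ref{assumption2} holds for $\{X_p\}$ and $\overline{\{c_p\}}$ is countable, \ref{assumption2} then follows for $\{G_p\}$. I expect this step — isolating equi-continuity points of the joint family and keeping careful track of the coordinate hyperplanes where the indicator factor jumps — to be the main obstacle of the proof; the other two assertions are comparatively routine.

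Finally, with \ref{assumption1} and \ref{assumption2} in hand for $\{(X_p,c_p)^\top\}$, I would establish the joint convergence through characteristic functions. Writing $\varphi$ for characteristic functions and $i=\sqrt{-1}$, one has
\[
\varphi_{(X_{n,p},Y_{n,p})}(s,t)-\varphi_{(X_p,c_p)}(s,t)
=\E\!\left[e^{isX_{n,p}}\bigl(e^{itY_{n,p}}-e^{itc_p}\bigr)\right]+e^{itc_p}\bigl(\varphi_{X_{n,p}}(s)-\varphi_{X_p}(s)\bigr).
\]
For the first term, since $|e^{isX_{n,p}}|=1$ we bound $|e^{itY_{n,p}}-e^{itc_p}|\le\min\{2,|t|\,|Y_{n,p}-c_p|\}\le 2\,\mathbb{I}\{|Y_{n,p}-c_p|>\eta\}+|t|\eta$ for every $\eta>0$, so taking expectations and then $\sup_p$, the uniform-over-$p$ convergence $Y_{n,p}\stackrel{\P}{\longrightarrow}c_p$ drives this term to $0$ (let $n\to\infty$, then $\eta\to0$). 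For the second term, $|e^{itc_p}|=1$ and $\sup_p|\varphi_{X_{n,p}}(s)-\varphi_{X_p}(s)|\to0$ by applying \autoref{definition1} to the bounded continuous functions $x\mapsto\cos(sx)$ and $x\mapsto\sin(sx)$ (the trivial direction of \autoref{levy}). Hence $\sup_p|\varphi_{(X_{n,p},Y_{n,p})}(s,t)-\varphi_{(X_p,c_p)}(s,t)|\to0$ for every $(s,t)\in\mathbb{R}^2$, and \autoref{levy} applied to the tight, equi-continuous family $\{(X_p,c_p)^\top\}$ yields $(X_{n,p},Y_{n,p})^\top\Longrightarrow(X_p,c_p)^\top$ uniformly-over-$p$.
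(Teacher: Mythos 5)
Your proof is correct, but the final (and main) assertion is established by a genuinely different route than the paper's. The paper works directly with bounded uniformly continuous test functions and the uniform-over-$p$ Portmanteau theorem: it approximates the bounded sequence $\{c_p\}$ by a finite $\delta$-net $\{a_1,\ldots,a_m\}$, passes from $\E[f(X_{n,p},a_k)]\to\E[f(X_p,a_k)]$ (uniformly over $p$, for each fixed $a_k$) to $\E[f(X_{n,p},c_p)]\to\E[f(X_p,c_p)]$, and then bridges $Y_{n,p}$ to $c_p$ via the uniform-over-$p$ convergence in probability; the conclusion is drawn from criterion \ref{p2} of \autoref{portmanteau}. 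You instead pass to characteristic functions, exploit the exact factorization
\[
\varphi_{(X_{n,p},Y_{n,p})}(s,t)-\varphi_{(X_p,c_p)}(s,t)
=\E\!\left[e^{isX_{n,p}}\bigl(e^{itY_{n,p}}-e^{itc_p}\bigr)\right]+e^{itc_p}\bigl(\varphi_{X_{n,p}}(s)-\varphi_{X_p}(s)\bigr),
\]
bound the first piece by the Lipschitz estimate $|e^{itY_{n,p}}-e^{itc_p}|\le\min\{2,|t|\,|Y_{n,p}-c_p|\}$, and invoke \autoref{levy} in its non-trivial direction. Your argument is noticeably shorter and cleaner because a deterministic coordinate multiplies out of the characteristic function, so no finite-net approximation is needed. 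The trade-off is that your route leans on the harder half of the uniform-over-$p$ L\'evy theorem, whereas the paper's argument stays inside the Portmanteau machinery and so is more elementary in its dependency structure (the only non-trivial implication used is \ref{p2}$\implies$\ref{p1} of \autoref{portmanteau}). Both approaches require the verification that $\{(X_p,c_p)^\top\}$ satisfies \ref{assumption1} and \ref{assumption2}; for parts one and two your argument is substantively the same as the paper's, with your explicit use of the product form $G_p(x,y)=F_p^X(x)\,\mathbb{I}\{y\ge c_p\}$ being a somewhat tidier way to locate the equi-continuity points. One caveat you share with the paper: the set of non-equi-continuity points of $\{G_p\}$ is contained in $(\mathbb{R}\setminus C^X)\times\mathbb{R}\ \cup\ \mathbb{R}\times\overline{\{c_p\}}$, a countable union of lines rather than a countable set of points, so the literal statement of \ref{assumption2} is not quite satisfied; this is a pre-existing imprecision in the paper's formulation (the downstream Portmanteau argument only needs the coordinate projections of the exceptional set to be countable), not a defect introduced by you.
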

	
	\begin{proof}[Proof of \autoref{lemmaSlutsky}]
		Let $\delta_x$ denote the probability measure assigning point mass $1$ at $x$. Since $\{c_p\}$ is bounded, the collection $\{\delta_{c_p}\}$ is tight. Let $x \in \mathbb{R}$ be any point outside the closure of $\{c_p\}$. We can find a open interval containing $x$ which contains none of the points in the sequence $\{c_p\}$ and hence, the distribution function of $\delta_{c_p}$ is constant over this interval for any $p$. Thus $x$ is a point of equicontinuity for the collection of functions $\{F_{\delta_{c_p}}\}_{p\in\mathbb{N}}$. Here, $F_Y$ denote the cumulative distribution function of the random variable $Y$. Since the closure of $\{c_p\}$ is at most countable, $\{\delta_{c_p}\}$ satisfies assumptions \ref{assumption1} and \ref{assumption2}.
		
		Now, suppose that $\{F_{X_p}\}_{p\in\mathbb{N}}$ is equicontinuous at $x$ and $\{F_{\delta_{c_p}}\}_{p\in\mathbb{N}}$ is equicontinuous at $y$. For any $\epsilon >0$, we can get $\delta_1, \delta_2 > 0$ such that if $|x - z_1| < \delta_1$ then $\sup_p|F_{X_p}(x) - F_{X_p}(z_1)| < \epsilon$ and if $|x - z_2| < \delta_2$ then $\sup_p|F_{\delta_{c_p}}(x) - F_{\delta_{c_p}}(z_2)| < \epsilon$. Take $\delta = \min(\delta_1, \delta_2)$.  Let $(z_1, z_2) \in \mathbb{R}^2$ be such that $\|(z_1,z_2) - (x,y)\| < \delta$. Then, $|z_1 -x| < \delta, |z_2 -y| < \delta$. It is easy to check that
		\begin{align*}
			&\sup_p|\P(X_p \leq x, \delta_{c_p} \leq y) -\P(X_p \leq z_1, \delta_{c_p}\leq z_2)| \\
			&\leq \sup_p \left[\P(z_1 < X \leq x) + \P(z_2 < \delta_{c_p} \leq y) + \P(x < X \leq z_1) + \P(y < \delta_{c_p} \leq z_2) \right] \\
			&\leq \sup_p|F_{X_p}(x) - F_{X_p}(z_1)| + \sup_p|F_{\delta_{c_p}}(x) - F_{\delta_{c_p}}(z_2)| \\
			&\leq 2\epsilon.
		\end{align*}
		Since $\epsilon > 0$ is arbitrary, $\{(X_p, \delta_{c_p})\}$ satisfies assumption \ref{assumption2}. Now, given any $\epsilon > 0$ we can find two closed bounded intervals $K_1, K_2 \subseteq \mathbb{R}$ such that $\sup_p\P(X_p \notin K_1) < \epsilon$ and $\sup_p\P(\delta_{c_p} \notin K_2) < \epsilon$. Taking $K = K_1 \times K_2$, we have $\sup_p\P((X_p, \delta_{c_p}) \notin K) \leq \sup_p\P(X_p \notin K_1) + \sup_p\P(\delta_{c_p} \notin K_2) \leq 2\epsilon$. Thus, $\{(X_p, \delta_{c_p})\}$ satisfies assumption \ref{assumption1}.
		
		Now, let $f : \mathbb{R}^2 \to \mathbb{R}$ be a bounded and uniformly continuous function. Given any $\epsilon > 0$, there exists a $\delta > 0$ such that $\| ( x_1, x_2 ) - ( y_1, y_2 ) \| \le \delta$ implies that $| f( x_1, x_2 ) - f( y_1, y_2 ) | < \epsilon$.
		Therefore, $| c_1 - c_2 | \le \delta$ implies that for any probability measure $\nu$ on $\mathbb{R}^d$, we obtain
		\begin{align}
			& \left| \int f( x, c_1 ) \mathrm{d} \nu( x ) - \int f( x, c_2 ) \mathrm{d} \nu( x ) \right|
			\le \int \left| f( x, c_1 ) - f( x, c_2 ) \right| \mathrm{d} \nu( x )
			< \epsilon .
			\label{lemmaSlutsky_eq1}
		\end{align}
		Since $\{ c_p \}$ is a bounded sequence, we can construct a finite collection $\{ a_1, \ldots, a_m \} \subset \mathbb{R}$ such that for every $c_p$, we have $| c_p - a_k | < \delta$ for at least one $a_k$. Let $b_p$ denote the number $a_k \in \{ a_1, \ldots, a_m \}$ closest to $c_p$. Using \eqref{lemmaSlutsky_eq1}, we get
		\begin{align*}
			\left| \E[ f( X_{n, p}, c_p ) ] - \E[ f( X_{n, p}, b_p ) ] \right|
			+ \left| \E[ f( X_p, c_p ) ] - \E[ f( X_p, b_p ) ] \right|
			< 2 \epsilon.
		\end{align*}
		This implies that for every $p$ and any $n$, we have
		\begin{align}
			\min\{ \left| \E[ f( X_{n, p}, c_p ) ] - \E[ f( X_{n, p}, a_k ) ] \right| + \left| \E[ f( X_p, c_p ) ] - \E[ f( X_p, a_k ) ] \right| : k = 1, \ldots, m \} < 2 \epsilon.
			\label{lemmaSlutsky_eq2}
		\end{align}
		Next, define $g_c : \mathbb{R} \to \mathbb{R}$ by $g_c( x ) = f( x, c )$ for $c \in \mathbb{R}$. Then, $g_c$ is a bounded and uniformly continuous function for every $c$, and from \ref{p2} in \autoref{portmanteau}, we have
		\begin{align*}
			\sup_p \left| \E[ f( X_{n, p}, c ) ] - \E[ f( X_p, c ) ] \right| =
			\sup_p \left| \E[ g_c( X_{n, p} ) ] - \E[ g_c( X_p ) ] \right| \to 0
		\end{align*}
		as $n \to \infty$. Therefore, there is $n_1$ such that for all $n \ge n_1$,
		\begin{align}
			\max\left\{ \sup_p \left| \E[ f( X_{n, p}, a_k ) ] - \E[ f( X_p, a_k ) ] \right| : k = 1, \ldots, m \right\}
			< \epsilon .
			\label{lemmaSlutsky_eq3}
		\end{align}
		On the other hand, for every $p$, we have
		\begin{align*}
			& \left| \E[ f( X_{n, p}, c_p ) ] - \E[ f( X_p, c_p ) ] \right| \\
			& \le \min\{ \left| \E[ f( X_{n, p}, c_p ) ] - \E[ f( X_{n, p}, a_k ) ] \right| 
			+ \left| \E[ f( X_p, c_p ) ] - \E[ f( X_p, a_k ) ] \right| \\
			& \qquad\quad\;
			+ \left| \E[ f( X_{n, p}, a_k ) ] - \E[ f( X_p, a_k ) ] \right|
			: k = 1, \ldots, m \} \\
			& \le \min\{ \left| \E[ f( X_{n, p}, c_p ) ] - \E[ f( X_{n, p}, a_k ) ] \right| 
			+ \left| \E[ f( X_p, c_p ) ] - \E[ f( X_p, a_k ) ] \right| : k = 1, \ldots, m \} \\
			& \quad
			+ \max\left\{ \sup_p \left| \E[ f( X_{n, p}, a_k ) ] - \E[ f( X_p, a_k ) ] \right| : k = 1, \ldots, m \right\} ,
		\end{align*}
		which by an application of \eqref{lemmaSlutsky_eq2} and \eqref{lemmaSlutsky_eq3} together, yields the following
		\begin{align*}
			& \left| \E[ f( X_{n, p}, c_p ) ] - \E[ f( X_p, c_p ) ] \right| < 3 \epsilon
		\end{align*}
		for every $n \ge n_1$ and every $p$. Hence, for all $n \ge n_1$,
		\begin{align}
			\sup_p \left| \E[ f( X_{n, p}, c_p ) ] - \E[ f( X_p, c_p ) ] \right| < 3 \epsilon .
			\label{lemmaSlutsky_eq4}
		\end{align}
		Let $l = \sup\{ | f( x, y ) | : ( x, y ) \in \mathbb{R}^2 \}$.
		Now, $\| ( X_{n, p}, Y_{n, p} ) - ( X_{n, p}, c_p ) \| = | Y_{n, p} - c_p |$. So, $| Y_{n, p} - c_p | \le \delta$ implies $| f( X_{n, p}, Y_{n, p} ) - f( X_{n, p}, c_p ) | < \epsilon$ and using this fact, we have
		\begin{align*}
			& | \E[ f( X_{n, p}, Y_{n, p} ) ] - \E[ f( X_{n, p}, c_p ) ] | \\
			& \le \left| \E[ \{ f( X_{n, p}, Y_{n, p} ) - f( X_{n, p}, c_p ) \} I( | Y_{n, p} - c_p | \le \delta ) ] \right| \\
			& \quad + \left| \E[ \{ f( X_{n, p}, Y_{n, p} ) - f( X_{n, p}, c_p ) \} I( | Y_{n, p} - c_p | > \delta ) ] \right| \\
			& \le \E\left[ | f( X_{n, p}, Y_{n, p} ) - f( X_{n, p}, c_p ) | I( | Y_{n, p} - c_p | \le \delta ) \right] \\
			& \quad + 2 \sup\{ | f( x, y ) | : ( x, y ) \in \mathbb{R}^2 \} \P[ | Y_{n, p} - c_p | > \delta ] \\
			& \le \epsilon \P\left[ | Y_{n, p} - c_p | \le \delta \right]
			+ 2 l \P[ | Y_{n, p} - c_p | > \delta ] ,
		\end{align*}
		which implies
		\begin{align}
			& \sup_p \left| \E[ f( X_{n, p}, Y_{n, p} ) ] - \E[ f( X_{n, p}, c_p ) ] \right| \le \epsilon \sup_p \P\left[ | Y_{n, p} - c_p | \le \delta \right]
			+ 2 l \sup_p \P[ | Y_{n, p} - c_p | > \delta ] .
			\label{lemmaSlutsky_eq5}
		\end{align}
		Since $Y_{n, p} \stackrel{\P}{\longrightarrow} c_p$ uniformly-over-$p$, from \autoref{definition2} we get that there is $n_2$ such that for all $n \ge n_2$, $\sup_p \P[ | Y_{n, p} - c_p | > \delta ] < \epsilon$. Hence, from \eqref{lemmaSlutsky_eq5}, we have
		\begin{align}
			\sup_p \left| \E[ f( X_{n, p}, Y_{n, p} ) ] - \E[ f( X_{n, p}, c_p ) ] \right|
			\le \epsilon ( 1 + 2 l )
			\label{lemmaSlutsky_eq6}
		\end{align}
		for all $n \ge n_2$. Using \eqref{lemmaSlutsky_eq4} and \eqref{lemmaSlutsky_eq6}, we get that for all $n \ge \max\{ n_1, n_2 \}$,
		\begin{align}
			& \sup_p \left| \E[ f( X_{n, p}, Y_{n, p} ) ] - \E[ f( X_p, c_p ) ] \right| \nonumber\\
			& \le \sup_p \left| \E[ f( X_{n, p}, Y_{n, p} ) ] - \E[ f( X_{n, p}, c_p ) ] \right| 
			+ \sup_p \left| \E[ f( X_{n, p}, c_p ) ] - \E[ f( X_p, c_p ) ] \right| \nonumber\\
			& < \epsilon ( 1 + 2 l ) + 3 \epsilon = 2 \epsilon ( 2 + l ) .
			\label{lemmaSlutsky_eq7}
		\end{align}
		Since $\epsilon > 0$ is arbitrary, \eqref{lemmaSlutsky_eq7} along with \ref{p2} of \autoref{portmanteau} imply that $\left( X_{n, p}, Y_{n, p} \right) \Longrightarrow \left( X_p, c_p \right)$ uniformly-over-$p$.
	\end{proof}
	
	We now state and prove the uniform-over-$p$ version of Slutsky's theorem.
	
	\begin{stheorem}[Uniform-over-$p$ Slutsky's theorem]\label{Slutsky}
		Let $\{ X_{n, p} \}_{p \in \mathbb{N}}$, $\{ Y_{n, p} \}_{p \in \mathbb{N}}$ and $\{ X_p \}_{p \in \mathbb{N}}$ be real-valued random variables and $\{ c_p \}_{p \in \mathbb{N}}$ be a bounded sequence of real numbers with countable closure. Suppose that $\{X_p\}_{p \in \mathbb{N}}$ satisfies assumptions \ref{assumption1} and \ref{assumption2}, $X_{n, p} \Longrightarrow X_p$ uniformly-over-$p$ and $Y_{n, p} \stackrel{\P}{\longrightarrow} c_p$ uniformly-over-$p$. Then,
		\begin{enumerate}[label = (\alph*), ref = (\alph*)]
			\item $X_{n, p} + Y_{n, p} \Longrightarrow X_p + c_p$ uniformly-over-$p$,
			\label{s1}
			
			\item $ X_{n, p} Y_{n, p} \Longrightarrow c_p X_p$ uniformly-over-$p$,
			\label{s2}
			
			\item $ X_{n, p} / Y_{n, p} \Longrightarrow X_p / c_p$ uniformly-over-$p$ if $\{ c_p \}_{p \in \mathbb{N}}$ is bounded away from 0, i.e., $\inf_p | c_p |~>~0$.
			\label{s3}
		\end{enumerate}
	\end{stheorem}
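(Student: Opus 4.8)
The plan is to deduce all three parts from the joint uniform-over-$p$ convergence $\left( X_{n, p}, Y_{n, p} \right)^\top \Longrightarrow \left( X_p, c_p \right)^\top$ supplied by \autoref{lemmaSlutsky}, combined with the uniform-over-$p$ continuous mapping theorem (\autoref{mappingthm}). Note that \autoref{lemmaSlutsky} also guarantees that the collection $\{\left( X_p, c_p \right)^\top\}_{p \in \mathbb{N}}$ satisfies assumptions \ref{assumption1} and \ref{assumption2}, which is what is needed before invoking \autoref{mappingthm} on these vectors.

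For part \ref{s1}, I would take the globally continuous function $h : \mathbb{R}^2 \to \mathbb{R}$, $h(x, y) = x + y$. Applying \autoref{mappingthm} to the jointly convergent sequence gives $X_{n, p} + Y_{n, p} = h(X_{n, p}, Y_{n, p}) \Longrightarrow h(X_p, c_p) = X_p + c_p$ uniformly-over-$p$. Part \ref{s2} is handled identically, now with the globally continuous $h(x, y) = xy$, so that \autoref{mappingthm} yields $X_{n, p} Y_{n, p} \Longrightarrow c_p X_p$ uniformly-over-$p$.

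Part \ref{s3} is the only one requiring care, since $(x, y) \mapsto x/y$ is not continuous on all of $\mathbb{R}^2$, so \autoref{mappingthm} cannot be applied directly. The plan is to reduce it to part \ref{s2}. Set $\delta_0 = \inf_p |c_p| > 0$. First I would establish the auxiliary claim $1/Y_{n, p} \stackrel{\P}{\longrightarrow} 1/c_p$ uniformly-over-$p$: on the event $\{ |Y_{n, p} - c_p| < \delta_0 / 2 \}$ one has $|Y_{n, p}| > \delta_0 / 2$, hence $|1/Y_{n, p} - 1/c_p| = |Y_{n, p} - c_p| / (|Y_{n, p}|\,|c_p|) \le (2/\delta_0^2)\,|Y_{n, p} - c_p|$, so for every $\epsilon > 0$,
\begin{align*}
\sup_p \P\left[\, |1/Y_{n, p} - 1/c_p| > \epsilon \,\right] \le \sup_p \P\left[\, |Y_{n, p} - c_p| \ge \delta_0 / 2 \,\right] + \sup_p \P\left[\, |Y_{n, p} - c_p| > \epsilon \delta_0^2 / 2 \,\right],
\end{align*}
and both terms on the right tend to $0$ as $n \to \infty$ by \autoref{definition2}. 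Next I would check that $\{ 1/c_p \}_{p \in \mathbb{N}}$ is a bounded sequence lying in $[-1/\delta_0, 1/\delta_0]$ whose closure is countable: since $\overline{\{ c_p \}} \subseteq \{ x : |x| \ge \delta_0 \}$ and $x \mapsto 1/x$ is a homeomorphism of $\{ x : |x| \ge \delta_0 \}$, the closure of $\{ 1/c_p \}$ is the homeomorphic image of the countable set $\overline{\{ c_p \}}$, hence countable. Therefore part \ref{s2}, applied with $Y_{n, p}$ replaced by $1/Y_{n, p}$ and $c_p$ replaced by $1/c_p$ (the hypotheses $X_{n, p} \Longrightarrow X_p$ uniformly-over-$p$ and $\{X_p\}$ satisfying \ref{assumption1}, \ref{assumption2} being unchanged), gives $X_{n, p} / Y_{n, p} = X_{n, p} \cdot (1/Y_{n, p}) \Longrightarrow (1/c_p) X_p = X_p / c_p$ uniformly-over-$p$, as required.

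I expect the only real obstacle to be the treatment of division in part \ref{s3}: one must avoid appealing to continuity of $x/y$ near $y = 0$ — which is exactly what the hypothesis $\inf_p |c_p| > 0$ is designed to circumvent — and one must also verify that the derived sequence $\{ 1/c_p \}$ still meets the boundedness-plus-countable-closure requirements needed to invoke part \ref{s2} (equivalently \autoref{lemmaSlutsky}). Everything else is a routine application of \autoref{lemmaSlutsky} and \autoref{mappingthm}.
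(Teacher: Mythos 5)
Your proof is correct, and parts (a) and (b) follow the paper exactly: both derive the joint uniform-over-$p$ weak convergence of $(X_{n,p}, Y_{n,p})^\top$ from \autoref{lemmaSlutsky} and then push it through \autoref{mappingthm} using the globally continuous maps $(x,y)\mapsto x+y$ and $(x,y)\mapsto xy$. Where you diverge is part (c). The paper stays with the direct continuous-mapping route but patches the discontinuity of $(x,y)\mapsto x/y$ by replacing it with the globally continuous surrogate $g(x,y)=x/y$ for $|y|\ge l/2$ and $g(x,y)=2x/l$ for $|y|<l/2$ (with $l=\inf_p|c_p|$), noting $g(x,c_p)=x/c_p$ always, and then controls the discrepancy $\E[f(X_{n,p}/Y_{n,p})]-\E[f(g(X_{n,p},Y_{n,p}))]$ on the vanishing event $\{|Y_{n,p}|<l/2\}$ via $\sup_p\P[|Y_{n,p}-c_p|>l/2]\to 0$. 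You instead reduce (c) to an application of (b): you show $1/Y_{n,p}\stackrel{\P}{\to}1/c_p$ uniformly-over-$p$ (your splitting on $\{|Y_{n,p}-c_p|<\delta_0/2\}$ and the Lipschitz bound $|1/Y_{n,p}-1/c_p|\le(2/\delta_0^2)|Y_{n,p}-c_p|$ are correct), and you verify that $\{1/c_p\}$ remains bounded with countable closure (the homeomorphism argument is sound because $\{c_p\}$ lies in the compact set $\{\delta_0\le|x|\le M\}$, on which $x\mapsto 1/x$ preserves closures, and boundedness of $\{c_p\}$ keeps $\{1/c_p\}$ bounded away from $0$). Both routes are valid; yours is arguably tidier in that it avoids the separate error-estimation step entirely by recycling part (b), while the paper's approach keeps all three parts as parallel direct applications of \autoref{mappingthm}. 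The only cosmetic point shared by both arguments is that $X_{n,p}/Y_{n,p}$ (and your $1/Y_{n,p}$) needs an arbitrary convention on the event $\{Y_{n,p}=0\}$, which has vanishing probability and does not affect the conclusion.
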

	
	\begin{proof}[Proof of \autoref{Slutsky}]
		From \autoref{lemmaSlutsky}, we get $\left( X_{n, p}, Y_{n, p} \right) \Longrightarrow \left( X_p, c_p \right)$ uniformly-over-$p$.
		Now, to prove \ref{s1}, define $g : \mathbb{R}^2 \to \mathbb{R}$ to be $g( x, y ) = x + y$. Then, from the continuity of $g$ and \autoref{mappingthm}, we get $X_{n, p} + Y_{n, p} \Longrightarrow X_p + c_p$ uniformly-over-$p$.
		
		To prove \ref{s2}, define $g : \mathbb{R}^2 \to \mathbb{R}$ by $g( x, y ) = x y$ and similarly, we get $ X_{n, p} Y_{n, p} \Longrightarrow c_p X_p$ uniformly-over-$p$.
		
		Next, let $l = \inf_p | c_p |$ and define $g : \mathbb{R}^2 \to \mathbb{R}$ by $g( x, y ) = x / y$ if $| y | \ge l / 2$ and $g( x, y ) = 2 x / l$ if $| y | < l / 2$. Clearly, $g$ is a continuous function on $\mathbb{R}^2$. Also, $g( x, y ) \neq x / y$ if and only if $| y | < l / 2$, and in particular $g( x, c_p ) = x / c_p$ for any $x$ and every $p$, since $| c_p | \ge l$ for every $p$. Hence, for any bounded continuous function $f : \mathbb{R} \to \mathbb{R}$, we obtain
		\begin{align*}
			& | \E[ f( X_{n, p} / Y_{n, p} ) ] - \E[ f( X_p / c_p ) ] | \\
			& \le \E[ | f( X_{n, p} / Y_{n, p} ) - f( g( X_{n, p}, Y_{n, p} ) ) | ] \\
			& \quad + \E[ | f( g( X_p, c_p ) ) - f( X_p / c_p ) | ] + | \E[ f( g( X_{n, p}, Y_{n, p} ) ) ] - \E[ f( g( X_p, c_p ) ) ] | \\
			& = \E[ | f( X_{n, p} / Y_{n, p} ) - f( g( X_{n, p}, Y_{n, p} ) ) | I( | Y_{n, p} | < l / 2 ) ] + | \E[ f( g( X_{n, p}, Y_{n, p} ) ) ] - \E[ f( g( X_p, c_p ) ) ] | \\
			& \le 2 \sup\{ | f( x, y ) | : ( x, y ) \in \mathbb{R}^2 \} \P[ | Y_{n, p} | < l / 2 ] + | \E[ f( g( X_{n, p}, Y_{n, p} ) ) ] - \E[ f( g( X_p, c_p ) ) ] | ,
		\end{align*}
		which implies that
		\begin{align}
			& \sup_p | \E[ f( X_{n, p} / Y_{n, p} ) ] - \E[ f( X_p / c_p ) ] | \nonumber\\
			& \le 2 \sup\{ | f( x, y ) | : ( x, y ) \in \mathbb{R}^2 \} \sup_p \P[ | Y_{n, p} | < l / 2 ] + \sup_p | \E[ f( g( X_{n, p}, Y_{n, p} ) ) ] - \E[ f( g( X_p, c_p ) ) ] | .
			\label{Slutsky_eq1}
		\end{align}
		Now, $| Y_{n, p} | < l / 2$ implies that $| Y_{n, p} - c_p | \ge | c_p | - | Y_{n, p} | > l - ( l / 2 ) = l / 2$. Hence, $\P[ | Y_{n, p} | < l / 2 ] \le \P[ | Y_{n, p} - c_p | > l / 2 ]$ for every $n$ and $p$. So, from \eqref{Slutsky_eq1}, we have
		\begin{align}
			& \sup_p | \E[ f( X_{n, p} / Y_{n, p} ) ] - \E[ f( X_p / c_p ) ] | \nonumber\\
			& \le 2 \sup\{ | f( x, y ) | : ( x, y ) \in \mathbb{R}^2 \} \sup_p \P[ | Y_{n, p} - c_p | > l / 2 ] \nonumber\\
			& \quad + \sup_p | \E[ f( g( X_{n, p}, Y_{n, p} ) ) ] - \E[ f( g( X_p, c_p ) ) ] | .
			\label{Slutsky_eq2}
		\end{align}
		Consider any $\epsilon > 0$.
		From \autoref{lemmaSlutsky}, we have $( X_{n, p}, Y_{n, p} ) \Longrightarrow ( X_p, c_p )$ uniformly-over-$p$. Now, the composite function $h : \mathbb{R}^2 \to \mathbb{R}$ defined by $h( x, y ) = f( g( x, y ) )$ is bounded and continuous. Using \autoref{definition1}, we get that there is an integer $n_1$ such that such for all $n \ge n_1$, we have $\sup_p | \E[ f( g( X_{n, p}, Y_{n, p} ) ) ] - \E[ f( g( X_p, c_p ) ) ] | < \epsilon$. Also, from \autoref{definition2}, we get that there is another integer $n_2$ such that for all $n \ge n_2$, we have $\sup_p \P[ | Y_{n, p} - c_p | > l / 2 ] < \epsilon$. Therefore, from \eqref{Slutsky_eq2}, we get that for all $n \ge \max\{ n_1, n_2 \}$,
		\begin{align*}
			& \sup_p | \E[ f( X_{n, p} / Y_{n, p} ) ] - \E[ f( X_p / c_p ) ] | < \epsilon ( 1 + 2 \sup\{ | f( x, y ) | : ( x, y ) \in \mathbb{R}^2 \} ) .
		\end{align*}
		Since $\epsilon > 0$ is arbitrary and $f$ is any bounded continuous function, the above inequality establishes part \ref{s3}.
	\end{proof}
	
	The remainder of this section is devoted to the proofs of the uniform-over-$p$ L\'{e}vy's continuity result given by \autoref{levy}, \autoref{lemma_quantile} and the uniform-over-$p$ central limit theorem (CLT) \autoref{lindeberg} (see Section \ref{sec:theory} of the paper).
	
	\begin{proof}[Proof of \autoref{levy}]
		
		Let $ \bX_{n, p} $ and $ \bX_p $ be random vectors with distributions $ \mu_{n, p} $ and $ \mu_p $, respectively.
		Suppose $ \mu_{n, p} \Longrightarrow \mu_p $ uniformly over $ p $. For any fixed $ \bt \in \mathbb{R}^d $, $ \cos( \bt^\top \bx ) $ and $ \sin( \bt^\top \bx ) $ are bounded continuous functions of $ \bx $. Hence, from \autoref{definition1}, we have
		\begin{align*}
			\lim\limits_{n \to \infty}\sup_p \left| \E\left[ \cos\left( \bt^\top \bX_{n, p} \right) \right] - \E\left[ \cos\left( \bt^\top \bX_p \right) \right] \right| = 0  \text{ and } \\
			\lim\limits_{n \to \infty}\sup_p \left| \E\left[ \sin\left( \bt^\top \bX_{n, p} \right) \right] - \E\left[ \sin\left( \bt^\top \bX_p \right) \right] \right| = 0 .
		\end{align*}
		Therefore,
		\begin{align*}
			& \sup_p | \varphi_{n, p}( \bt ) - \varphi_p( \bt ) | \\
			& = \sup_p \left( \left| \E\left[ \cos\left( \bt^\top \bX_{n, p} \right) \right] - \E\left[ \cos\left( \bt^\top \bX_p \right) \right] \right|^2 
			+ \left| \E\left[ \sin\left( \bt^\top \bX_{n, p} \right) \right] - \E\left[ \sin\left( \bt^\top \bX_p \right) \right] \right|^2 \right)^{\frac{1}{2}} \\
			& \le \sup_p \left\{ \left| \E\left[ \cos\left( \bt^\top \bX_{n, p} \right) \right] - \E\left[ \cos\left( \bt^\top \bX_p \right) \right] \right| 
			+ \left| \E\left[ \sin\left( \bt^\top \bX_{n, p} \right) \right] - \E\left[ \sin\left( \bt^\top \bX_p \right) \right] \right| \right\} \\
			& \le \sup_p \left| \E\left[ \cos\left( \bt^\top \bX_{n, p} \right) \right] - \E\left[ \cos\left( \bt^\top \bX_p \right) \right] \right| 
			+ \sup_p \left| \E\left[ \sin\left( \bt^\top \bX_{n, p} \right) \right] - \E\left[ \sin\left( \bt \bX_p \right) \right] \right| \\
			& \to 0 \text{ as } n \to \infty .
		\end{align*}
		
		Next, assume that $ \sup_p | \varphi_{n, p}( \bt ) - \varphi( \bt ) | \to 0 $ as $ n \to \infty $. We shall show that for any continuous function $ g : \mathbb{R}^d \to \mathbb{R} $ (that is zero outside a compact set), we have $ \sup_p \left| \E\left[ g\left( \bX_{n, p} \right) \right] - \right.$ $\left. \E\left[ g\left( \bX_p \right) \right] \right| \to 0 $ as $ n \to \infty $. Define $ g_+( \bx ) = g( \bx ) \mathbb{I}\{g( \bx ) \ge 0\} $ and $ g_-( \bx ) = | g( \bx ) | \mathbb{I}\{ g( \bx ) < 0 \}$.
		Consider the following:
		\begin{align*}
			\left| \E\left[ g\left( \bX_{n, p} \right) \right] - \E\left[ g\left( \bX_p \right) \right] \right|
			& \le \left| \E\left[ g_+\left( \bX_{n, p} \right) \right] - \E\left[ g_+\left( \bX_p \right) \right] \right|
			+ \left| \E\left[ g_-\left( \bX_{n, p} \right) \right] - \E\left[ g_-\left( \bX_p \right) \right] \right|
		\end{align*}
		which implies
		\begin{align*}
			&\sup_p \left| \E\left[ g\left( \bX_{n, p} \right) \right] - \E\left[ g\left( \bX_p \right) \right] \right|\\
			& \le \sup_p \left| \E\left[ g_+\left( \bX_{n, p} \right) \right] - \E\left[ g_+\left( \bX_p \right) \right] \right| 
			+ \sup_p \left| \E\left[ g_-\left( \bX_{n, p} \right) \right] - \E\left[ g_-\left( \bX_p \right) \right] \right| .
		\end{align*}
		Since $ g_+( \bx ) $ and $ g_-( \bx ) $ are non-negative functions of $ \bx $, it is enough to consider $ g $ as a non-negative continuous function which is zero outside a compact set. If $ g $ is identically 0, the assertion $ \sup_p \left| \E\left[ g\left( \bX_{n, p} \right) \right] - \E\left[ g\left( \bX_p \right) \right] \right| \to 0 $ as $ n \to \infty $ is vacuously satisfied. So, without loss of generality, we assume that $ g $ is a non-negative and non-zero continuous function which is zero outside a compact set.
		
		Fix $\epsilon > 0$. As $g$ is continuous and zero outside a compact set, it is uniformly continuous and we can find $ \delta > 0 $ such that $\| \bx - \by \| \le \delta $ implies $ | g( \bx ) - g( \by ) | < \epsilon $.
		Let $ \bZ $ be a $d$-dimensional random vector independent of $ \bX_{n, p} $ and $ \bX_p $ for all $ n $ and $ p $, which follows the $N_d( \mathbf{0}, \sigma^2 \mathbf{I}_d ) $ distribution. Here, $ \sigma^2 $ is small enough so that $ P[ \| \bZ \| > \delta ] < \epsilon $ holds. We~have
		\begin{align}
			\left| \E\left[ g\left( \bX_{n, p} \right) \right] - \E\left[ g\left( \bX_p \right) \right] \right|
			\le & \left| \E\left[ g\left( \bX_{n, p} \right) \right] - \E\left[ g\left( \bX_{n, p} + \bZ \right) \right] \right| \nonumber\\
			& + \left| \E\left[ g\left( \bX_{n, p} + \bZ \right) \right] - \E\left[ g\left( \bX_p + \bZ \right) \right] \right| \nonumber + \left| \E\left[ g\left( \bX_p + \bZ \right) \right] - \E\left[ g\left( \bX_p \right) \right] \right| . \label{eq:levy1}
		\end{align}
		Now,
		\begin{align}
			&\sup_p \left| \E\left[ g\left( \bX_{n, p} \right) \right] - \E\left[ g\left( \bX_{n, p} + \bZ \right) \right] \right| \\
			& \le \sup_p \left| \E\left[ \left\{ g\left( \bX_{n, p} \right) - g\left( \bX_{n, p} + \bZ \right) \right\} \left\{ I( \| \bZ \| \le \delta ) + I( \| \bZ \| > \delta ) \right\} \right] \right| \nonumber\\
			& \le \sup_p \E\left[ \left| g\left( \bX_{n, p} \right) - g\left( \bX_{n, p} + \bZ \right) \right| I( \| \bZ \| \le \delta ) \right] + \sup_p \E\left[ \left| g\left( \bX_{n, p} \right) - g\left( \bX_{n, p} + \bZ \right) \right| I( \| \bZ \| > \delta ) \right] \nonumber\\
			& < \epsilon + 2 \sup_\bx | g( \bx ) | P[ \| \bZ \| > \delta ]
			= \epsilon \left( 1 + 2 \sup_\bx | g( \bx ) | \right) . \label{eq:levy2}
		\end{align}
		Similarly,
		\begin{align}
			\sup_p \left| \E\left[ g\left( \bX_p \right) \right] - \E\left[ g\left( \bX_p + \bZ \right) \right] \right|
			< \epsilon \left( 1 + 2 \sup_\bx | g( \bx ) | \right) . \label{eq:levy3}
		\end{align}
		Next, using Fubini's theorem, we have
		\begin{align}
			& \E\left[ g\left( \bX_{n, p} + \bZ \right) \right] \nonumber\\
			& = \frac{1}{\left( \sqrt{2 \pi} \sigma \right)^d} \int \int g( \bx + \bz ) \exp\left( - \frac{\bz^\top \bz}{2 \sigma^2} \right) \mathrm{d} \bz\, \mathrm{d} \mu_{n, p}( \bx ) \nonumber\\
			& = \frac{1}{\left( \sqrt{2 \pi} \sigma \right)^d} \int \int g( \bu ) \exp\left( - \frac{( \bu - \bx )^\top ( \bu - \bx )}{2 \sigma^2} \right) \mathrm{d} \bu\, \mathrm{d} \mu_{n, p}( \bx ) \nonumber\\
			& = \frac{1}{\left( \sqrt{2 \pi} \sigma \right)^d} \int \int g( \bu ) \prod_{i = 1}^d \exp\left( - \frac{( u_i - x_i )^2}{2 \sigma^2} \right) \mathrm{d} \bu\, \mathrm{d} \mu_{n, p}( \bx ) \nonumber\\
			& = \frac{1}{\left( \sqrt{2 \pi} \sigma \right)^d} \int \int g( \bu ) \prod_{i = 1}^d \frac{\sigma}{\sqrt{2 \pi}} \int \exp\left( \mathrm{i} t_i ( u_i - x_i ) - \frac{\sigma^2 t_i^2}{2} \right) \mathrm{d} t_i\, \mathrm{d} \bu\, \mathrm{d} \mu_{n, p}( \bx ) \nonumber\\
			& = \frac{1}{( 2 \pi )^d} \int \int \int g( \bu ) \exp\left( \mathrm{i} \bt^\top \bu - \frac{\sigma^2 \bt^\top \bt}{2} \right) e^{- \mathrm{i} \bt^\top \bx} \mathrm{d} \mu_{n, p}( \bx ) \, \mathrm{d} \bt \, \mathrm{d} \bu \nonumber\\
			& = \frac{1}{( 2 \pi )^d} \int \int g( \bu ) \exp\left( \mathrm{i} \bt^\top \bu - \frac{\sigma^2 \bt^\top \bt}{2} \right) \varphi_{n, p}( - \bt ) \mathrm{d} \bt \, \mathrm{d} \bu \nonumber\\
			& = \frac{\int g( \bu ) \mathrm{d} \bu}{( \sqrt{2 \pi} \sigma )^d} \int \int \exp\left( \mathrm{i} \bt^\top \bu \right) \varphi_{n, p}( - \bt ) \frac{g( \bu )}{\int g( \bu ) \mathrm{d} \bu} \left( \frac{\sigma}{\sqrt{2 \pi}} \right)^d \exp\left( - \frac{\sigma^2 \bt^\top \bt}{2} \right) \mathrm{d} \bt \, \mathrm{d} \bu \nonumber\\
			& = \frac{\int g( \bu ) \mathrm{d} \bu}{( \sqrt{2 \pi} \sigma )^d} \E\left[ \exp\left( \mathrm{i} \bT^\top \bU \right) \varphi_{n, p}( - \bT) \right] , \label{eq:levy4}
		\end{align}
		where $ \bT $ and $ \bU $ are independent random vectors, with $ \bT \sim N_d( {\bf 0}, \sigma^{-2} \mathbf{I}_d ) $ and $ \bU $ has the density function $ \left( \int g( \bu ) \mathrm{d} \bu \right)^{-1} g( \bu ) $. Note that the fourth equality is obtained using the expression of the characteristic function for the normal distribution:
		$$\int e^{ibx} \frac{1}{a\sqrt{2\pi}} e^{-\frac{x^2}{2a^2}}dx = e^{-\frac{a^2b^2}{2}}$$
		for any $a > 0, b\in \mathbb{R}$.
		Similarly,
		\begin{align}
			\E\left[ g\left( \bX_p + \bZ \right) \right]
			& =  \frac{\int g( \bu ) \mathrm{d} \bu}{( \sqrt{2 \pi} \sigma )^d} \E\left[ \exp\left( \mathrm{i} \bT^\top \bU \right) \varphi_p( - \bT) \right]. \label{eq:levy5}
		\end{align}
		From \eqref{eq:levy4} and \eqref{eq:levy5}, we have
		\begin{align}
			& \left| \E\left[ g\left( \bX_{n, p} + \bZ \right) \right] - \E\left[ g\left( \bX_p + \bZ \right) \right] \right| \nonumber\\
			& = \frac{\int g( \bu ) \mathrm{d} \bu}{( \sqrt{2 \pi} \sigma )^d} \left| \E\left[ \exp\left( \mathrm{i} \bT^\top \bU \right) \varphi_{n, p}( - \bT ) \right] - \E\left[ \exp\left( \mathrm{i} \bT^\top \bU \right) \varphi_p( - \bT ) \right] \right| \nonumber\\
			& \le \frac{\int g( \bu ) \mathrm{d} \bu}{( \sqrt{2 \pi} \sigma )^d} \E\left[ \left| \exp\left( \mathrm{i} \bT^\top \bU \right) \varphi_{n, p}( - \bT ) - \exp\left( \mathrm{i} \bT^\top \bU \right) \varphi_p( - \bT ) \right| \right] \nonumber\\
			& = \frac{\int g( \bu ) \mathrm{d} \bu}{( \sqrt{2 \pi} \sigma )^d} \E\left[ \left| \exp\left( \mathrm{i} \bT^\top \bU \right) \right| \left| \varphi_{n, p}( - \bT ) - \varphi_p( - \bT ) \right| \right] \nonumber\\
			& = \frac{\int g( \bu ) \mathrm{d} \bu}{( \sqrt{2 \pi} \sigma )^d} \E\left[ \left| \varphi_{n, p}( - \bT ) - \varphi_p( - \bT ) \right| \right] . \label{eq:levy6}
		\end{align}
		Since $ \sup_p | \varphi_{n, p}( \bt ) - \varphi_p( \bt ) | \to 0 $ as $ n \to \infty $ and $ \sup_p | \varphi_{n, p}( \bt ) - \varphi_p( \bt ) | $ is a measurable function of $ \bt $, from \eqref{eq:levy6} we have the following:
		\begin{align*}
			\sup_p \left| \E\left[ g\left( \bX_{n, p} + \bZ \right) \right] - \E\left[ g\left( \bX_p + \bZ \right) \right] \right|
			& \le \frac{\int g( \bu ) \mathrm{d} \bu}{( \sqrt{2 \pi} \sigma )^d} \sup_p \E\left[ \left| \varphi_{n, p}( - \bT ) - \varphi_p( - \bT ) \right| \right] \\
			& \le \frac{\int g( \bu ) \mathrm{d} u}{( \sqrt{2 \pi} \sigma )^d} E\left[ \sup_p \left| \varphi_{n, p}( - \bT ) - \varphi_p( - \bT ) \right| \right] .
		\end{align*}
		By the bounded convergence theorem, $ \E\left[ \sup_p \left| \varphi_{n, p}( - \bT ) - \varphi( - \bT ) \right| \right] \to 0 $ as $ n \to \infty $ and hence, for all sufficiently large $n$, we get
		\begin{align}
			\sup_p \left| \E\left[ g\left( \bX_{n, p} + \bZ \right) \right] - \E\left[ g\left( \bX_p + \bZ \right) \right] \right| < \epsilon .\label{eq:levy7}
		\end{align} 
		From \eqref{eq:levy1}, \eqref{eq:levy2}, \eqref{eq:levy3} and \eqref{eq:levy7}, we have
		\begin{align*}
			\sup_p \left| \E\left[ g\left( \bX_{n, p} \right) \right] - \E\left[ g\left( \bX_p \right) \right] \right|
			\le \epsilon \left( 3 + 4 \sup_\bx | g( \bx ) | \right) .
		\end{align*}
		Since $ \epsilon > 0 $ is arbitrary, we have
		\begin{align*}
			\lim\limits_{n \to \infty} \sup_p \left| \E\left[ g\left( \bX_{n, p} \right) \right] - \E\left[ g\left( \bX_p \right) \right] \right| = 0.
		\end{align*}
		The proof now follows from \autoref{portmanteau}.
	\end{proof}

	\begin{proof}[Proof of \autoref{lemma_quantile}]
		We first show that $\{F_p\}_{p\in\mathbb{N}}$ satisfies assumptions \ref{assumption1} and \ref{assumption2}. The sequence of distributions $F_p$ converges weakly, and hence, it is tight. Since $F$ is continuous, by Polya's theorem, $F_p$ converges to $F$ uniformly. Also, $F$ is uniformly continuous on $\mathbb{R}$. Fix $\epsilon > 0$. There exists $\delta_1 > 0$ such that if $|x -y| < \delta_1$, then $|F(x) - F(y)|\leq \epsilon$. Next, we choose $P$ large enough so that for any $p \geq P$, $\sup_{x}|F_p(x) - F(x)| \leq \epsilon$. Now, for any $p \geq P$, we have $|F_p(x) - F_p(y)| \leq |F_p(x) -F(x)| + |F(x) - F(y)| + |F(y) -F_p(y)| \leq 3\epsilon$ whenever $|x -y| < \delta_1$. Since each $F_p$ is a continuous distribution function, it is uniformly continuous and we can find a $\delta_2 > 0$ such that whenever $|x -y| \leq \delta_2$, we have $|F_p(x) - F_p(y)| < \epsilon$ for any $p \in \{ 1, 2, \ldots, P\}$. Define $\delta = \min(\delta_1, \delta_2)$. Then, $|F_p(x) -F_p(y)| \leq 3\epsilon$ for all $p \in \mathbb{N}$, whenever $|x -y| <\delta$. Thus, $\{F_p\}_{p\in\mathbb{N}}$ is uniformly equicontinuous which immediately implies assumption \ref{assumption2}. Thus, the collection $\{F_p\}_{p\in\mathbb{N}}$ satisfies assumptions \ref{assumption1} and \ref{assumption2}. 
		
		Next, we show that $\sup_{p\in \mathbb{N}}\sup_{x\in\mathbb{R}}|F_{n,p}(x) - F(x)| \to 0$
		as $n \to \infty$. Fix $\epsilon > 0$. By \autoref{portmanteau}, $\sup_p|F_{n,p}(x) - F_p(x)| \to 0$ for all $x \in \mathbb{R}$. But, $F_p(x) \to F(x)$ and hence, $|F_{n,p}(x) - F(x)| \leq |F_{n,p}(x) - F_p(x)| + |F_p(x) - F(x)| \to 0$ as $n \to \infty, p \to \infty$ for any $x \in\mathbb{R}$. Thus, as $n \to \infty, p \to \infty$ together, $F_{n,p}$ converges to $F$ weakly. Using Polya's theorem, we get $\sup_x|F_{n,p}(x) - F(x)| \to 0$ as $n \to \infty, p \to \infty$. Also, $\sup_x|F_p(x) - F(x)| \to 0$ as $p \to \infty$. Thus we can choose $P_0$ and $N_0$ large enough so that if $p \geq P_0$ and $n \geq N_0$, then $|F_{n,p}(x) - F_p(x)| \leq |F_{n,p}(x) - F(x)| + |F_p(x) - F(x)| \leq 2 \epsilon$ for all $x \in \mathbb{R}$. Since each $F_p$ is continuous, we can find $N_0'$ such that for any $p \in \{1,\ldots, P_0\}$, $\sup_x|F_{n,p}(x) - F_p(x)| \leq \epsilon$ whenever $n \geq N_0'$. Thus, for $n \geq \max\{N_0, N_0'\}$, we have $\sup_x|F_{n,p}(x) - F_p(x)| \leq 2\epsilon$ for all $p \in \mathbb{N}$. This proves the assertion.

		Finally, we prove the convergence of quantiles. Each of these functions the functions $Q_p$ and $Q$ is non-decreasing and has only countably many points of discontinuities for any $p \in \mathbb{N}$. We claim that each of the functions is continuous at $\alpha$. We prove this for $Q$, the other cases being similar. Since quantile function is left continuous, we only need to show right continuity. If possible, assume that $\alpha_m$ decreases to $\alpha$ as $m \to \infty$ and $Q(\alpha_m) - Q(\alpha) > \epsilon_0$ for some $\epsilon_0 > 0$ and all $m \in \mathbb{N}$. Since $F$ is continuos everywhere, $\alpha_m = F(Q(\alpha_m)) \geq F(Q(\alpha) + \epsilon_0)$ and taking $m \to \infty$ we get $\alpha \geq F(Q(\alpha) + \epsilon_0) > F(Q(\alpha)) = \alpha$ which is a contradiction. This proves our claim. Now, fix $\epsilon > 0$. Choose $\delta > 0$ such that $0 < \alpha -\delta < \alpha+ \delta < 1$ and $Q(\alpha+\delta) - Q(\alpha - \delta) < \epsilon$ and $Q$ is continuous at $\alpha+\delta$ and $\alpha - \delta$. Using \cite[Lemma 21.2]{vandervaart2000book}, choose $P$ large enough so that for any $p > P$, we have $|Q_p(\alpha \pm \delta) - Q(\alpha \pm \delta)| < \epsilon$. Further, choose $\delta_1, \ldots, \delta_P$ such that $|Q_p(\alpha + \delta_p) - Q_p(\alpha - \delta_p)| < \epsilon$ for $p = 1, \ldots, P$. Set $\delta_0 = \min\{\delta, \delta_1, \ldots, \delta_P\} > 0$. We now have $|Q_p(\alpha + \delta_0) - Q_p(\alpha - \delta_0)| \leq |Q_p(\alpha + \delta_p) - Q_p(\alpha - \delta_p)| < \epsilon$ for $p = 1, \ldots, P$ and
		\begin{align*}
			&|Q_p(\alpha + \delta_0) - Q_p(\alpha - \delta_0)| \\
			&\leq |Q_p(\alpha + \delta) - Q_p(\alpha - \delta)|\\
			&\leq |Q_p(\alpha + \delta) - Q(\alpha + \delta)| + |Q_p(\alpha - \delta) - Q(\alpha - \delta)| +|Q(\alpha + \delta) - Q(\alpha - \delta)|\\
			&< 3\epsilon   
		\end{align*} 
		any for $p > P$. Consider $N$ large enough so that for any $n \geq N$ and any $p$, we have $\sup_x|F_{n,p}(x) - F_p(x)| < \delta_0$. Then, for any $n \geq N$ and any $p$, we get
		$$\alpha  - \delta_0 \leq F_{n,p}(Q_{n,p}(\alpha)) - \delta_0\leq F_{p}(Q_{n,p}(\alpha)) $$
		which gives $Q_p(\alpha - \delta_0)  \leq Q_{n,p}(\alpha)$. Again,
		$$\alpha  \leq F_p(Q_p(\alpha + \delta_0)) -\delta_0 \leq F_{n,p}(Q_p(\alpha + \delta_0)) $$
		which gives $Q_{n,p}(\alpha)  \leq Q_p(\alpha + \delta_0)$.
		Thus, $|Q_{n,p}(\alpha) - Q_p(\alpha)| \leq |Q_p(\alpha + \delta_0) - Q_p(\alpha - \delta_0)| < 3\epsilon$ for any $n \geq N$ and any $p \in \mathbb{N}$. This completes the proof.
	\end{proof}

	\begin{proof}[Proof of \autoref{lindeberg}]
		Fix any $\bt \in \mathbb{R}^d$. For any $\delta > 0$, $p \in \mathbb{N}$ and $1\leq i \leq d$, 
		\begin{align*}
			&\max_{1\leq r \leq k_n}\E[(\be_i^T\bX_{n,p,r})^2] \\
			&=\max_{1\leq r \leq k_n}  \left[\E[(\be_i^T\bX_{n,p,r})^2 \mathbb{I}\{|\be_i^T\bX_{n,p,r}| \leq \sqrt{\delta}\}]+ \E[(\be_i^T\bX_{n,p,r})^2 \mathbb{I}\{|\be_i^T\bX_{n,p,r}|> \sqrt{\delta}\}]\right] \\
			&\leq \delta + \sum_{r=1}^{k_n}\E[(\be_i^T\bX_{n,p,r})^2 \mathbb{I}\{|\be_i^T\bX_{n,p,r}|> \sqrt{\delta}\}]
		\end{align*}
		For any given $\delta > 0$, by condition (c) and the above inequality, $ \sup_p\max_{1\leq r \leq k_n}\E[(\be_i^T\bX_{n,p,r})^2] < 2\delta$ for large enough $n$. Thus
		$\sup_p\max_{1\leq r \leq k_n}\E[(\be_i^T\bX_{n,p,r})^2] \to 0$
		as $n \to \infty$. Hence, for all~$1\leq i,j\leq d$, we have
		\begin{align*}
			&\sup_p\max_{1\leq r\leq k_n}|\Cov(\be_i^T\bX_{n,p,r}, \be_j^T\bX_{n,p,r})|\\
			&\leq \sqrt{\sup_p\max_{1\leq r\leq k_n}\E[(\be_i^T\bX_{n,p,r})^2]} \sqrt{\sup_p\max_{1\leq r\leq k_n}\E[(\be_i^T\bX_{n,p,r})^2]} \to 0   
		\end{align*} 
		as $n \to \infty$. This immediately gives that for any $\mathbf{y} \in \mathbb{R}^d$,
		\begin{equation}
			\sup_p\max_{1\leq r\leq k_n}\mathbf{y}^T \Cov(\bX_{n,p,r})\mathbf{y} \to 0 \label{lindeberg_eq1}
		\end{equation}
		as $n\to \infty$. Since $\be_i^T\Cov(\bS_{n,p})\be_j\to \be_i^T\bSigma_p\be_j = \sigma_{p,ij}$ uniformly-over-$p$ as $n \to \infty$ for $1\leq i,j\leq d$, we have
		\begin{equation}
			\bt^T\Cov(\bS_{n,p})\bt\to \bt^T\bSigma_p\bt
			\label{lindeberg_eq2}
		\end{equation}
		uniformly-over-$p$ as $n \to \infty$. So, $\bSigma_p$ is a symmetric non-negative definite matrix for each~$p$. Further, 
		$$\sup_p|\bt^T\Sigma_p\bt|\leq \sum_{i=1}^d \sum_{j=1}^d |t_it_j| \sup_p|\be_i^T \bSigma_p \be_j| \leq \sum_{i=1}^d \sum_{j=1}^d |t_it_j| \sqrt{\left(\sup_p\be_i^T \bSigma_p \be_i\right)\left(\sup_p \be_j^T \bSigma_p \be_j\right)} < \infty.$$ 
		Therefore, from \eqref{lindeberg_eq1} and \eqref{lindeberg_eq2},
		\begin{align}
			\sum_{r=1}^{k_n}\left(\bt^T \Cov(\bX_{n,p,r})\bt\right)^2  &\leq \left[\max_{1\leq r\leq k_n}\bt^T \Cov(\bX_{n,p,r})\bt\right] \sum_{r=1}^{k_n}\bt^T \Cov(\bX_{n,p,r})\bt \nonumber\\
			&= \left[\max_{1\leq r\leq k_n}\bt^T \Cov(\bX_{n,p,r})\bt\right] \bt^T\Cov(\bS_{n,p})\bt\to 0 \label{lindeberg_eq3}
		\end{align}
		uniformly-over-$p$ as $n\to\infty$. Since $x \leq -\log(1-x) = \sum_{r=1}^\infty \frac{x^r}{r}  \leq x + \frac{x^2}{1-x} \leq x + 2x^2 $ for $0<x \leq \frac{1}{2}$, there exists $N$ such that for $n \geq N$ and any $p \in \mathbb{N}$,
		$$\exp\left(-\frac{1}{2}\bt^T \Cov(\bX_{n,p,r})\bt - \frac{1}{2}\left(\bt^T \Cov(\bX_{n,p,r})\bt\right)^2 \right) \leq 1- \frac{1}{2}\bt^T \Cov(\bX_{n,p,r})\bt \leq \exp\left(-\frac{1}{2}\bt^T \Cov(\bX_{n,p,r})\bt\right).$$
		Using this inequality along with \eqref{lindeberg_eq2} and \eqref{lindeberg_eq3}, we conclude 
		\begin{equation}
			\prod_{r=1}^{k_n} \left(1- \frac{1}{2}\bt^T \Cov(\bX_{n,p,r})\bt \right) \to \exp\left(- \frac{1}{2}\bt^T \Sigma_p\bt\right) \label{lindeberg_eq4} 
		\end{equation}
		uniformly-over-$p$ as $n \to \infty$.
		Let $\varphi_{\bS_{n,p}}(\bt) = \E[\exp(i\bt^T \bS_{n,p})]$ denote the characteristic function of $\bS_{n,p}$. From \eqref{lindeberg_eq1}, we have $\max_{1\leq r \leq k_n} \frac{1}{2} \bt^T \Cov(\bX_{n,p,r})\bt < 1$ for all $p$ and large enough $n$. Using the inequality {\color{red}($26.4_0$)} in \cite[p.~343]{billingsley2008probability} along with the Cauchy-Schwartz (CS) inequality, for any $p$, $1\leq r \leq k_n$ and large enough $n$, we obtain the following:
		\begin{align*}
			&\left|\varphi_{\bX_{n,p,r}}(\bt) - \left(1-\frac{1}{2} \bt^T \Cov(X_{n,p,r})\bt\right)\right|\\
			&= \left|\E\left[\exp(i\bt^T\bX_{n,p,r}) - 1 - i\bt^T\bX_{n,p,r} + \frac{1}{2} (\bt^T\bX_{n,p,r})^2\right]\right| \\
			&\leq \E \left[\min \left( \frac{1}{6} |\bt^T\bX_{n,p,r}|^3, (\bt^T\bX_{n,p,r})^2 \right) \right] \\
			&\leq \frac{1}{6} \E \left[ (\bt^T\bX_{n,p,r})^3 \,\mathbb{I}\{\|\bX_{n,p,r}\| < \delta \sqrt{d} \}\right] + \E \left[(\bt^T\bX_{n,p,r})^2 \,\mathbb{I}\{\|\bX_{n,p,r}\| \geq \delta \sqrt{d}\} \right] \\
			&\leq \frac{1}{6} \E \left[(\bt^T\bX_{n,p,r})^2 \|\bt\| \|\bX_{n,p,r}\| \,\mathbb{I}\{\|\bX_{n,p,r}\| < \delta \sqrt{d} \}\right] + \|\bt\|^2 \E \left[ \left[\sum_{i=1}^d (\be_i^T\bX_{n,p,r})^2\right] \,\mathbb{I}\{\sum_{i=1}^d (\be_i^T\bX_{n,p,r})^2 \geq \delta^2 d\}\right]\\
			&\leq \delta \sqrt{d}\frac{\|\bt\|}{6} \bt^T \Cov(X_{n,p,r})\bt + \|\bt\|^2 \E\left[\sum_{i=1}^d\left[d(\be_i^T\bX_{n,p,r})^2 \,\mathbb{I}\{d (\be_i^T\bX_{n,p,r})^2 \geq \delta^2 d\}\right]\right] \\
			&\leq \delta \sqrt{d} \frac{\|\bt\|}{6} \bt^T \Cov(X_{n,p,r})\bt + d\|\bt\|^2 \sum_{i=1}^d\E\left[(\be_i^T\bX_{n,p,r})^2 \,\mathbb{I}\{ |\be_i^T\bX_{n,p,r}| \geq \delta \}\right].
		\end{align*}
		Using the inequality in ($27.5$) of \cite[p.~358]{billingsley2008probability}, we get
		\begin{align*}
			&\sup_p\left| \varphi_{\bS_{n,p}}(\bt) - \prod_{r=1}^{k_n}\left(1-\frac{1}{2} \bt^T \Cov(\bX_{n,p,r})\bt\right)\right| \\
			&= \sup_p\left| \prod_{r=1}^{k_n}\varphi_{\bX_{n,p,r}}(\bt)  - \prod_{r=1}^{k_n}\left(1-\frac{1}{2} \bt^T \Cov(\bX_{n,p,r})\bt\right)\right|\\
			&\leq \sup_p\sum_{r=1}^{k_n} \left|\varphi_{\bX_{n,p,r}}(\bt) - \left(1-\frac{1}{2} \bt^T \Cov(\bX_{n,p,r})\bt\right)\right|\\
			&\leq \delta \sqrt{d} \frac{\|\bt\|}{6} \sup_p\bt^T \Cov(\bS_{n,p})\bt + d\|\bt\|^2 \sum_{i=1}^d \sup_p \sum_{r=1}^{k_n} \E\left[(\be_i^T\bX_{n,p,r})^2 \,\mathbb{I}\{ |\be_i^T\bX_{n,p,r}| \geq \delta \}\right] \\
			&\leq \delta \sqrt{d} \frac{\|\bt\|}{6} \sum_{i=1}^d\sum_{j=1}^d |t_it_j| \sup_p \left|\be_i^T\Cov(\bS_{n,p})\be_j\right|+ d\|\bt\|^2 \sum_{i=1}^d \sup_p\sum_{r=1}^{k_n} \E\left[(\be_i^T\bX_{n,p,r})^2 \,\mathbb{I}\{ |\be_i^T\bX_{n,p,r}| \geq \delta \}\right]
		\end{align*}
		for large enough $n$. Therefore,
		$$\limsup_{n\to\infty} \sup_{p\in \mathbb{N}}\left| \varphi_{\bS_{n,p}}(\bt) - \prod_{r=1}^{k_n}\left(1-\frac{1}{2} \bt^T \Cov(\bX_{n,p,r})\bt\right)\right| \leq \delta \sqrt{d} \frac{\|\bt\|}{6} \sum_{i=1}^d\sum_{j=1}^d |t_it_j| \sup_p\left|\be_i^T\Sigma_p\be_j\right|.$$
		Since $\sup_p \left|\be_i^T\Sigma_p\be_j\right| < \infty$, by taking $\delta \to 0$ in the inequality given above and using \eqref{lindeberg_eq4}, we~obtain
		\begin{equation}
			\lim_{n\to \infty} \sup_{p\in \mathbb{N}} \left| \varphi_{\bS_{n,p}}(\bt) - \exp\left(-\frac{1}{2} \bt^T \Sigma_p\bt\right)\right| = 0. \label{lindeberg_eq5}
		\end{equation}
		Let $\bZ_p = (Z_{p,1}, \ldots, Z_{p,d})^\top \sim N_d(\mathbf{0}_d, \bSigma_p)$ for all $p \in \mathbb{N}$. We shall show that the collection of probability distributions of of the random variables $\{\bZ_p\}$ satisfies assumptions \ref{assumption1} and \ref{assumption2}. Let $M = \max_{1\leq i \leq d} \sup_{p\in \mathbb{N}} \sigma_{p,ii} < \infty$. By Markov's inequality,
		\begin{align*}
			\P(\|\bZ_p\|> K) \leq  \frac{1}{K^2} \tr(\bSigma_p) \leq \frac{Md }{K^2} \to 0  
		\end{align*}
		as $K \to \infty$. This proves assumption \ref{assumption1}. Let $F_{\bZ_p}$ be the distribution function of $\bZ_p$. Fix any $\bx = (x_1,\ldots,x_d) \in \mathbb{R}^d\setminus\{\mathbf{0}_d\}$. Without loss of generality let $x_1 \neq 0$. Suppose $0 < \delta < |x_1| $ and let  $\by = (y_1,\ldots,y_d)^\top$ be such that $\|\bx - \by\| < \delta$.  Then,
		$$|F_{\bZ_p}(\bx) - F_{\bZ_p}(\by)| \leq \P(\min\{x_1, y_1\} \leq Z_{p,1}\leq \max\{x_1, y_1\}) \leq \P(x_1 -\delta \leq Z_{p,1}\leq x_1 + \delta).$$
		The above quantity is equal to $0$ if $\sigma_{p,11} = 0$. Otherwise, suppose $x_1 > 0$. We have 
		$$\P(x_1 -\delta \leq Z_{p,1}\leq x_1 + \delta) = \int_{x_1-\delta}^{x_1+\delta}\frac{1}{\sigma_{p,11} \sqrt{2\pi}} \exp\left(-\frac{t^2}{2\sigma_{p,11}^2}\right) dt \leq \frac{1}{\sigma_{p,11}} \exp\left(-\frac{(x_1 - \delta)^2}{2\sigma_{p,11}^2}\right) \times 2\delta.$$ 
		It is easy to check that $\frac{1}{\sigma_{p,11}} \exp\left(-\frac{(x_1 - \delta)^2}{2\sigma_{p,11}^2}\right)$ is bounded by some constant which depends only on $x_1$ and hence as $\delta \to 0$, $\P(x_1 -\delta \leq Z_{p,1}\leq x_1 + \delta) \to 0$. The case when $x_1 < 0$ can be handled similarly. Thus, any $\bx \in \mathbb{R}^d\setminus\{\mathbf{0}_d\}$ is a point of equicontinuity for the collection of distributions. This proves assumption \ref{assumption2}. The desired result is now obtained from \eqref{lindeberg_eq5} and \autoref{levy}.
	\end{proof}
	
	\begin{proof}[Proof of \autoref{lyapunov}]
		Consider any random variable $Z$ with $\E[|Z|^{2+\delta}] < \infty$. Using Holder's and Markov's inequalities, we get
		\begin{align*}
			\E[Z^2 \mathbb{I}\{|X| > \epsilon\}] \leq \E[|Z|^{2+\delta}]^{\frac{2}{2+\delta}}\, \E[\mathbb{I}\{|Z| > \epsilon\}]^{\frac{\delta}{2+\delta}} \leq \E[|Z|^{2+\delta}]^{\frac{2}{2+\delta}}\, \frac{1}{\epsilon^{2+\delta}}\E[|Z|^{2+\delta}]^{\frac{\delta}{2+\delta}} \leq \frac{1}{\epsilon^{2+\delta}}\E[|Z|^{2+\delta}]. 
		\end{align*}
		The proof is now immediate.
	\end{proof}
	
	\section{Proofs of the lemmas used in the Appendix} \label{sec:supplement_appendix}
	
	\begin{proof}[Proof of \autoref{fourth_moment}]
		Define $\bB = \bA^\top \bA = (B_{ij})_{m \times m}$. Since $\bB$ is symmetric, we have
		\begin{align*}
			\E[\|\bA\bz\|^4] &= \E[(\bz^\top \bB \bz)^2] = \E[(\sum_{i=1}^m \sum_{j=1}^m z_i B_{ij}z_j)^2] = \sum_{i_1 = 1}^m\sum_{j_1 = 1}^m\sum_{i_2 = 1}^m\sum_{j_2 = 1}^m B_{i_1j_1} B_{i_2j_2} \E[z_{i_1} z_{j_1} z_{i_2} z_{j_2}] \\
			&\leq \sum_{i=1}^m (3+\Delta) B_{ii}^2 + \sum_{i \neq j} B_{ij}^2 + 2\sum_{i < j} B_{ii} B_{jj} + \sum_{i \neq j} B_{ij}B_{ji} \\
			& \leq (3 + \Delta) \sum_{i=1}^m \sum_{j=1}^m B_{ij}^2 + 2\sum_{i \neq j} B_{ij}^2 + 2\sum_{i < j} B_{ii} B_{jj}\\
			& \leq (5 + \Delta) \sum_{i=1}^m \sum_{j=1}^m B_{ij}^2 + \left(\sum_{i=1}^m B_{ii}\right)^2 \\
			&= (5 + \Delta) \tr(\bB^2) + [\tr(\bB)]^2.
		\end{align*}
		Here, $\Delta$ is as defined in \ref{cond3}.
		Further, $\E[\|\bA\bz\|^2] = \tr(\bA\E[\bz\bz^\top] \bA^\top) = \tr(\bA\bA^\top) = \tr(\bB)$ and
		\begin{align*}
			\Var(\|\bA\bz\|^2) \leq (5 + \Delta)\tr(\bB^2) = \tr([ \bA\bA^\top]^2 ). 
		\end{align*}
		This completes the proof.    
	\end{proof}

	\begin{proof}[Proof of \autoref{thm:1_lemma}]
		Fix $ t \in \mathbb{R}, \epsilon > 0$. Define $\bU_q = [\bu_{p,q+1}, \ldots, \bu_{p,p}]$, $\bZ_i^{(q)} = \bU_q^\top \bZ_i$ and $\bSigma_{p,q,i} = \bU_q^\top \bSigma_{p,i} \bU_q$ for $i=1, 2$. Also, let $\bSigma_{n,p,q} = \frac{n_1}{n} \bSigma_{p,q,2} + \frac{n_2}{n} \bSigma_{p,q,1}$ and $\bSigma_{p,q} = \tau \bSigma_{p,q,2} + (1-\tau) \bSigma_{p,q,1}$ for $i =1, 2$ and any $q<p$. Now, we have 
		\begin{align*}
			|\varphi_{n,p}(t) - \varphi_{n,p}^{(q)}(t)|^2 &\leq t^2\E\left[\left(\frac{\sum_{r=q + 1}^p (\xi_{n,p,r}^2 - \lambda_{n,p,r})}{\sqrt{\tr(\bSigma_{n,p}^2)}}\right)^2\right] = \frac{t^2}{\tr(\bSigma_{n,p}^2)}\E\left[\left(\left\|\sum_{i=1}^n \bZ_i^{(q)}\right\|^2 - \tr(\bSigma_{n,p,q})\right)^2\right]  \\
			& \leq \frac{t^2}{\tr(\bSigma_{n,p}^2)} \E\left[\left(\sum_{i=1}^n \left\|\bZ_i^{(q)}\right\|^2 - \tr(\bSigma_{n,p,q}) + 2\sum_{i < j}\bZ_i^{(q)\top} \bZ_j^{(q)}\right)^2 \right] \\
			& \leq \frac{2t^2}{\tr(\bSigma_{n,p}^2)} \left(\E\left[\left(\sum_{i=1}^n \left\|\bZ_i^{(q)}\right\|^2 - \tr(\bSigma_{n,p,q})\right)^2\right] + 4\E\left[\left(\sum_{i < j}\bZ_i^{(q)\top} \bZ_j^{(q)} \right)^2\right]\right) \\
			& =  \frac{2t^2}{\tr(\bSigma_{n,p}^2)} \left(\Var\left(\sum_{i=1}^n \left\|\bZ_i^{(q)}\right\|^2\right) + 4\E\left[\left(\sum_{i < j}\bZ_i^{(q)\top} \bZ_j^{(q)} \right)^2\right]\right) \\
			& = 2t^2 A_{n,p} + 8t^2 B_{n,p},
		\end{align*}
		where $A_{n,p} = \displaystyle \frac{\Var\left(\sum_{i=1}^n \left\|\bZ_i^{(q)}\right\|^2\right)}{\tr(\bSigma_{n,p}^2)}$ and $B_{n,p} = \displaystyle \frac{\E\left[\left(\sum_{i < j}\bZ_i^{(q)\top} \bZ_j^{(q)} \right)^2\right]}{\tr(\bSigma_{n,p}^2)}$. Using \ref{cond3} and \autoref{fourth_moment}, we have
		\begin{align}
			A_{n,p} &=  \frac{\sum_{i=1}^{n_1} \frac{n_2^2}{n^2n_1^2}\Var(\|\bU_q^\top\bGamma_{p,1} \bz_{p,i1}\|^2) + \sum_{i=n_1 +1}^{n} \frac{n_2^2}{n^2n_1^2}\Var(\|\bU_q^\top\bGamma_{p,2} \bz_{p,i2}\|^2)}{\tr(\bSigma_{n,p}^2)}\nonumber\\
			&\leq (5 + \Delta) \frac{\frac{n_2^2}{n^2n_1}\tr(\bSigma_{p,q,1}^2) + \frac{n_1^2}{n^2n_2}\tr(\bSigma_{p,q,2}^2)}{\frac{n_2^2}{n^2}\tr(\bSigma_{p,1}^2) + \frac{n_1^2}{n^2}\tr(\bSigma_{p,2}^2) + \frac{2n_1n_2}{n^2} \tr(\bSigma_{p,1}\bSigma_{p,2})}. \label{thm1:lem1:eq1}
		\end{align}
		Using the orthogonality of $\bu_{p,r}$, we have 
		\begin{align}
			\tr(\bSigma_{p,q,i} \bSigma_{p,q,j}) &= \tr(\bU_q^\top \bSigma_{p,i} \bU_q \bU_q^\top \bSigma_{p,j} \bU_q)= \tr(\bSigma_{p,i} \bU_q \bU_q^\top \bSigma_{p,j} \bU_q \bU_q^\top) \nonumber\\
			&\leq \sqrt{\tr(\bSigma_{p,i} \bU_q \bU_q^\top \bU_q \bU_q^\top \bSigma_{p,i}) \tr(\bSigma_{p,j} \bU_q \bU_q^\top \bU_q \bU_q^\top \bSigma_{p,j})} \nonumber\\
			&\leq \frac{1}{2} \tr(\bSigma_{p,i}^2 \bU_q \bU_q^\top) + \frac{1}{2} \tr(\bSigma_{p,j}^2 \bU_q \bU_q^\top) \nonumber\\
			&= \frac{1}{2}\left(\tr(\bSigma_{p,i}^2) - \sum_{r=1}^q \tr(\bSigma_{p,i}^2\bu_{p,r}\bu_{p,r}^\top)\right) + \frac{1}{2}\left(\tr(\bSigma_{p,j}^2) - \sum_{r=1}^q \tr(\bSigma_{p,j}^2\bu_{p,r}\bu_{p,r}^\top)\right) \nonumber\\
			&\leq \frac{1}{2} \tr(\bSigma_{p,i}^2) + \frac{1}{2} \tr(\bSigma_{p,j}^2) \label{thm1:lem1:eq2}
		\end{align}
		for $i,j \in \{1,2\}$ since $\tr(\bA) \geq 0$ and $\tr(\bA\bB) \geq 0$ for any two symmetric non-negative definite matrices $\bA$ and $\bB$. From (\ref{thm1:lem1:eq1}), we obtain
		\begin{align*}
			&A_{n,p} \leq (5 + \Delta) \frac{\frac{n_2^2}{n^2n_1}\tr(\bSigma_{p,1}^2) + \frac{n_1^2}{n^2n_2}\tr(\bSigma_{p,2}^2)}{\frac{n_2^2}{n^2}\tr(\bSigma_{p,1}^2) + \frac{n_1^2}{n^2}\tr(\bSigma_{p,2}^2) + \frac{2n_1n_2}{n^2} \tr(\bSigma_{p,1}\bSigma_{p,2})}\\
			& =(5+\Delta)\left[\frac{\frac{n_2^2}{n^2n_1}\tr(\bSigma_{p,1}^2)}{\frac{n_2^2}{n^2}\tr(\bSigma_{p,1}^2) + \frac{n_1^2}{n^2}\tr(\bSigma_{p,2}^2) + \frac{2n_1n_2}{n^2} \tr(\bSigma_{p,1}\bSigma_{p,2})}  \right.\\
			&\qquad\qquad\qquad\qquad\qquad\qquad+  \left. \frac{ \frac{n_1^2}{n^2n_2}\tr(\bSigma_{p,2}^2)}{\frac{n_2^2}{n^2}\tr(\bSigma_{p,1}^2) + \frac{n_1^2}{n^2}\tr(\bSigma_{p,2}^2) + \frac{2n_1n_2}{n^2} \tr(\bSigma_{p,1}\bSigma_{p,2})}\right] \\
			&\leq (5+\Delta)\left[\frac{\frac{n_2^2}{n^2n_1}\tr(\bSigma_{p,1}^2)}{\frac{n_2^2}{n^2}\tr(\bSigma_{p,1}^2)}   +   \frac{ \frac{n_1^2}{n^2n_2}\tr(\bSigma_{p,2}^2)}{ \frac{n_1^2}{n^2}\tr(\bSigma_{p,2}^2) }\right]\\
			&= (5 + \Delta) \left(\frac{1}{n_1} + \frac{1}{n_2}\right).
		\end{align*}
		Similarly, we have
		\begin{align*}
			B_{n,p} &= \frac{\sum_{i < j}\sum_{i' < j'}\E[\bZ_i^{(q)\top} \bZ_j^{(q)}\bZ_{i'}^{(q)\top} \bZ_{j'}^{(q)}]}{\tr(\bSigma_{n,p}^2)}\\
			&= \frac{\sum_{i < j}\E[\bZ_i^{(q)\top} \bZ_j^{(q)}\bZ_{i}^{(q)\top} \bZ_{j}^{(q)}]}{\tr(\bSigma_{n,p}^2)} = \frac{\sum_{i < j}\E[\tr(\bZ_i^{(q)} \bZ_i^{(q)\top}\bZ_{j}^{(q)} \bZ_{j}^{(q)\top)}]}{\tr(\bSigma_{n,p}^2)} \\
			& = \frac{\frac{n_1(n_1 - 1)}{2} \cdot \frac{n_2^2}{n^2n_1^2}\tr(\bSigma_{p,q,1}^2) + \frac{n_2(n_2 - 1)}{2} \cdot \frac{n_1^2}{n^2n_2^2}\tr(\bSigma_{p,q,2}^2) + n_1 n_2 \cdot \frac{1}{n^2}\tr(\bSigma_{p,q,1} \bSigma_{p,q,2})}{\frac{n_2^2}{n^2}\tr(\bSigma_{p,1}^2) + \frac{n_1^2}{n^2}\tr(\bSigma_{p,2}^2) + \frac{2n_1n_2}{n^2} \tr(\bSigma_{p,1}\bSigma_{p,2})} \\
			& \leq \frac{\frac{n_2^2}{n^2}\tr(\bSigma_{p,q,1}^2) +  \frac{n_1^2}{n^2}\tr(\bSigma_{p,q,2}^2) +  \frac{2n_1 n_2}{n^2}\tr(\bSigma_{p,q,1} \bSigma_{p,q,2})}{\frac{n_2^2}{n^2}\tr(\bSigma_{p,1}^2) + \frac{n_1^2}{n^2}\tr(\bSigma_{p,2}^2) + \frac{2n_1n_2}{n^2} \tr(\bSigma_{p,1}\bSigma_{p,2})} \\
			&= \tilde{B}_{n,p} + \frac{\left(\frac{n_2^2}{n^2} - (1-\tau)^2\right)\tr(\bSigma_{p,q,1}^2) +  \left(\frac{n_1^2}{n^2} - \tau^2\right)\tr(\bSigma_{p,q,2}^2) +  \left(\frac{2n_1 n_2}{n^2} - 2\tau(1-\tau)\right)\tr(\bSigma_{p,q,1} \bSigma_{p,q,2})}{\frac{n_2^2}{n^2}\tr(\bSigma_{p,1}^2) + \frac{n_1^2}{n^2}\tr(\bSigma_{p,2}^2) + \frac{2n_1n_2}{n^2} \tr(\bSigma_{p,1}\bSigma_{p,2})} \\
			& \leq \tilde{B}_{n,p} + \frac{\left|\frac{n_2^2}{n^2} - (1-\tau)^2\right|\tr(\bSigma_{p,1}^2) +  \left|\frac{n_1^2}{n^2} - \tau^2\right|\tr(\bSigma_{p,2}^2) +  \left|\frac{2n_1 n_2}{n^2} - 2\tau(1-\tau)\right|\left(\frac{1}{2}\tr(\bSigma_{p,1}^2) + \frac{1}{2}\tr(\bSigma_{p,2}^2)\right)}{\frac{n_2^2}{n^2}\tr(\bSigma_{p,1}^2) + \frac{n_1^2}{n^2}\tr(\bSigma_{p,2}^2) + \frac{2n_1n_2}{n^2} \tr(\bSigma_{p,1}\bSigma_{p,2})} \\
			& \leq \tilde{B}_{n,p} + \left|\frac{n_2^2}{n^2} - (1-\tau)^2\right| \frac{n^2}{n_2^2} + \left|\frac{n_1^2}{n^2} - \tau^2\right| \frac{n}{n_1^2} + \left|\frac{2n_1 n_2}{n^2} - 2\tau(1-\tau)\right| \frac{n^2}{2n_2^2} + \left|\frac{2n_1 n_2}{n^2} - 2\tau(1-\tau)\right| \frac{n^2}{2n_1^2},
		\end{align*}
		where 
		\begin{align*}
			\tilde{B}_{n,p} & = \frac{(1-\tau)^2\tr(\bSigma_{p,q,1}^2) +  \tau^2\tr(\bSigma_{p,q,2}^2) + 2\tau(1-\tau)\tr(\bSigma_{p,q,1} \bSigma_{p,q,2})}{\frac{n_2^2}{n^2}\tr(\bSigma_{p,1}^2) + \frac{n_1^2}{n^2}\tr(\bSigma_{p,2}^2) + \frac{2n_1n_2}{n^2} \tr(\bSigma_{p,1}\bSigma_{p,2})} \\
			&= \frac{\tr(\bSigma_{p,q}^2)}{\tr(\bSigma_{p}^2)} \times \frac{(1-\tau)^2\tr(\bSigma_{p,1}^2) + \tau^2\tr(\bSigma_{p,2}^2) + 2\tau(1-\tau) \tr(\bSigma_{p,1}\bSigma_{p,2})}{\frac{n_2^2}{n^2}\tr(\bSigma_{p,1}^2) + \frac{n_1^2}{n^2}\tr(\bSigma_{p,2}^2) + \frac{2n_1n_2}{n^2} \tr(\bSigma_{p,1}\bSigma_{p,2})} \\
			& \leq \frac{\tr(\bSigma_{p,q}^2)}{\tr(\bSigma_{p}^2)} \left(\tau^2 \frac{n^2}{n_1^2} + (1-\tau)^2 \frac{n^2}{n_2^2} + \tau(1-\tau) \frac{n^2}{n_1n_2}\right) \\ 
			&= \left(\tau^2 \frac{n^2}{n_1^2} + (1-\tau)^2 \frac{n^2}{n_2^2} + \tau(1-\tau) \frac{n^2}{n_1n_2}\right) \frac{\tr(\bU_q^\top \bSigma_p \bU_q \bU_q^\top \bSigma_p \bU_q)}{\tr(\bSigma_{p}^2)} \\
			& \leq \left(\tau^2 \frac{n^2}{n_1^2} + (1-\tau)^2 \frac{n^2}{n_2^2} + \tau(1-\tau) \frac{n^2}{n_1n_2}\right) \frac{\tr(\bSigma_p^2 \bU_q \bU_q^\top)}{\tr(\bSigma_{p}^2)} \qquad[\text{using (\ref{thm1:lem1:eq2})}]\\
			& = \left(\tau^2 \frac{n^2}{n_1^2} + (1-\tau)^2 \frac{n^2}{n_2^2} + \tau(1-\tau) \frac{n^2}{n_1n_2}\right) \frac{\sum_{r=q+1}^p \bu_{p,r}^\top \bSigma_{p}^2 \bu_{p,r}}{\tr(\bSigma_{p}^2)} \\
			&= \left(\tau^2 \frac{n^2}{n_1^2} + (1-\tau)^2 \frac{n^2}{n_2^2} + \tau(1-\tau) \frac{n^2}{n_1n_2}\right) \frac{\sum_{r=q+1}^p \lambda_{p,r}^2}{\sum_{r=1}^p \lambda_{p,r}^2} \\
			&\leq \left(\tau^2 \frac{n^2}{n_1^2} + (1-\tau)^2 \frac{n^2}{n_2^2} + \tau(1-\tau) \frac{n^2}{n_1n_2}\right) \sup_p\sum_{r=q+1}^p \rho_{p,r}^2.
		\end{align*}
		Using \ref{cond5} and the above upper bounds, we can obtain $Q$ and $N$ large enough so that for any $q \geq Q$ and $n \geq N$, $(2t^2  A_{n, p} + 8t^2  B_{n, p}) < \epsilon^2$ and hence
		$$ |\varphi_{n,p}(t) - \varphi_{n,p}^{(q)}(t)| \leq \sqrt{2t^2  A_{n, p} + 8t^2  B_{n, p}} < \epsilon.$$
		This completes the proof.
	\end{proof}

	\begin{proof}[Proof of \autoref{estimation_auxiliary}]
		We have
		$$\frac{\tr\left(\E\left[(\hat\bSigma_p^{(1)} - \bSigma_p)^2\right]\right)}{\tr(\bSigma_p^2)}  = \frac{\E[\tr((\hat\bSigma_p^{{(1)} ~ 2})]}{\tr(\bSigma_p^2)} - 2 \frac{\E[\tr(\hat\bSigma_p^{(1)} \bSigma_p)]}{\tr(\bSigma_p^2)} + 1.$$
		Since
		\begin{align*}
			\hat\bSigma_{p}^{(1)} &= \frac{1}{n n_1 n_2} \sum_{i=1}^{n_1}\left( \sum_{j=1}^{n_2} (\bh_p(\bX_{p,i}, \bY_{p,j}) - \hat{\bdelta}_p) \right) \left( \sum_{j=1}^{n_2} (\bh_p(\bX_{p,i}, \bY_{p,j}) - \hat{\bdelta}_p) \right)^\top \\
			&\;\;\;\;\;+ \frac{1}{n n_1 n_2} \sum_{j=1}^{n_2}\left( \sum_{i=1}^{n_1} (\bh_p(\bX_{p,i}, \bY_{p,j}) - \hat{\bdelta}_p) \right) \left( \sum_{i=1}^{n_1} (\bh_p(\bX_{p,i}, \bY_{p,j}) - \hat{\bdelta}_p) \right)^\top 
		\end{align*}
		without loss of generality, we can assume that $\bdelta_p = \mathbf{0}_p$. First, we consider the estimator $\hat\bSigma_p^{(1)}$ with assumption \ref{cond6}. Now,
		\begin{align}
			\hat\bSigma_p^{(1)} =& \frac{1}{nn_1n_2} \sum_{k=1}^{n_1} \sum_{i=1}^{n_2} \sum_{j=1}^{n_2} \bh_p(\bX_{p,k}, \bY_{p,i}) \bh_p(\bX_{p,k}, \bY_{p,j})^\top + \frac{1}{nn_1n_2} \sum_{k=1}^{n_2} \sum_{i=1}^{n_1} \sum_{j=1}^{n_1} \bh_p(\bX_{p,i}, \bY_{p,k}) \bh_p(\bX_{p,j}, \bY_{p,k})^\top \nonumber\\
			&- \frac{1}{n_1^2 n_2^2} \sum_{i_1 = 1}^{n_1}\sum_{j_1 = 1}^{n_2} \sum_{i_2 = 1}^{n_1}\sum_{j_2 = 1}^{n_2} \bh_p(\bX_{p,i_1}, \bY_{p,j_1}) \bh_p(\bX_{p,i_2}, \bY_{p,j_2})^\top. \label{lemma_gaussian_estimation_eq2}
		\end{align}
		Then,
		\begin{align*}
			\E[\hat\bSigma_p^{(1)}] &= \left(\frac{n_2 -1}{n} - \frac{n_2 -1}{n_1 n_2} \right) \bSigma_{p,1} + \left(\frac{n_1 -1}{n} - \frac{n_1 -1}{n_1 n_2} \right) \bSigma_{p,2} - \frac{1}{n_1^2 n_2^2} \sum_{i=1}^{n_1} \sum_{j=1}^{n_2} \E[\bh_p(\bX_{p,i}, \bY_{p,j}) \bh_p(\bX_{p,i}, \bY_{p,j})^\top] \\
			&= \left(\frac{n_2 -1}{n} - \frac{n_2 -2}{n_1 n_2} \right) \bSigma_{p,1} + \left(\frac{n_1 -1}{n} - \frac{n_1 -2}{n_1 n_2} \right) \bSigma_{p,2} - \frac{1}{n_1^2 n_2^2} \sum_{i=1}^{n_1} \sum_{j=1}^{n_2} \E[\bar\bh_p(\bX_{p,i}, \bY_{p,j}) \bar\bh_p(\bX_{p,i}, \bY_{p,j})^\top].
		\end{align*}
		Therefore,
		\begin{align*}
			\tr(\E[\hat\bSigma_p^{(1)} \bSigma_p]) &=  \left(\frac{n_2 -1}{n} - \frac{n_2 -2}{n_1 n_2} \right) \tr(\bSigma_{p,1} \bSigma_p) + \left(\frac{n_1 -1}{n} - \frac{n_1 -2}{n_1 n_2} \right) \tr(\bSigma_{p,2} \bSigma_p)\\
			&- \frac{1}{n_1^2 n_2^2} \sum_{i=1}^{n_1} \sum_{j=1}^{n_2} \E[\tr(\bar\bh_p(\bX_{p,i}, \bY_{p,j}) \bar\bh_p(\bX_{p,i}, \bY_{p,j})^\top \bSigma_p)].
		\end{align*}
		Using the CS inequality, we get 
		\begin{align*}
			\tr(\bar\bh_p(\bX_{p,i}, \bY_{p,j}) \bar\bh_p(\bX_{p,i}, \bY_{p,j})^\top\bSigma_p) &= \bar\bh_p(\bX_{p,i}, \bY_{p,j})^\top\bSigma_p\bar\bh_p(\bX_{p,i}, \bY_{p,j}) \\
			&\leq \|\bh_p(\bX_{p,i}, \bY_{p,j})\| \|\bSigma_p\bar\bh_p(\bX_{p,i}, \bY_{p,j})\|\\
			&\leq \|\bh_p(\bX_{p,i}, \bY_{p,j})\|^2 \sqrt{\tr(\bSigma_p^2)}.
		\end{align*}
		After dividing by $\tr(\bSigma_p^2)$, the upper bound becomes bounded uniformly-over-$p$ via condition \ref{cond2}. Now, using arguments similar to those used to bound $A_{n,p}$ and $B_{n,p}$ in \autoref{thm:1_lemma}, we see~that
		\begin{align*}
			&\left|\frac{\E[\tr(\hat\bSigma_p^{(1)} \bSigma_p)]}{\tr(\bSigma_p^2)} - 1\right| \to 0
		\end{align*}
		uniformly-over-$p$ as $n \to \infty$. Now, we shall show that $\tr(\E[\hat\bSigma_p^{(1)~2}]) / \tr(\bSigma_p^2) \to 1$ uniformly-over-$p$ as $n \to \infty$. Consider the first term given in the right hand side of (\ref{lemma_gaussian_estimation_eq2}). We look at the expected trace of the square of this term. It is of the form 
		\begin{equation}
			\frac{1}{n^2 n_1^2 n_2^2} \sum_{k_1=1}^{n_1} \sum_{i_1=1}^{n_2} \sum_{j_1=1}^{n_2} \sum_{k_2=1}^{n_1} \sum_{i_2=1}^{n_2} \sum_{j_2=1}^{n_2} \E[\tr(\Phi(i_1, j_1, k_1) \Phi(i_2, j_2, k_2))], \label{lemma_gaussian_estimation_eq3}
		\end{equation}
		where $\Phi(i,j,k) = \bh_p(\bX_{p,k}, \bY_{p,i}) \bh_p(\bX_{p,k}, \bY_{p,j})^\top$.
		Using the CS inequality, we can bound the term inside the expectation by 
		$$ \|\bh_p(\bX_{p,k_2}, \bY_{p,j_2})\| \|\bh_p(\bX_{p,k_1}, \bY_{p,i_1}) \|  \|\bh_p(\bX_{p,k_1}, \bY_{p,j_1})\|   \|\bh_p(\bX_{p,k_2}, \bY_{p,i_2})\|.$$
		This can be further expanded in terms of $\bar\bh(\cdot, \cdot)$ and $\bGamma_{p,j} \bz_{p,ij}$ and after taking expectation, each term in the upper bound of can be written as $\E[\prod_{i=1}^4\|\bv_i\|] \leq \prod_{i=1}^4[\E[\|\bv_i\|^4]]^{1/4}$,
		where $\bv_i$ is either $\bar\bh_p(\bX_{p,k}, \bY_{p,l})$ or $\bGamma_{p,j}\bz_{p,kj}$ for some $k,l$ and $j \in \{1,2\}$. By our assumptions and \autoref{fourth_moment}, $\E[\|\bh_p(\bX_{p,k}, \bY_{p,l})\|^4] / \tr(\bSigma_p^2)$  and $\E[\|\bGamma_{p,j}\bz_{p,ij}\|^4] / \tr(\bSigma_p^2)$ are bounded uniformly-over-$p$. Therefore, we conclude that the terms $\tr[\Phi(i_1,j_1,k_1) \Phi(i_2,j_2,k_2)] /\tr(\bSigma_p^2)$ are bounded uniformly-over-$p$.
		In the expansion of (\ref{lemma_gaussian_estimation_eq3}), the number of terms for which $i_1,j_1,k_1,i_2,j_2,k_2$ are not all distinct is at most $O(n^5)$. Thus, only the terms for which $i_1,j_1,k_1,i_2,j_2,k_2$ are all distinct will survive for large $n$. For any such term, we have
		$$\E[\tr[\Phi(i_1,j_1,k_1) \Phi(i_2,j_2,k_2)]] = \tr(\bSigma_{p,1}^2)$$ 
		and the number of such terms is $n_1^2n_2^4$. Hence, after dividing the expression in (\ref{lemma_gaussian_estimation_eq3}) by $\tr(\bSigma_p^2)$, it converges to $(1-\tau)^2 \tr(\bSigma_{p,1}^2) / \tr(\bSigma_p^2)$ uniformly-over-$p$ as $n \to \infty$. A similar argument can be used to show that the square of the second term in (\ref{lemma_gaussian_estimation_eq2}) and the cross terms between the first and second terms contribute to $(1-\tau)^2 \tr(\bSigma_{p,2}^2) / \tr(\bSigma_p^2)$ and  $2\tau(1-\tau) \tr(\bSigma_{p,1} \bSigma_{p,2}) / \tr(\bSigma_p^2)$, respectively. All other terms contribute to $0$ asymptotically. This proves our assertion.
		
		Now, consider the estimator $\hat\bSigma_{p}^{(2)}$ with assumption \ref{cond7}. Let $\hat\bSigma_{p}^{(1)} = ((\hat\sigma_{ij}))_{p \times p}$ and $\bSigma_{p} = ((\sigma_{ij}))_{p \times p}$. We need to show that $\E[\tr[(\hat\bSigma_{p}^{(2)} - \bSigma_p)^2]] / \tr(\bSigma_p^2)$ converge to $0$ uniformly-over-$p$ as $n \to \infty$. Now,
		\begin{align*}
			\frac{\E\left[\tr[(\hat\bSigma_{p}^{(2)} - \bSigma_p)^2]\right]}{\tr(\bSigma_p^2)} &= \frac{\E\left[\sum_{i=1}^p \sum_{j=1}^p [w_{ij}(\hat{\sigma}_{ij} - \sigma_{ij}) + \sigma_{ij}(w_{ij} -1)]^2\right]}{\tr(\bSigma_p^2)}\\
			&\leq \frac{\E\left[\sum_{i=1}^p \sum_{j=1}^p 2w_{ij}^2(\hat{\sigma}_{ij}  - \sigma_{ij})^2\right] + \sum_{i=1}^p \sum_{j=1}^p 2\sigma_{ij}^2(w_{ij} -1)^2}{\tr(\bSigma_p^2)} \\
			&\leq 2 \frac{\E\left[\sum_{|i-j| < k} (\hat{\sigma}_{ij}  - \sigma_{ij})^2\right]}{\tr(\bSigma_p^2)} +  2\frac{\sum_{|i-j| > k / 2} \sigma_{ij}^2 }{\tr(\bSigma_p^2)} \\
			&= 2 \frac{\E[\tr[(\hat\bSigma_{p}^\ast - \bSigma_p^\ast)^2]]}{\tr(\bSigma_p^2)} +  2\frac{\sum_{|i-j| > k / 2} \sigma_{ij}^2 }{\tr(\bSigma_p^2)},
		\end{align*}
		where the $(i,j)$-th entry of $\bSigma_p^\ast$ and $\hat\bSigma_{p}^\ast$ are same as that of $\bSigma_{p}$ and $\hat\bSigma_{p}^{(1)}$ if $|i-j| < k$ and both equal to $0$ if $|i-j| \geq k$. To show that the first term goes to $0$ uniformly-over-$p$, we use a similar argument used for $\E[\tr[(\hat\bSigma_{p}^{(1)} - \bSigma_p)^2]] / \tr(\bSigma_p^2)$ in the previous part of the proof. This line of argument immediately gives $\E[\tr(\hat\bSigma_p^\ast\bSigma_p^\ast)] / \tr(\bSigma_p^2) \to 1$ uniformly-over-$p$. Next, we follow the same argument to show $\tr(\E[(\hat\bSigma_p^\ast)^2]) / \tr(\bSigma_p^2) \to 1$ uniformly-over-$p$. Here, we need to bound the terms of the~form 
		$$\E[\tr(\tilde\Phi(i_1, j_1, k_1) \tilde\Phi(i_2, j_2, k_2))] = \sum_{1 \leq l_1, l_2 \leq p} [\tilde\Phi(i_1, j_1, k_1)]_{(l_1, l_2)} [\tilde\Phi(i_2, j_2, k_2))]_{(l_2, l_1)}$$
		uniformly-over-$p$, where $[\tilde\Phi(i, j, k)]_{(l_1, l_2)}$ is the $(l_1,l_2)$-th entry of $\tilde\Phi(i, j, k)$ and is the same as $[\Phi(i, j, k)]_{(l_1, l_2)}$ if $|l_1 - l_2| < k$ and equal to $0$ otherwise. The number of non-zero terms inside the sum is bounded by $2kp$ and $[\Phi(i, j, k)]_{(l_1, l_2)} = [\bh_p(\bX_{p,k}, \bY_{p,i})]_{l_1} [\bh_p(\bX_{p,k}, \bY_{p,j})]_{l_2}$. So, using the CS inequality, we get the following bound $$\E[\tr(\tilde\Phi(i_1, j_1, k_1) \tilde\Phi(i_2, j_2, k_2))] \leq C_1 \lambda_1^2 pk$$
		for some constant $C_1$ (which does not depend on $p,n,k$).
		Since $p\lambda_1^2 / \tr(\bSigma_p^2)$ is bounded uniformly-over-$p$, we have
		$$ \frac{\E[\tr[(\hat\bSigma_{p}^\ast - \bSigma_p^\ast)^2]]}{\tr(\bSigma_p^2)} \leq C_2 \frac{k}{n} + D_{n,p}$$
		with $C_2$ free of $p, n, k$ and $D_{n,p} \to 0$ uniformly-over-$p$ as $n \to \infty$. It can be shown that $\sum_{|i-j| > k / 2} \sigma_{ij}^2 \leq C_3\lambda_1^2 k^{-2\beta -1}$ for some $C_3$ (independent of $n,p,k$). Thus, we get
		$$\frac{\E\left[\tr[(\hat\bSigma_{p}^{(2)} - \bSigma_p)^2]\right]}{\tr(\bSigma_p^2)} \leq C_4 \left(\frac{k}{n} + k^{-2\beta -1}\right) + D_{n,p}$$
		for some $C_4$ (which does not depend on $n,p,k$). For $k = \min(n^{1/(2\beta + 2)}, p)$, the upper bound converges to $0$ uniformly-over-$p$. This completes the proof.
	\end{proof}
	
	\newpage
	\section{Additional numerical work}  \label{sec:supplement_numerical}
	
	In this section, we report additional numerical results for $p=25$ and $p=50$.
	Table~\ref{tab:size_p=25} reports estimated sizes for $p=25$, while Table~\ref{tab:size_p=50} presents the same for $p=50$.
	
	\begin{table}[h!]
		\centering
		\caption{Estimated sizes at nominal level 5\% for the different tests for multivariate data based on $10000$ independent replications for $p=25$ with $n_1 = 40$ and $n_2 = 50$.}
		\label{tab:size_p=25}
		
		\vspace{0.1in}
		\begin{tabular}{cccccccccc}
			\hline
			Model & 1.i. & 2.i. & 3.i. & 1.ii. & 2.ii. & 3.ii. & 1.iii. & 2.iii. & 3.iii. \\ \hline
			KCDG2025$^1$    & 0.0492 & 0.0404 & 0.0072 & 0.0308 & 0.0136 & 0.0004 & 0.0484 & 0.0370 & 0.0020 \\
			KCDG2025$^2$  & 0.1014 & 0.0938 & 0.0316 & 0.0436 & 0.0266 & 0.0014 & 0.0560 & 0.0482 & 0.0080 \\
			sKCDG2025$^1$  & 0.0493 & 0.0502 & 0.0476 & 0.0348 & 0.0328 & 0.0296 & 0.0502 & 0.0473 & 0.0474 \\
			sKCDG2025$^2$  & 0.0911 & 0.0924 & 0.0909 & 0.0438 & 0.0447 & 0.0426 & 0.0500 & 0.0546 & 0.0511 \\
			ZGZC2020    & 0.0550 & 0.0472 & 0.0102 & 0.0542 & 0.0238 & 0.0006 & 0.0590 & 0.0454 & 0.0034 \\
			BS1996      & 0.0696 & 0.0620 & 0.0146 & 0.0656 & 0.0344 & 0.0010 & 0.0758 & 0.0590 & 0.0064 \\
			CLX2014     & 0.0380 & 0.0308 & 0.0028 & 0.0526 & 0.0384 & 0.0044 & 0.0438 & 0.0298 & 0.0034 \\
			CQ2010      & 0.0698 & 0.0620 & 0.0136 & 0.0664 & 0.0346 & 0.0008 & 0.0760 & 0.0590 & 0.0056 \\
			CLZ2014     & 0.1288 & 0.1286 & 0.0670 & 0.0588 & 0.0440 & 0.0078 & 0.1270 & 0.1162 & 0.0408 \\
			SD2008      & 0.0338 & 0.0320 & 0.0052 & 0.0562 & 0.0262 & 0.0004 & 0.0472 & 0.0346 & 0.0020 \\
			\hline
		\end{tabular}
	\end{table}
	
	\begin{table}[h!]
		\centering
		\caption{Estimated sizes at nominal level 5\% for the different tests for multivariate data based on $10000$ independent replications for $p=50$ with $n_1 = 40$ and $n_2 = 50$.}
		\label{tab:size_p=50}
		
		\vspace{0.1in}
		\begin{tabular}{cccccccccc}
			\hline
			Model & 1.i. & 2.i. & 3.i. & 1.ii. & 2.ii. & 3.ii. & 1.iii. & 2.iii. & 3.iii. \\ \hline
			KCDG2025$^1$    & 0.0527 & 0.0403 & 0.0063 & 0.0147 & 0.0067 & 0.0000 & 0.0413 & 0.0257 & 0.0007 \\
			KCDG2025$^2$  & 0.1490 & 0.1303 & 0.0487 & 0.0373 & 0.0357 & 0.0010 & 0.0500 & 0.0430 & 0.0077 \\
			sKCDG2025$^1$  & 0.0512 & 0.0500 & 0.0484 & 0.0207 & 0.0233 & 0.0153 & 0.0430 & 0.0368 & 0.0397 \\
			sKCDG2025$^2$  & 0.1323 & 0.1251 & 0.1254 & 0.0432 & 0.0450 & 0.0406 & 0.0520 & 0.0530 & 0.0534 \\
			ZGZC2020    & 0.0620 & 0.0507 & 0.0090 & 0.0500 & 0.0190 & 0.0000 & 0.0503 & 0.0337 & 0.0010 \\
			BS1996      & 0.0790 & 0.0637 & 0.0130 & 0.0590 & 0.0243 & 0.0003 & 0.0620 & 0.0463 & 0.0030 \\
			CLX2014     & 0.0420 & 0.0257 & 0.0030 & 0.0640 & 0.0490 & 0.0017 & 0.0480 & 0.0293 & 0.0030 \\
			CQ2010      & 0.0793 & 0.0637 & 0.0123 & 0.0600 & 0.0250 & 0.0003 & 0.0623 & 0.0453 & 0.0020 \\
			CLZ2014     & 0.1680 & 0.1497 & 0.0957 & 0.0607 & 0.0553 & 0.0067 & 0.1497 & 0.1270 & 0.0420 \\
			SD2008      & 0.0337 & 0.0267 & 0.0033 & 0.0517 & 0.0203 & 0.0000 & 0.0450 & 0.0260 & 0.0003 \\
			
			\hline
		\end{tabular}
	\end{table}
	
	We observe from Tables \ref{tab:size_p=25} and \ref{tab:size_p=50} that
	most tests have stable estimated sizes (corresponds to the case when $\delta = 0$). While the power function generally increases with $\delta$ for all methods, 
	differences among some methods are visible too.
	Classical and newer KCDG tests (especially, the sKCDG2025 test) show slightly higher power as $\delta$ increases. The power functions of tests like BS1996 and CQ2010 rise at a slower rate in models 2.ii and 3.ii.
	Overall, sKCDG2025 appears to be strong consistently across all the models considered in the paper.
	
	The overall power performance diverges more clearly among methods.
	KCDG2025 and its version based on the spatial sign (sKCDG2025) retain high power and robustness across all models. The power functions of classical tests like BS1996, CQ2010 and SD2008 tend to grow more slowly( with $\delta$) in complex model structures like Model 3.
	CLX2014 and CLZ2014 perform moderately, but still below KCDG variants for large $\delta$.
	The performance gap between these methods in fact widens with higher values of $p$.
	
	When the dimension is moderate ($p = 25$), the overall power performance of the competing tests is fairly similar across most models. As the signal strength $\delta$ increases, the power of all methods rises smoothly from near zero to close to one. The differences between methods are noticeable, but not very large. The proposed tests KCDG2025 and sKCDG2025 tend to achieve slightly higher power than the older classical tests 
	(especially, under Models 2 and 3). Nevertheless, even the traditional methods maintain reasonable power in this moderate-dimensional scenario (see Figure \ref{fig:fig_p=25}).
	
	In contrast, when the dimension increases to $p = 50$, the differences between the tests become much more clear. The KCDG2025 tests (including both its original and kernel variants) continues to show strong and stable performance, maintaining high power across all models as $\delta$ grows. Meanwhile, the classical tests, particularly, BS1996, CQ2010 and SD2008 exhibit a substantial loss of power, indicating their reduced effectiveness in higher dimensions. Methods such as CLX2014 and CLZ2014 perform moderately, but still lag behind the KCDG based tests for large values of $\delta$~(see Figure~\ref{fig:fig_p=50}).
	
	To summarize, this comparison shows that while most tests perform comparably in moderate dimensions, the advantage of the sKCDG2025 method becomes clearly evident in higher dimensions. The proposed tests display better scalability and robustness, retaining high power across different model configurations, whereas several traditional multivariate tests degrade significantly as the data dimension increases.
	
	\begin{figure}[!h]
		\centering
		\includegraphics[width=0.95\textwidth]{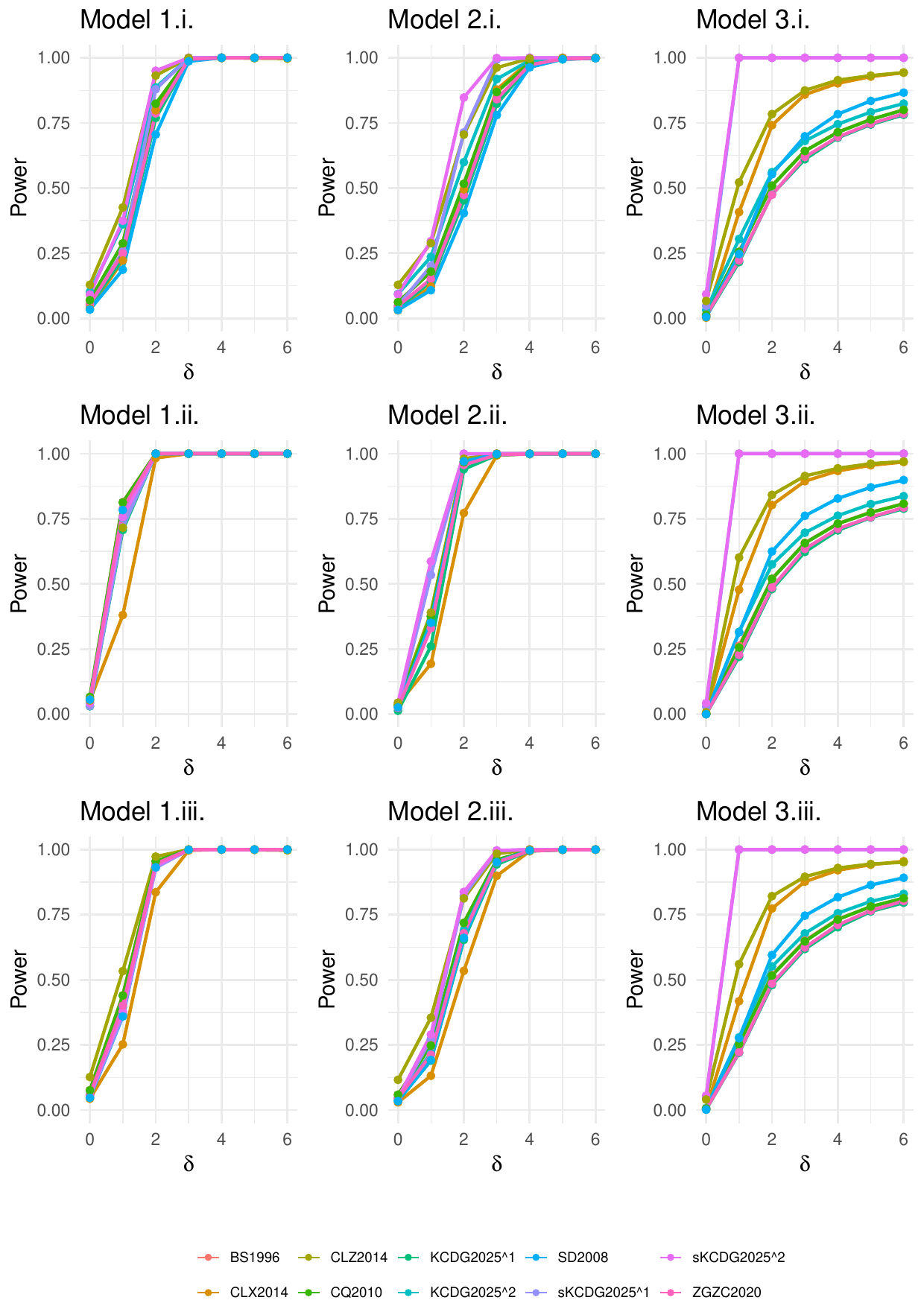}
		\caption{Estimated powers at nominal level 5\% based on $10000$ independent replications at $p=25$ for $n_1 = 40$ and $n_2 = 50$.}
		\label{fig:fig_p=25}
	\end{figure}
	
	\begin{figure}[!h]
		\centering
		\includegraphics[width=0.95\textwidth]{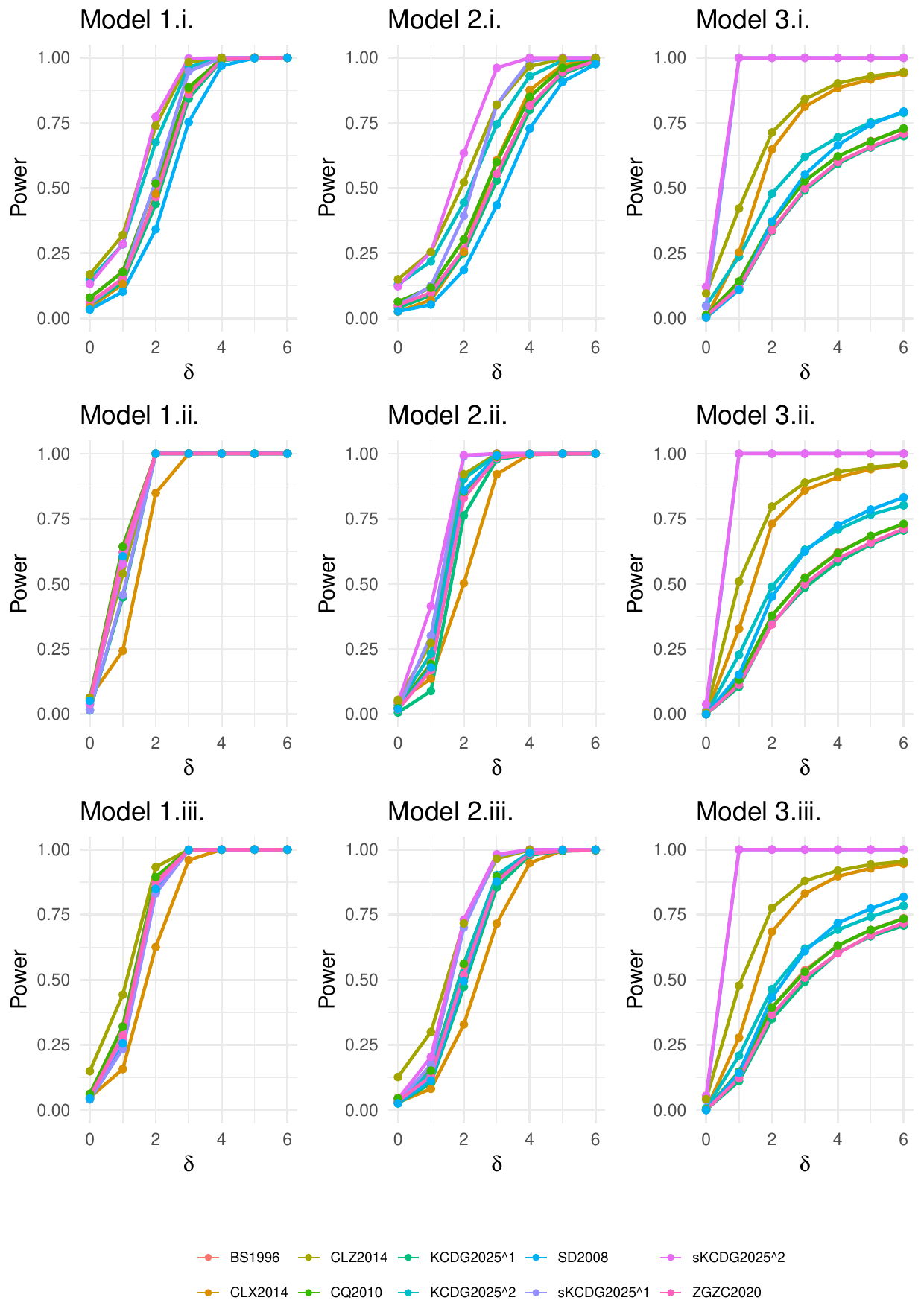}
		\caption{Estimated powers at nominal level 5\% based on $10000$ independent replications at $p=50$ for $n_1 = 40$ and $n_2 = 50$.}
		\label{fig:fig_p=50}
	\end{figure}

\clearpage
\small
\bibliographystyle{apalike}
\bibliography{bibliography}

\end{document}